\documentclass[draftcls, onecolumn]{IEEEtran}

\addtolength{\topmargin}{9mm}

\usepackage[utf8]{inputenc} 
\usepackage[T1]{fontenc}
\usepackage{url}
\usepackage{ifthen}
\usepackage{cite}
\usepackage[cmex10]{amsmath} 
\interdisplaylinepenalty=2500 

\usepackage{color}

\usepackage{amssymb,amsthm}
\usepackage{bm, bbm}
\usepackage{mathtools}
\usepackage{commath}
\usepackage{stmaryrd}

\newtheorem{theorem}{Theorem}
\newtheorem{lemma}{Lemma}

\newtheorem{remark}{Remark}

\usepackage{amsmath,amssymb}
\usepackage{xifthen}
\usepackage{mathtools}
\usepackage{enumerate}
\usepackage{microtype}
\usepackage{xspace}
\usepackage{bm}
\usepackage[T1]{fontenc}
\usepackage{fancyhdr}
\usepackage{lastpage}
\usepackage{bbm}

\newcommand{\mbs}[1]{\pmb{#1}}
\newcommand{\vect}[1]{{\lowercase{\mbs{#1}}}}
\newcommand{\mat}[1]{{\uppercase{\mbs{#1}}}}

\newcommand{\T}{{\scriptscriptstyle\mathsf{T}}}

\newtheorem{corollary}{Corollary}
\newtheorem{proposition}{Proposition}

\newcommand{\cond}{\,\vert\,}
\newcommand{\Bigcond}{\,\Big\vert\,}
\renewcommand{\Re}[1][]{\ifthenelse{\isempty{#1}}{\operatorname{Re}}{\operatorname{Re}\left(#1\right)}}
\renewcommand{\Im}[1][]{\ifthenelse{\isempty{#1}}{\operatorname{Im}}{\operatorname{Im}\left(#1\right)}}

\newcommand{\SNR}{\mathsf{snr}}
\newcommand{\snr}{\mathsf{snr}}

\newcommand{\gv}{\vect{g}}
\newcommand{\hv}{\vect{h}}

\newcommand{\qv}{\vect{q}}
\newcommand{\rv}{\vect{r}}

\newcommand{\uv}{\vect{u}}
\newcommand{\vv}{\vect{v}}
\newcommand{\wv}{\vect{w}}
\newcommand{\xv}{\vect{x}}
\newcommand{\yv}{\vect{y}}
\newcommand{\zv}{\vect{z}}

\newcommand{\thetav}{\vect{\theta}}

\newcommand{\Am}{\mat{a}}
\newcommand{\Bm}{\mat{b}}

\newcommand{\Dm}{\mat{d}}

\newcommand{\Fm}{\mat{f}}
\newcommand{\Gm}{\mat{g}}
\newcommand{\Hm}{\mat{h}}
\newcommand{\Jm}{\mat{j}}

\newcommand{\Rm}{\mat{r}}

\newcommand{\Vm}{\mat{V}}

\newcommand{\Xm}{\mat{x}}
\newcommand{\Ym}{\mat{y}}


\newcommand{\Ac}{{\mathcal A}}

\newcommand{\Ic}{{\mathcal I}}

\newcommand{\Kc}{{\mathcal K}}

\newcommand{\Mc}{{\mathcal M}}
\newcommand{\Nc}{{\mathcal N}}

\newcommand{\Xc}{{\mathcal X}}
\newcommand{\Yc}{{\mathcal Y}}

\newcommand{\EE}{\mathbb{E}}

\newcommand{\Id}{\mat{\mathrm{I}}}

\newcommand{\CN}[1][]{\ifthenelse{\isempty{#1}}{\mathcal{N}_{\mathbb{C}}}{\mathcal{N}_{\mathbb{C}}\left(#1\right)}}

\renewcommand{\P}[1][]{\ifthenelse{\isempty{#1}}{\mathbb{P}}{\mathbb{P}\left(#1\right)}}
\newcommand{\E}[1][]{\ifthenelse{\isempty{#1}}{\mathbb{E}}{\mathbb{E}\left[#1\right]}}
\newcommand{\Var}[1][]{\ifthenelse{\isempty{#1}}{\mathsf{Var}}{\mathsf{Var}\left[#1\right]}}
\newcommand{\I}[1][]{\ifthenelse{\isempty{#1}}{\mathbb{I}}{\mathbb{I}\left\{#1\right\}}}
\renewcommand{\det}[1][]{\ifthenelse{\isempty{#1}}{\mathrm{det}}{\mathrm{det}\left(#1\right)}}
\newcommand{\trace}[1][]{\ifthenelse{\isempty{#1}}{\mathrm{tr}}{\mathrm{tr}\left(#1\right)}}
\newcommand{\rank}[1][]{\ifthenelse{\isempty{#1}}{\mathrm{rank}}{\mathrm{rank}\left(#1\right)}}
\newcommand{\diag}[1][]{\ifthenelse{\isempty{#1}}{\mathrm{diag}}{\mathrm{diag}\left(#1\right)}}
\newcommand{\Cov}[1][]{\ifthenelse{\isempty{#1}}{\mathsf{Cov}}{\mathsf{Cov}\left(#1\right)}}

\hyphenation{op-tical net-works semi-conduc-tor}

\usepackage{amsmath} 
\usepackage{amssymb} 
\usepackage{textcomp} 
\newcommand{\nr}{n_\text{r}}
\newcommand{\nt}{n_\text{t}}
\renewcommand{\SNR}{\mathsf{snr}}

\newcommand{\Chol}[1]{\mathrm{Chol}\left( #1 \right)}
\newcommand{\Vol}{\mathrm{Vol}}
\newcommand{\Rrho}{\pmb{\Sigma}}
\newcommand{\sign}{\mathrm{sign}}
\newcommand{\ind}{\mathbbm{1}}
\newcommand{\gammasnr}{\gamma}
\renewcommand{\d}{\mathrm{d}}
\newcommand{\corrv}{\qv}
\newcommand{\corr}{q}
\newcommand{\ParaCorr}{\mathcal{Q}}
\newcommand{\Ball}{\mathcal{B}}
\newcommand{\Sphere}{\mathcal{S}}
\newcommand{\rhov}{\pmb{\rho}}

\newcommand{\UT}{\tilde{\mathcal{X}}}

\usepackage{etoolbox}

\newcounter{appendixsection}

\preto\appendices{\setcounter{section}{0}}
\AtBeginEnvironment{appendices}{%
}

\title{Asymptotic Capacity of 1-Bit MIMO\\ Fading Channels}

\author{%
\IEEEauthorblockN{Sheng~Yang and Richard Combes}\\
\IEEEauthorblockA{Laboratoire des Signaux et Systèmes~(L2S)\\
Université Paris-Saclay, CNRS, CentraleSupélec\\
Gif-sur-Yvette, France\\
Email: \{sheng.yang, richard.combes\}@centralesupelec.fr}
\thanks{This work was presented in part at IEEE ISIT 2024 in Athens,
Greece.}
}

\begin{document}

\maketitle

\begin{abstract}
In this work, we investigate the capacity of multi-antenna fading
channels with 1-bit quantized output per receive antenna. Specifically,
leveraging Bayesian statistical tools, we analyze the asymptotic regime
with a large number of receive antennas. In the coherent case, where the
channel state information (CSI) is known at the receiver's side, we
completely characterize the asymptotic capacity and provide the exact
scaling in the extreme regimes of signal-to-noise ratio (SNR) and the
number of transmit antennas. In the non-coherent case, where the CSI is
unknown but remains constant during $T$ symbol periods, we first obtain
the exact asymptotic capacity for $T\le3$. Then, we propose a scheme
involving uniform signaling in the covariance space and derive a
non-asymptotic lower bound on the capacity for an arbitrary block size
$T$. Furthermore, we propose a genie-aided upper bound where the channel
is revealed to the receiver.  We show that the upper and lower bounds
coincide when $T$ is large. In the low SNR regime, we derive the
asymptotic capacity up to a vanishing term, which, remarkably, matches
our capacity lower bound.

\end{abstract}

\section{Introduction}

Recent advances in multiple-antenna communication theory have
significantly impacted modern wireless communication technologies. The
conventional multi-input multi-output (MIMO) channel, typically modeled
as a linear system with additive Gaussian noise, has been extensively
studied. However, in practical scenarios, the resolution of the received
signal is constrained by the quality and precision of the
analog-to-digital converter~(ADC). With low-resolution ADCs, the linear
model becomes invalid, and the channel capacity becomes intractable in
general. This work focuses the capacity of
an extreme regime in which each output is quantized into one bit,
retaining only the sign of the output. This scenario arises due to
constraints on receiver complexity, cost, power consumption, or
bandwidth, especially when outputs are connected to a
baseband-processing unit through rate-limited fronthaul links.  

For single-input single-output~(SISO) channels with quantized output, it has been
shown that the optimal input is discrete. In particular, the capacity of
the 1-bit case is known and the optimal input has no more than three
probability mass points~\cite{singh2009limits, koch2013low,
viterbi2009principles}. The capacity of fading SISO channel with 1-bit
output has been explored in~\cite{mezghani2008analysis,
mezghani2009analysis}. More recently, studies has been extended to the MIMO
case, but only in the low SNR regime~\cite{mezghani2020low}. The problem
becomes more complex in the non-coherent case where the output
distribution relates to the multivariate Gaussian orthant
probability, which is only tractable in low dimensions. 
While non-coherent multiple-input multiple-output~(MIMO) channels with
unquantized output have been extensively studied in the
literature~(e.g., \cite{marzetta1999capacity, Moser,
ZhengTse2002Grassman, durisi2009noncoherent}), especially in the
asymptotic low and high signal-to-noise ratio~(SNR) regimes, the capacity
of channels with quantized output remains largely unexplored. The main
challenges in analyzing the capacity of the 1-bit MIMO channel include
the discrete nature of the output vector and the intractability
of the output distribution given the input. 

In this work, we investigate the capacity of the 1-bit MIMO fading
channels when the number of receive antennas, $\nr$, is large. This
setting is particularly relevant for practical applications, as using
low-resolution ADCs is both cost-effective and power-efficient in
massive MIMO systems.  Our results build on the well known asymptotic
properties of Bayesian statistics from Clarke and Barron~\cite{CB94} in
the large sample size regime, stated as below.  
\setcounter{proposition}{-1}
\begin{proposition}[Clarke and Barron~\cite{CB94}]\label{lemma:CB}
  Let $\yv_1,\ldots,\yv_{\nr}$ be i.i.d.~samples from the
  density $f(\yv;\thetav)$, $\thetav\in\Omega$, satisfying some
  regularity conditions.\footnote{The conditions are specified in
  Appendix~\ref{app:CB_cond}. }
  When $\nr$ is large, for any compact subset $\Kc$ in the interior
  of $\Omega$, we have  
  \begin{equation}
    \max_{\thetav\in\Kc} I(\thetav;
    \pmb{y}_1,\ldots,\pmb{y}_{\nr}) 
    = {\mathrm{dim}(\Kc)\over2} \log {\nr\over 2\pi e} +
    \log \int_{\Kc} \sqrt{\det \pmb{J}_{\theta}(\thetav)} d
    \thetav + o(1), \label{eq:tmp291}
  \end{equation}%
  where $\mathrm{dim}(\Kc)$ is the dimension of the parameter
  space $\Kc$; $\pmb{J}_{\theta}(\thetav)$ is the Fisher
  information matrix with respect to the sample distribution
  parameterized by $\thetav$. The asymptotically optimal distribution of $\thetav$ is  
  \begin{equation}
    p^*(\thetav) = {\sqrt{\det \pmb{J}_{\theta}(\thetav)}
    \over \int_{\Kc} \sqrt{\det \pmb{J}_{\theta}(\thetav)} d
    \thetav}, 
  \end{equation}
  a continuous distribution also known as Jeffreys' prior.  
\end{proposition}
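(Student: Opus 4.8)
The plan is to identify the maximized mutual information with the minimax redundancy of a Bayes mixture and to evaluate that redundancy by a Laplace approximation. First I would fix a prior $p$ supported on $\Kc$ (the maximization in the statement is over all such priors on $\thetav$) and use the standard decomposition of mutual information as an average relative entropy,
\begin{equation}
  I(\thetav;\yv_1,\ldots,\yv_{\nr})
  = \int_{\Kc} p(\thetav)\, D\!\left(f_{\nr}(\cdot\,;\thetav)\,\big\|\,m_{\nr}\right)\,\d\thetav,
\end{equation}
where $f_{\nr}(\cdot\,;\thetav)=\prod_{i=1}^{\nr}f(\yv_i;\thetav)$ is the joint sample density and $m_{\nr}=\int_{\Kc}p(\thetav)\,f_{\nr}(\cdot\,;\thetav)\,\d\thetav$ is the mixture marginal. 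This reduces the problem to the asymptotics of a single relative entropy, averaged against $p$.

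Next, for a fixed interior point $\thetav$ I would expand $D(f_{\nr}(\cdot\,;\thetav)\,\|\,m_{\nr})$ by Laplace's method applied to $m_{\nr}$. Under the regularity conditions the log-likelihood satisfies local asymptotic normality, so in an $O(1/\sqrt{\nr})$ neighbourhood of the maximum-likelihood estimate a second-order Taylor expansion makes the integrand Gaussian with covariance $\pmb{J}_{\theta}(\thetav)^{-1}/\nr$, while the mass contributed by parameters bounded away from $\thetav$ is asymptotically negligible by posterior concentration (a Bernstein--von Mises argument). Evaluating the resulting Gaussian integral and taking the expectation over the sample gives, uniformly on compact interior subsets,
\begin{equation}
  D\!\left(f_{\nr}(\cdot\,;\thetav)\,\big\|\,m_{\nr}\right)
  = \frac{d}{2}\log\frac{\nr}{2\pi e}
  + \log\frac{\sqrt{\det \pmb{J}_{\theta}(\thetav)}}{p(\thetav)}
  + o(1),
\end{equation}
with $d=\mathrm{dim}(\Kc)$; here the term $-\tfrac{d}{2}$ coming from the expected quadratic form $\nr(\hat\thetav-\thetav)^{\T}\pmb{J}_{\theta}(\thetav)(\hat\thetav-\thetav)\to\chi^2_d$ combines with $\tfrac{d}{2}\log\tfrac{\nr}{2\pi}$ to produce the $\log e$ in the denominator.

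I would then integrate this expansion against $p$. Setting $C=\int_{\Kc}\sqrt{\det \pmb{J}_{\theta}(\thetav)}\,\d\thetav$ and recognizing $p^*=\sqrt{\det\pmb{J}_{\theta}}/C$ as Jeffreys' prior, the parameter-dependent term rewrites as
\begin{equation}
  \int_{\Kc}p(\thetav)\log\frac{\sqrt{\det\pmb{J}_{\theta}(\thetav)}}{p(\thetav)}\,\d\thetav
  = \log C - D\!\left(p\,\big\|\,p^*\right),
\end{equation}
so that $I(\thetav;\yv_1,\ldots,\yv_{\nr})=\frac{d}{2}\log\frac{\nr}{2\pi e}+\log C-D(p\,\|\,p^*)+o(1)$. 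Since $D(p\,\|\,p^*)\ge0$ with equality iff $p=p^*$, maximizing over priors selects Jeffreys' prior and yields \eqref{eq:tmp291}.

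The delicate part is the \emph{uniformity} of the Laplace step: one must control the remainder of the second-order expansion together with the tail mass of $m_{\nr}$ away from $\thetav$ simultaneously for all $\thetav$ in a compact interior subset, and justify interchanging the large-$\nr$ limit with both the expectation over the sample and the integration over $p$. This is exactly where the regularity hypotheses---smoothness of $\thetav\mapsto f(\yv;\thetav)$, continuity and positive-definiteness of $\pmb{J}_{\theta}$, and integrability of the relevant log-likelihood derivatives---enter, and it constitutes the main technical content of the Clarke--Barron argument.
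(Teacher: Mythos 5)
The paper never proves this proposition: it is quoted verbatim from Clarke and Barron \cite{CB94} (note it is numbered Proposition~0), and the only work the paper itself does in connection with it is the verification, in Appendix~\ref{app:CB_cond}, of the regularity conditions (C1)--(C3) for the two specific channel families it needs. Your sketch is therefore not comparable to anything in the paper; it is a reconstruction of the Clarke--Barron argument itself, and its architecture --- mutual information as the prior-average of the redundancy $D(f_{\nr}(\cdot\,;\thetav)\,\|\,m_{\nr})$, a Laplace/local-asymptotic-normality expansion of that redundancy, and the rewriting of the prior-dependent term as $\log C - D(p\,\|\,p^*)$ --- is indeed the right skeleton of their proof.

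There is, however, a genuine gap at the final step, and it is not merely the uniformity issue you flag at the end. The expansion
\begin{equation}
  D\!\left(f_{\nr}(\cdot\,;\thetav)\,\big\|\,m_{\nr}\right)
  = \frac{d}{2}\log\frac{\nr}{2\pi e}
  + \log\frac{\sqrt{\det \pmb{J}_{\theta}(\thetav)}}{p(\thetav)}
  + o(1)
\end{equation}
holds for a \emph{fixed} prior with a continuous, strictly positive density $p$ on $\Kc$, with an $o(1)$ term that depends on $p$. You cannot conclude that $\max_p I = \frac{d}{2}\log\frac{\nr}{2\pi e}+\log C+o(1)$ by maximizing the leading terms and invoking $D(p\,\|\,p^*)\ge 0$, because the finite-$\nr$ capacity-achieving priors need not belong to this class at all: for each fixed $\nr$ the optimal prior is typically discrete (exactly the phenomenon this paper cites for 1-bit channels), in which case $\log\bigl(1/p(\thetav)\bigr)$ is not even defined and the expansion says nothing about the maximizer. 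Your argument thus delivers only the lower bound (take $p=p^*$, Jeffreys' prior). The matching upper bound is the real content of \cite{CB94}: one bounds the capacity $\max_p I$ by the minimax redundancy $\min_{q}\max_{\thetav\in\Kc} D\bigl(f_{\nr}(\cdot\,;\thetav)\,\|\,q\bigr)$ --- valid for \emph{arbitrary} priors, discrete or not --- and then exhibits a single coding distribution $q$ (a Jeffreys mixture, slightly smoothed near the boundary of $\Kc$) for which this worst-case redundancy is $\frac{d}{2}\log\frac{\nr}{2\pi e}+\log C+o(1)$ uniformly over $\Kc$. Without this two-sided argument, or some other device handling arbitrary priors, the maximization step in your proposal does not go through.
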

The main contributions of our works are summarized as follows.

First, we consider the coherent case where the channel state
information~(CSI) is perfectly known at the receiver. In this
case, given the channel input, the observations are independent both
in space and time. By carefully examining the Fisher information
of the channel, we completely characterize the asymptotic capacity
up to a vanishing term with $\nr$. Furthermore, we analyze the
scaling of the capacity at high
and low SNR, as well as the case where $\nt$ is large. We shall show
that, essentially, the capacity scales as 
\begin{equation}
  {\nt\over 2} \log {\snr^\beta \nr \over \nt} \label{eq:scaling_coh_app}
\end{equation}%
for some $\beta \in [0,1]$. 

Next, motivated by the difficulty and cost of channel estimation 
with only quantized outputs, we study the non-coherent case where the
channel realization is unknown \emph{a priori}. Specifically, we adopt the
block fading model, where the channel state is unknown but remains
constant during $T$ symbol periods before changing to a new independent
realization\cite{marzetta1999capacity, ZhengTse2002Grassman}. 
We focus on the short coherence time regime, i.e., with $\nt \ge T$. 
Indeed, in this regime, pilot-based communication schemes are
inefficient, necessitating specially designed communication schemes. 
In this case, from each receive antenna, the output distribution of samples
within a block is related to the $T$-dimensional Gaussian orthant probability. 
For $T\le3$, leveraging the closed-form
expression of the distribution, we derive the Fisher information and
obtain the exact asymptotic capacity. For larger values of $T$, as the
orthant probability becomes intractable, we derive upper and lower
bounds on the capacity.  In particular, we propose an input distribution
such that the covariance of the unquantized output is
uniformly distributed in a well chosen parametrization of the space of allowed covariance matrices. The upper bound is based on a
genie-aided argument that reveals the CSI to the receiver while fixing
the input support to be capacity achieving. Further analyses show that
the proposed lower bound is tight in two cases: 1)~when $T$ is large,
and~2)~when the SNR is small. The proposed lower bound suggests the following
scaling
\begin{equation}
  {T-1 \over 4} \log {\gamma^2 \nr \over T},
\end{equation}%
where $\gamma:= {\snr \over 1+\snr}$.

The remainder of the paper is organized as follows. The channel model
and assumptions are introduced in Section~\ref{sec:model}. The coherent
case is studied in Section~\ref{sec:coherent}. The non-coherent case is
examined in Section~\ref{sec:T=23} for $T\le 3$. In
Section~\ref{sec:anyT}, we derive upper and lower bounds valid for any
$T$, while in Section~\ref{sec:asympt} we analyze some asymptotic
regimes. We conclude the paper with discussions in
Section~\ref{sec:discussion}.

\subsubsection*{Notations}
We use $\log$ and $\ln$ for base-$2$ and natural base logarithms,
respectively. 
$\Xm^\T$ denotes the matrix transposition and $\det(\Xm)$ is the
determinant. $\|\cdot\|_{\mathrm{F}}$ is the Frobenius norm.  
$\ind\left( \cdot
\right)$ is the indicator function. 
The binomial coefficient is denoted by $\binom{n}{k} := {n!\over
k!(n-k)!}$. 
We use $\succeq$ to denote the partial
ordering of positive semidefinite matrices.  
$\Chol{\pmb{M}}$ is the Cholesky
decomposition\footnote{Here, for uniqueness, we impose that the diagonal elements of the
decomposed lower triangular matrix are nonnegative.
For positive semidefinite matrices, we apply a limiting argument and
define the Cholesky decomposition as the limit of the decompositions of
a sequence of positive definite matrices. 
} such that $\Chol{\pmb{M}}$ is upper triangular and  
$\Chol{\pmb{M}}^\T \Chol{\pmb{M}} = \pmb{M}$ for any $\pmb{M}\succeq
0$. We use the standard notation $o(\cdot)$ and $O(\cdot)$ for
asymptotically vanishing and bounded terms, respectively. $a\sim b$ means
${a\over b} \to 1$ in the indicated limit. 
$\Gamma(x)$ is the Gamma function, with $\Gamma(x+1)=x \Gamma(x)$ and the well-known
Stirling's approximation $\Gamma(x) \sim \sqrt{2\pi\over x}\left(
{x\over e}
\right)^x$ for large $x$. 
The notation $\pmb{x}\circ\pmb{y}$ stands for the Hadamard product of $\pmb{x}$ and
$\pmb{y}$. 
We use $[T]$ to denote
the set $\{1,\ldots,T\}$. A unit $n$-ball is defined as $\Ball_{n}:=\{\xv\in\mathbb{R}^{n\times1}:
\|\xv\| \le 1\}$. A unit $(n-1)$-sphere is the boundary of a unit $n$-ball,
i.e., $\Sphere_{n-1}:=\{\xv\in\mathbb{R}^{n\times1}:
\|\xv\| = 1\}$. 
We also use $\Ball(\xv,\delta)$ to denote a closed ball centered at
$\xv$ with radius $\delta$. 
The Gaussian density function is $\phi(t):=
{1\over\sqrt{2\pi}} e^{-{t^2\over2}}$, $t\in\mathbb{R}$, and the
Q-function is $Q(x):=\int_x^{\infty} \phi(t) \d t$. 
With a slight abuse of notation, we use the random variable to denote its
distribution if confusion is not likely, e.g., maximizing over a random
variable means over the corresponding distribution. Applying a scalar operator
to any set means the element-wise operation, e.g., $f(\Ac) =
\{f(x):\ x\in\Ac\}$.

\section{Problem formulation} \label{sec:model}

\subsection{Channel model}

We consider a multi-antenna block fading channel where the receiver quantizes each
antenna output into one bit. For simplicity, we assume a real
channel model in which the channel matrix changes independently from one block to another
according to the same distribution while remaining constant during
the same block of $T$ symbol periods. Specifically, the received signal in the $i$th block,
$i=1,\ldots,n$, is the following binary matrix
\begin{equation}
  \pmb{Y}_{\nr\times T}[i] = \sign\left(\pmb{H}_{\nr\times\nt}[i] \pmb{X}_{\nt\times
  T}[i] + \pmb{Z}_{\nr\times T}[i]\right), \label{eq:channel}
\end{equation}%
where $\pmb{H}$ and $\pmb{Z}$ are both matrices with
independent and identically distributed~(i.i.d.)~$\mathcal{N}(0,1)$
entries; and the function $\sign(\cdot)$ is applied on each component. This model is an idealized version of a wireless communication
channel with 1-bit ADC, with perfect
synchronization and analog filtering and passband demodulation.

The input is subject to the peak power constraint such that
\begin{equation}
  \Xc_{\snr} := \{\pmb{X}=[\xv_1\ \cdots\ \xv_T]:\
  \xv_i \in {\sqrt{\snr}} \Ball_{\nt}, i\in[T]\}.
\end{equation}
Due to the normalization, we verify that the average SNR at each receive
antenna is $\snr$. At each coherence block, the receiver observes from each of the $\nr$
antennas a $T$-length 
sequence, denoted by $\pmb{y}_1, \ldots, \pmb{y}_{\nr} \in
\mathcal{Y}^{T\times1}$ with $\mathcal{Y}:=\{-1, 1\}$ such that 
$\pmb{Y}^\T = [\yv_1\ \cdots\ \yv_{\nr}]$.
In this paper, we are interested in the short coherence time regime,
i.e., $\nt\ge T$, in which channel estimation is not straightforward. 
The general case can be treated with extra care as discussed in
Section~\ref{sec:discussion}. 

We assume that the channel state information is unknown at the
transmitter side, with $\Hm^\T = [\hv_1\ \cdots\ \hv_{\nr}]$. In addition, since the channel matrix and noise matrix have
i.i.d.~rows, we have 
\begin{IEEEeqnarray}{rCl}
  p(\Ym, \Hm \cond \Xm) &=& \prod_{i=1}^{\nr} p(\yv_i, \hv_i \cond \Xm). 
\end{IEEEeqnarray}%
Finally, each message is encoded over $n$ blocks, and the
(ergodic)~capacity is defined as the maximum number of bits per channel
use carried by the message that can be decoded by the receiver with an arbitrarily small
probability of error when $n\to\infty$.

\subsection{Coherent communication}

In the coherent case where the receiver knows perfectly
$\Hm$ so that the channel capacity has the following single-letter
form\cite{Cover2006elements}
\begin{equation}
  C = \frac{1}{T} \max_{\Xm\in\Xc_{\snr}} I(\Xm; \Ym, \Hm)\quad\text{bits per channel
  use},
\end{equation}%
since the block fading channel is memoryless stationary. 
Furthermore, given the channel matrix $\Hm$, the channel is also memoryless inside
each coherent block. It follows after simple manipulations that 
\begin{equation}
  C = \max_{\xv\in\sqrt{\snr}\Ball_{\nt}} I(\xv; \yv, \Hm),
\end{equation}
where $\xv\in \sqrt{\snr} \Ball_{\nt}$ and $\yv\in\Yc^{\nr\times1}$ are the input and
output vectors for one channel use, respectively. Therefore, the
capacity in the coherent case does not depend on the coherent block size
$T$, and one can assume $T=1$ without loss of generality. 
Moreover, the channel is characterized by ($\nr$ independent copies of) the
conditional distribution with joint density\footnote{The density is
defined with respect to the product measure $\mu(\Hm) m(\yv)$ where
$\mu$ and $m$ are the Lebesgue measure and the counting measure
respectively.} 
\begin{equation}
 p(y, \hv \cond \xv) = p(\hv) p(y\cond\hv, \xv),
\end{equation}%
where $p(\hv) = \prod_{i=1}^{\nt} \phi(h_i)$, and 
\begin{equation}
  p(y \cond \hv, \xv) =  Q(-y\hv^\T \xv). 
\end{equation}%

\begin{lemma} \label{lemma:121_coh}
  The family of distributions $\{ p(y, \hv \cond
  \xv):\ \xv\in\mathbb{R}^{\nt\times1} \}$ is one-to-one parameterized by
  $\xv\in\mathbb{R}^{\nt\times1}$. 
\end{lemma}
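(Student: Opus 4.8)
The plan is to prove that the parametrization map $\xv \mapsto p(y,\hv\cond\xv)$ is injective, which is precisely the assertion that the family is one-to-one parameterized by $\xv$. So I fix two candidates $\xv_1,\xv_2\in\mathbb{R}^{\nt\times1}$, assume the two joint densities coincide, and aim to deduce $\xv_1=\xv_2$. Here ``the distributions coincide'' means $p(y,\hv\cond\xv_1)=p(y,\hv\cond\xv_2)$ for every $y\in\{-1,1\}$ and Lebesgue-almost every $\hv$, equality being taken with respect to the product measure $\mu(\hv)\,m(y)$. Since the counting measure $m$ charges each point of $\{-1,1\}$, the equality can be read off separately for $y=1$ and $y=-1$.

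Next I exploit the structure $p(y,\hv\cond\xv)=p(\hv)\,Q(-y\hv^\T\xv)$. The marginal factor $p(\hv)=\prod_{i}\phi(h_i)$ is strictly positive everywhere and independent of $\xv$, so it cancels, leaving $Q(-y\hv^\T\xv_1)=Q(-y\hv^\T\xv_2)$ for almost every $\hv$. The key analytic input is that $Q$ is strictly decreasing, since $Q'=-\phi<0$, and therefore injective. Both sides, as functions of $\hv$, are continuous; an a.e.\ identity between continuous functions holds everywhere because a full-measure set is dense. Hence $Q(-y\hv^\T\xv_1)=Q(-y\hv^\T\xv_2)$ for \emph{all} $\hv$, and injectivity of $Q$ yields $y\,\hv^\T\xv_1=y\,\hv^\T\xv_2$ for all $\hv$. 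Taking $y=1$ gives
\[
  \hv^\T(\xv_1-\xv_2)=0 \quad\text{for all }\hv\in\mathbb{R}^{\nt\times1}.
\]
Evaluating this linear identity at the standard basis vectors $\ev_j$, $j=1,\ldots,\nt$, reads off each coordinate of $\xv_1-\xv_2$ as zero, so $\xv_1=\xv_2$, completing the injectivity argument.

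I do not expect a genuine obstacle here: strict monotonicity of $Q$ and the full (Gaussian) support of the $\hv$-marginal do all the work. The only point requiring a little care is the measure-theoretic bookkeeping, namely making precise that two densities agreeing as distributions agree only up to a null set, and then upgrading that almost-everywhere identity to an everywhere identity via continuity in $\hv$ before evaluating coordinatewise. Once this is handled, the conclusion is immediate.
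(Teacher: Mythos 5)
Your proof is correct and follows essentially the same route as the paper's: both rest on the strict monotonicity (hence invertibility) of the Q-function to reduce the identity of distributions to $\hv^\T\xv_1=\hv^\T\xv_2$ for all $\hv$, which forces $\xv_1=\xv_2$. Your extra care with the measure-theoretic bookkeeping (cancelling the positive factor $p(\hv)$, upgrading the almost-everywhere identity to everywhere via continuity, and evaluating at basis vectors) is a sound refinement of the same argument rather than a different approach.
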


\begin{proof}
 Assume that there exists $\xv'\ne\xv$ such that $p(y,\hv\cond \xv) =
 p(y,\hv\cond \xv')$ for every $(y,\hv)$. Since the Q-function is
 strictly monotone and thus invertible, we must have $\hv^\T\xv =
 \hv^\T\xv'$ for every $\hv$, thus, $\xv = \xv'$. This contradicts the assumption.
\end{proof}

We recall the standard results on the volume of the
unit balls and spheres, as well as the 
scaling at high dimensions with Stirling's approximation~(see, e.g.,
\cite{gradshteyn2014table}). 
\begin{lemma} \label{lemma:vol_ball}
  The volume of a unit $\nt$-ball is 
  \begin{equation}
    \Vol(\Ball_{\nt}) = {\pi^{\nt\over2}\over\Gamma({\nt\over2}+1)}. 
  \end{equation}%
  The volume~(surface area) of a unit $(\nt-1)$-sphere is 
  \begin{equation}
    \Vol(\Sphere_{\nt-1}) = \nt \Vol(\Ball_{\nt}). 
  \end{equation}%
  When $\nt$ is large, the log-volume of the unit $\nt$-ball is 
  \begin{equation}
    \log \Vol(\Ball_{\nt}) = {\nt\over2}\log{2\pi e \over \nt} -
    \log\sqrt{\pi \nt} + o(1). \label{eq:logVol-Ball}
  \end{equation}%
\end{lemma}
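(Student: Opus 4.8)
The plan is to obtain the first two identities from a single evaluation of the Gaussian integral in polar coordinates, and then to derive the large-$\nt$ expansion by applying the stated Stirling approximation to the resulting closed form for $\Vol(\Ball_{\nt})$. First I would compute $\int_{\mathbb{R}^{\nt}} e^{-\|\xv\|^2}\,\d\xv$ in two ways. Written as a product of $\nt$ one-dimensional Gaussian integrals it equals $\pi^{\nt/2}$; passing to polar coordinates, with $\Vol(\Sphere_{\nt-1})$ the surface measure of the unit sphere and $r^{\nt-1}\,\d r$ the radial shell factor, the same integral equals $\Vol(\Sphere_{\nt-1})\int_0^\infty e^{-r^2} r^{\nt-1}\,\d r$. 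The substitution $u=r^2$ turns the radial integral into $\tfrac12\Gamma(\nt/2)$, so that $\Vol(\Sphere_{\nt-1}) = 2\pi^{\nt/2}/\Gamma(\nt/2)$.

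Next, integrating the shell factor over $r\in[0,1]$ gives $\Vol(\Ball_{\nt}) = \Vol(\Sphere_{\nt-1})\int_0^1 r^{\nt-1}\,\d r = \Vol(\Sphere_{\nt-1})/\nt$, which is exactly the second identity of the lemma. Substituting the expression for $\Vol(\Sphere_{\nt-1})$ and using the functional equation $\tfrac{\nt}{2}\Gamma(\nt/2) = \Gamma(\nt/2+1)$ collapses this to $\Vol(\Ball_{\nt}) = \pi^{\nt/2}/\Gamma(\nt/2+1)$, the first identity.

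Finally, for the asymptotics I would take logarithms of the closed form, $\log\Vol(\Ball_{\nt}) = \tfrac{\nt}{2}\log\pi - \log\Gamma(\nt/2+1)$, and insert Stirling's approximation. Writing $\Gamma(\nt/2+1)\sim\sqrt{\pi\nt}\,(\nt/(2e))^{\nt/2}$ (equivalently applying the stated $\Gamma(x)\sim\sqrt{2\pi/x}\,(x/e)^x$ together with $\Gamma(\nt/2+1)=\tfrac{\nt}{2}\Gamma(\nt/2)$) yields $\log\Gamma(\nt/2+1) = \tfrac{\nt}{2}\log\tfrac{\nt}{2e} + \tfrac12\log(\pi\nt) + o(1)$; substituting and collecting terms gives the claimed $\tfrac{\nt}{2}\log\tfrac{2\pi e}{\nt} - \log\sqrt{\pi\nt} + o(1)$. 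The computation is entirely routine once the closed form is in hand, so there is no real obstacle; the only points demanding a little care are checking that the additive $o(1)$ from Stirling survives the linear combination (it does, since each correction is a vanishing additive term under $\log$) and observing that the derivation is insensitive to the logarithm base, so the identity holds for the paper's base-$2$ $\log$ provided the $e$ inside $\log\tfrac{2\pi e}{\nt}$ is read in the same base.
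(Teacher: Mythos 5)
Your proof is correct. The paper does not actually prove this lemma---it recalls the three facts as standard results with a citation to a table of integrals---so there is no in-paper argument to compare against; your derivation (evaluating $\int_{\mathbb{R}^{\nt}} e^{-\|\xv\|^2}\,\d\xv$ once as a product of one-dimensional Gaussians and once in polar coordinates to get $\Vol(\Sphere_{\nt-1}) = 2\pi^{\nt/2}/\Gamma(\nt/2)$, then integrating the radial shell factor to relate ball and sphere, and finally applying Stirling) is exactly the standard route and fully establishes all three claims, including the correct handling of the $o(1)$ terms. One small remark: your closing caveat about reading the $e$ inside $\log\frac{2\pi e}{\nt}$ ``in the same base'' is unnecessary---$e$ is Euler's number irrespective of the logarithm base, and the identity is base-invariant as stated, since dividing the natural-log version by $\ln 2$ preserves every term including the $o(1)$; but this is a harmless imprecision, not an error.
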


\subsection{Non-coherent communication}

In the non-coherent case, $\Hm$ is not known at either the transmitter's
or the receiver's side. The channel capacity becomes
\begin{equation}
  C = \frac{1}{T} \max_{\Xm\in\Xc_{\snr}} I(\Xm; \Ym)\quad\text{bits per channel
  use}.
\end{equation}%
Unlike the coherent case, the channel is not memoryless inside each
coherent block, and we have to consider the matrix input. 
Nevertheless, for a given input $\pmb{X}$, the $\nr$ observation
sequences are independent, and the channel is characterized by ($\nr$ copies of) the conditional distribution 
\begin{IEEEeqnarray}{rCl}
  p({\yv} \cond \Xm) &=& \mathbb{P} \left\{ \sign(\hv^\T \Xm
  + \zv^\T) = \yv^\T  \right\} \\
  &=& \mathbb{P} \left\{ \sign(\mathcal{N}(0, \Id + \Xm^\T \Xm)) =
  \yv  \right\},\label{eq:orthant_pmf}
\end{IEEEeqnarray}%
where $p({\pmb{y}} \cond \pmb{X})$ depends on $\pmb{X}$ only through the
covariance matrix 
\begin{equation}
 \Rm(\Xm)  := {\Id + \pmb{X}^\T \pmb{X}}. 
\end{equation}%
As scaling any unquantized output with a positive value does not
change the sign, we see that $p({\pmb{y}} \cond \pmb{X})$
only depends on the normalized covariance matrix $\Rm_0(\pmb{X})$ 
whose diagonal entries are $1$ and
off-diagonal values are ${\pmb{x}_j^\T  \pmb{x}_k \over
\sqrt{(1+\|\pmb{x}_j\|^2)(1+  \|\pmb{x}_k\|^2)}}$, $\forall\,j\ne
k\in[T]$. 
In the trivial cases with $T=1$ or $\snr = 0$, the
output is independent of the input and is uniformly distributed in
$\mathcal{Y}^{\nr}$. Therefore, the capacity is zero. In the following,
we shall assume that $T\ge 2$ and $\snr>0$.  

Let us introduce some definitions that will be used throughout the
paper. Let $\Ic := \{ \{i_1,i_2\}:\ i_1\ne i_2 \in [T]\}$ be the set of all
possible pairs of distinct elements in $[T]$, with $|\Ic| =
\binom{T}{2}$. Let us define 
the correlation coefficients as 
${\rhov}(\pmb{X}) := [\rho_i(\pmb{X}):\
    i \in\Ic]\in \mathbb{R}^{\binom{T}{2}}$, with 
    \begin{equation} 
      \rho_{i}(\pmb{X}) :=  
      {\pmb{x}_{i_1}^\T  \pmb{x}_{i_2} \over
      \sqrt{(1+\|\pmb{x}_{i_1}\|^2)(1+  \|\pmb{x}_{i_2}\|^2)}},
      \quad \pmb{X} \in \Xc,\
      i=\{i_1, i_2\} \in \Ic, 
      \label{eq:def_rho}
    \end{equation}%
    wherein we can verify by Cauchy-Schwartz and the monotonicity of
    $x\mapsto {x\over1+x}$ that
    \begin{equation}
      \rho_i(\Xm) \le \gammasnr := {\snr\over1+\snr}. 
    \end{equation}%
Further, for $\corrv := [\corr_{i}:\ i\in\Ic] \in
\mathbb{R}^{\binom{T}{2}}$, let us define
the following functions
\begin{IEEEeqnarray}{rCl}
  \tilde{\Rrho}(\corrv) &:=&
  \bigl[ \corr_{\{j,k\}} \ind(j\ne k) \bigr]_{j,k\in[T]}, \\
  {\Rrho}(\corrv) &:=& \Id +
  \tilde{\Rrho}(\corrv). 
\end{IEEEeqnarray}%
For all, $\gamma\in[0,1)$, define 
\begin{IEEEeqnarray}{rCl}
  \ParaCorr^{(T)}_\gamma &:=& \left\{ \corrv \in
  \mathbb{R}^{\binom{T}{2}}:\ \Rrho(\corrv)\succeq (1-\gamma)\Id
  \right\} \\
   &=& \left\{ \corrv \in \mathbb{R}^{\binom{T}{2}}:\
   \Rrho(\gamma^{-1}\corrv)\succeq 0 \right\},
\end{IEEEeqnarray}%
and 
\begin{equation}
  \ParaCorr^{(T)}_1 := \bigcup_{\gamma\in[0,1)} \ParaCorr^{(T)}_\gamma. 
\end{equation}
It follows that $\ParaCorr^{(T)}_\gamma$ is increasing with
$\gamma\in[0,1]$ and that $\Rrho(\ParaCorr^{(T)}_{1})$ is the set of all $T\times T$ positive definite
matrices with diagonal $1$. For brevity, we use $\ParaCorr_{\gamma}$ instead
of $\ParaCorr^{(T)}_\gamma$ unless we need to highlight the dependence on $T$. 
Next, define a subset of the input space with all the upper triangular
matrices with normalized columns and nonnegative diagonals. Specifically,
let
\begin{equation}
  \UT_{\snr} := \left\{ \Xm=[\xv_1\ \cdots\ \xv_T]\in\Xc_{\snr}:\ \|\xv_i\| =
  \sqrt{\snr},
  X_{ij} = X_{ij}\ind(i\le j), X_{jj}\ge0, i\in[\nt], j\in[T] \right\}.
\end{equation}%
The following result establishes different parameterizations of the
family of conditional distributions $\{p({\yv} \cond \Xm):
\Xm\in\Xc_{\snr}\}$. 
\begin{lemma}\label{lemma:param_space}
The parameterization of the family of distributions $\{
\mathrm{sign}(\Nc(0,\Rrho(\corrv))):\ \corrv\in\ParaCorr_1 \}$ is
one-to-one. For any $\gamma\in[0,1)$, the set $\ParaCorr_{\gamma}$ is
compact.  Further, we have 
\begin{IEEEeqnarray}{C}
  \rhov(\UT_{\snr}) = \rhov(\Xc_{\snr}) = \ParaCorr_{\gamma}, \label{eq:121}\\
  \Rm_0(\UT_{\snr}) = \Rm_0(\Xc_{\snr}) = \Rrho(\ParaCorr_\gamma). 
  \label{eq:122}
\end{IEEEeqnarray}%
\end{lemma}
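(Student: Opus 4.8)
The plan is to handle the four assertions in turn, reducing the two set identities to a single pair of inclusions.

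For the one-to-one claim, I would argue that the full sign distribution determines every pairwise marginal, and that each pairwise marginal in turn determines the corresponding correlation. Concretely, fix a pair $\{j,k\}\in\Ic$ and sum the probability mass function $\P\{\mathrm{sign}(\Nc(0,\Rrho(\corrv)))=\cdot\}$ over the remaining $T-2$ coordinates; this yields the joint law of $(\mathrm{sign}(Z_j),\mathrm{sign}(Z_k))$ for a standard bivariate Gaussian with correlation $\corr_{\{j,k\}}=[\Rrho(\corrv)]_{jk}$. By Sheppard's formula, $\P\{Z_j>0,Z_k>0\}=\tfrac14+\tfrac1{2\pi}\arcsin\corr_{\{j,k\}}$, and since $\arcsin$ is strictly increasing on $[-1,1]$, the value $\corr_{\{j,k\}}$ is recovered from the distribution. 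As this holds for every pair, two parameters in $\ParaCorr_1$ inducing the same sign distribution must coincide, giving injectivity. For compactness of $\ParaCorr_\gamma$ with $\gamma\in[0,1)$, I would note that $\corrv\mapsto\lambda_{\min}(\Rrho(\corrv))$ is continuous, so $\ParaCorr_\gamma=\{\corrv:\lambda_{\min}(\Rrho(\corrv))\ge 1-\gamma\}$ is closed; boundedness follows since each such $\Rrho(\corrv)$ is positive semidefinite with unit diagonal, whose $2\times2$ principal minors force $|\corr_{\{j,k\}}|\le1$. A closed bounded subset of $\mathbb{R}^{\binom T2}$ is compact.

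The heart of the proof is \eqref{eq:121}, with $\gamma=\gammasnr$. Since $\UT_\snr\subseteq\Xc_\snr$, it suffices to establish $\rhov(\Xc_\snr)\subseteq\ParaCorr_\gamma$ and $\ParaCorr_\gamma\subseteq\rhov(\UT_\snr)$. For the first inclusion, fix $\Xm\in\Xc_\snr$ and write $\xv_j=\|\xv_j\|\,\hat\xv_j$; setting $t_j:=\|\xv_j\|^2/(1+\|\xv_j\|^2)\in[0,\gamma]$, $c_{jk}:=\hat\xv_j^\T\hat\xv_k$, and $s_j:=\sqrt{t_j/\gamma}\in[0,1]$, one computes $\gamma^{-1}\rho_{\{j,k\}}(\Xm)=s_js_k c_{jk}$. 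Hence $\Rrho(\gamma^{-1}\rhov(\Xm))=\Sm\Cm\Sm+(\Id-\Sm^2)$, where $\Cm:=[c_{jk}]$ is the Gram matrix of the unit vectors $\hat\xv_j$ (thus $\succeq0$) and $\Sm:=\diag(s_1,\dots,s_T)$ with $\Id-\Sm^2\succeq0$; therefore $\Rrho(\gamma^{-1}\rhov(\Xm))\succeq0$, i.e.\ $\rhov(\Xm)\in\ParaCorr_\gamma$. For the reverse inclusion, given $\corrv\in\ParaCorr_\gamma$ the matrix $\Mm:=\Rrho(\gamma^{-1}\corrv)$ is positive semidefinite with unit diagonal; taking its Cholesky factor $\Mm=\Gm^\T\Gm$ with $\Gm$ upper triangular and nonnegative diagonal, the columns $\gv_j$ are unit vectors with $\gv_j^\T\gv_k=\gamma^{-1}\corr_{\{j,k\}}$. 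Embedding $\Gm$ into the top $T$ rows of an $\nt\times T$ matrix (padding with zeros, using $\nt\ge T$) and scaling by $\sqrt\snr$ produces $\Xm\in\UT_\snr$ with $\|\xv_j\|=\sqrt\snr$, hence $s_j=1$ and $\rho_{\{j,k\}}(\Xm)=\gamma\,\gv_j^\T\gv_k=\corr_{\{j,k\}}$, so $\rhov(\Xm)=\corrv$.

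Finally, \eqref{eq:122} follows from \eqref{eq:121} by applying the affine bijection $\Rrho(\cdot)$ (injective because $\corrv\mapsto\tilde{\Rrho}(\corrv)$ recovers each off-diagonal) to all three equal sets, together with the identity $\Rrho(\rhov(\Xm))=\Rm_0(\Xm)$. The main obstacle is the forward inclusion in \eqref{eq:121}: proving $\Rrho(\gamma^{-1}\rhov(\Xm))\succeq0$ hinges on finding the right factorization, and the decomposition $\Sm\Cm\Sm+(\Id-\Sm^2)$, which isolates the Gram matrix of the normalized inputs and absorbs the deficit in the diagonal, is the key step. Once this is in hand, the full-power Cholesky construction for the converse and the reduction of \eqref{eq:122} are routine.
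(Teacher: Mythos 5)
Your proposal is correct and follows essentially the same route as the paper's proof: injectivity via marginalizing to pairwise sign distributions and inverting the bivariate orthant-probability formula, compactness via closed-plus-bounded, the forward inclusion of \eqref{eq:121} via a ``Gram matrix plus nonnegative diagonal'' decomposition (your $\Sm\Cm\Sm+(\Id-\Sm^2)$ is just the paper's $\Dm \Xm^\T\Xm\Dm + \diag\bigl(1-\tfrac{\|\xv_i\|^2}{1+\|\xv_i\|^2}\bigr)$ rescaled through the equivalence of the two definitions of $\ParaCorr_\gamma$), the reverse inclusion via the scaled Cholesky factor $\sqrt{\snr}\,\Chol{\Rrho(\gamma^{-1}\corrv)}$, and \eqref{eq:122} as an immediate corollary. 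The only cosmetic difference is your compactness argument via continuity of $\corrv\mapsto\lambda_{\min}(\Rrho(\corrv))$, which packages the paper's explicit perturbation estimate more cleanly.
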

\begin{proof}
  See Appendix~\ref{app:param_space}. 
\end{proof}

\begin{proposition}\label{prop:capa0}
  The capacity of the non-coherent 1-bit MIMO channel with peak power
  constraint is  
  \begin{equation}
    C = {1\over T}\max_{\corrv\in\ParaCorr_\gamma} I(\corrv;
    \yv_1,\ldots,\yv_{\nr}) = {1 \over T}\max_{\Xm\in\UT_{\snr}} I(\Xm;
    \yv_1,\ldots,\yv_{\nr}), \label{eq:tmp1111}
  \end{equation}
  where $\yv_1,\ldots,\yv_{\nr}$ in the above mutual information
  expressions are $\nr$~i.i.d.~samples of the
  random variables 
  $\yv=\mathrm{sign}(\mathcal{N}(0,\Rrho(\corrv)))$ and  
  $\yv=\mathrm{sign}(\mathcal{N}(0,\Rm(\Xm)))$, respectively. 
  If $\corrv^*$ is a maximizer of the first equality in \eqref{eq:tmp1111}, then
  $\Xm^* := {\sqrt{\snr}}\,\Chol{\Rrho(\gamma^{-1}\corrv^*)}$ is an optimal input of the second
  equality\footnote{To be more precise,
  $\Chol{\Rrho(\gamma^{-1}\corrv^*)} \in
  \mathbb{R}^{T\times T}$ corresponds to the upper part of the
  $\nt\times T$ input matrix whose lower part is all $0$ by
  definition of $\UT_{\snr}$.}. 
\end{proposition}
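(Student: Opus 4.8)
The plan is to reduce the maximization over input distributions supported on $\Xc_{\snr}$ to a maximization over distributions of the correlation vector $\corrv$ supported on $\ParaCorr_\gamma$, and then to realize every such correlation distribution by an explicit input supported on $\UT_{\snr}$ (throughout, by the convention in the Notations, maximizing over $\corrv$ or $\Xm$ means over the corresponding distribution). Starting from the non-coherent single-letter expression $C=\frac1T\max_{\Xm\in\Xc_{\snr}}I(\Xm;\Ym)$ and using that the $\nr$ receive antennas are conditionally i.i.d.\ given $\Xm$ (so that $\Ym$ and $\yv_1,\ldots,\yv_{\nr}$ carry the same information), one has $I(\Xm;\Ym)=I(\Xm;\yv_1,\ldots,\yv_{\nr})$. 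The structural fact I would lean on, already recorded in the model, is that $p(\yv\cond\Xm)$ depends on $\Xm$ only through $\rhov(\Xm)$; consequently $p(\yv_1,\ldots,\yv_{\nr}\cond\Xm)=\prod_{i}p(\yv_i\cond\rhov(\Xm))$ depends on $\Xm$ only through $\rhov(\Xm)$ as well.

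First I would show that, for every input distribution, $I(\Xm;\yv_1,\ldots,\yv_{\nr})=I(\rhov(\Xm);\yv_1,\ldots,\yv_{\nr})$. This is two applications of the data-processing inequality: since $\rhov(\Xm)$ is a deterministic function of $\Xm$ we get $I(\rhov(\Xm);\yv_1,\ldots,\yv_{\nr})\le I(\Xm;\yv_1,\ldots,\yv_{\nr})$, while the invariance of the output law under replacing $\Xm$ by $\rhov(\Xm)$ makes $\Xm\to\rhov(\Xm)\to(\yv_1,\ldots,\yv_{\nr})$ a Markov chain and yields the reverse inequality. Taking the supremum over inputs and using $\rhov(\Xc_{\snr})=\ParaCorr_\gamma$ from Lemma~\ref{lemma:param_space}, the pushforward of any distribution on $\Xc_{\snr}$ is supported on $\ParaCorr_\gamma$, so $\max_{\Xm\in\Xc_{\snr}}I(\Xm;\yv_1,\ldots,\yv_{\nr})\le\max_{\corrv\in\ParaCorr_\gamma}I(\corrv;\yv_1,\ldots,\yv_{\nr})$.

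For the reverse inequality, and simultaneously the restriction to $\UT_{\snr}$, I would exhibit an explicit section. For $\corrv\in\ParaCorr_\gamma$ the matrix $\Rrho(\gamma^{-1}\corrv)$ is positive semidefinite by the very definition of $\ParaCorr_\gamma$, so $\Um:=\Chol{\Rrho(\gamma^{-1}\corrv)}$ is well defined, upper triangular with nonnegative diagonal; since $\Rrho(\gamma^{-1}\corrv)$ has unit diagonal its columns are unit vectors, whence $\Xm(\corrv):=\sqrt{\snr}\,\Um\in\UT_{\snr}$. Using $\xv_{i_1}^\T\xv_{i_2}=\snr\,\gamma^{-1}\corr_{\{i_1,i_2\}}$, $\|\xv_i\|^2=\snr$, and $\gamma=\snr/(1+\snr)$, a direct computation gives
\begin{equation}
  \rho_{i}(\Xm(\corrv))=\frac{\snr\,\gamma^{-1}\corr_{i}}{1+\snr}=\gamma\,\gamma^{-1}\corr_{i}=\corr_{i},\qquad i=\{i_1,i_2\}\in\Ic,
\end{equation}
so $\rhov(\Xm(\corrv))=\corrv$. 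Pushing any distribution of $\corrv$ on $\ParaCorr_\gamma$ through $\corrv\mapsto\Xm(\corrv)$ then produces an input supported on $\UT_{\snr}\subseteq\Xc_{\snr}$ with $\rhov(\Xm)=\corrv$ almost surely, whence $I(\Xm;\yv_1,\ldots,\yv_{\nr})=I(\corrv;\yv_1,\ldots,\yv_{\nr})$ and therefore $\max_{\corrv\in\ParaCorr_\gamma}I(\corrv;\yv_1,\ldots,\yv_{\nr})\le\max_{\Xm\in\UT_{\snr}}I(\Xm;\yv_1,\ldots,\yv_{\nr})$. Combining this with the trivial bound $\max_{\Xm\in\UT_{\snr}}(\cdot)\le\max_{\Xm\in\Xc_{\snr}}(\cdot)$ and the inequality of the previous paragraph forces all three maxima to coincide, which is exactly both asserted equalities. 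Finally, feeding an optimal correlation distribution $\corrv^*$ into the section yields $\Xm^*=\sqrt{\snr}\,\Chol{\Rrho(\gamma^{-1}\corrv^*)}\in\UT_{\snr}$, which attains the maximum and is hence an optimal input.

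The part I expect to need care is the passage from distributions of $\corrv$ back to distributions of $\Xm$: because the optimization ranges over input distributions rather than single points, this reduction requires a measurable selection of preimages under $\rhov(\cdot)$. The explicit Cholesky section $\corrv\mapsto\sqrt{\snr}\,\Chol{\Rrho(\gamma^{-1}\corrv)}$ supplies exactly such a selection, and moreover one landing in $\UT_{\snr}$, so that the surjectivity $\rhov(\Xc_{\snr})=\rhov(\UT_{\snr})=\ParaCorr_\gamma$ together with the one-to-one parameterization from Lemma~\ref{lemma:param_space} handle the remaining bookkeeping.
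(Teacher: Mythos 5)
Your proof is correct and follows essentially the same route as the paper: the Markov chain $\Xm \leftrightarrow \rhov(\Xm) \leftrightarrow \Ym$ (data processing) gives the upper bound, and the Cholesky section $\corrv \mapsto \sqrt{\snr}\,\Chol{\Rrho(\gamma^{-1}\corrv)} \in \UT_{\snr}$ with $\rhov(\Xm(\corrv)) = \corrv$ gives achievability, exactly as in the paper's appendix. Your explicit treatment of the distribution-level pushforward and the chaining of the three maxima is a slightly more careful write-up of the same argument, not a different one.
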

\begin{proof}
  See Appendix~\ref{app:capa0}. 
\end{proof}
Then, the following corollary is straightforward. 
\begin{corollary}
  It is without loss of optimality to consider the signaling with an upper triangular
  matrix $\pmb{X}$ where the columns have a constant magnitude
  $\sqrt{\snr}$. It also implies that using $T$ transmit antennas
  out of $\nt\ge T$ is enough to achieve the capacity. 
\end{corollary}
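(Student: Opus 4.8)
The corollary is an immediate reading of Proposition~\ref{prop:capa0}, so my plan is simply to unpack the two characterizations it already establishes. First I would invoke the second equality in \eqref{eq:tmp1111}, namely $C = \frac{1}{T}\max_{\Xm\in\UT_{\snr}} I(\Xm; \yv_1,\ldots,\yv_{\nr})$, which asserts that the capacity is achieved by maximizing over the restricted set $\UT_{\snr}$. By the very definition of $\UT_{\snr}$, every matrix in this set satisfies $\|\xv_i\| = \sqrt{\snr}$ for all $i\in[T]$ (constant column magnitude) and $X_{ij}=0$ whenever $i>j$ (upper-triangular structure). Hence restricting attention to such matrices incurs no loss of optimality, which is precisely the first claim.

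For the second claim, I would rely on the explicit optimizer supplied by the proposition. If $\corrv^*$ maximizes the first expression in \eqref{eq:tmp1111}, then $\Xm^* = \sqrt{\snr}\,\Chol{\Rrho(\gamma^{-1}\corrv^*)}$ is optimal for the second. By the accompanying footnote, $\Chol{\Rrho(\gamma^{-1}\corrv^*)}$ is a $T\times T$ upper-triangular matrix occupying the top $T$ rows of the $\nt\times T$ input, while the remaining $\nt-T$ rows are identically zero. Since a zero row of $\Xm$ corresponds to a transmit antenna that contributes nothing to the product $\Hm\Xm$, only $T$ of the $\nt\ge T$ transmit antennas actually carry signal. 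Therefore $T$ antennas suffice to achieve capacity.

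There is no genuine obstacle here: the entire content of the corollary is already contained in Proposition~\ref{prop:capa0} and its footnote, and the only work is to translate the algebraic structure of $\UT_{\snr}$ and of the Cholesky factor into the operational statements about column norms, triangularity, and the number of active antennas. The one point I would state carefully is that the proposed $\Xm^*$ genuinely lies in $\UT_{\snr}$: this holds because $\Chol{\cdot}$ returns an upper-triangular matrix with nonnegative diagonal, and because $\rhov(\UT_{\snr}) = \ParaCorr_\gamma$ by Lemma~\ref{lemma:param_space}, so the maximizing $\corrv^*$ is indeed attained by an admissible input. With this observation in place the corollary follows without further computation.
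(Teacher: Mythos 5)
Your proposal is correct and follows essentially the same route as the paper, which presents this corollary as an immediate consequence of Proposition~\ref{prop:capa0}: the second equality in \eqref{eq:tmp1111} restricts the maximization to $\UT_{\snr}$, whose elements are by definition upper triangular with column norm $\sqrt{\snr}$ and hence have all rows beyond the $T$-th equal to zero. Your added check that $\Xm^* = \sqrt{\snr}\,\Chol{\Rrho(\gamma^{-1}\corrv^*)}$ genuinely lies in $\UT_{\snr}$ (via Lemma~\ref{lemma:param_space}) is a point the paper handles inside the proof of the proposition itself, so nothing is missing.
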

Note that the fact that using $T$ transmit antennas is enough to
achieve the capacity is the simple consequence of the
output distribution only depending on $\Xm^T \Xm$, which also holds in the unquantized
case~\cite{marzetta1999capacity}. 
Finally, we have the following results on the volume of the parameter space.
\begin{lemma}\label{lemma:regionTheta}
  Let the volume of the region $\ParaCorr^{(T)}_\gamma$ be
  $\Vol(\ParaCorr^{(T)}_\gamma) := \int_{\ParaCorr^{(T)}_\gamma}
  d\corrv$. Then, we have  
  \begin{equation}
    \Vol(\ParaCorr^{(T)}_1) := \int_{\ParaCorr^{(T)}_1} d \corrv = {{\pi}^{\binom{T}{2}+1 \over 2} \Gamma(T) \prod_{j=2}^{T-1}\Gamma({j\over2}) \over 2^{T-1} \left(\Gamma({T+1\over2})\right)^T}, \label{eq:vol}
  \end{equation}%
  and 
  \begin{equation}
    \Vol(\ParaCorr^{(T)}_\gamma) = \gamma^{\binom{T}{2}}
    \Vol(\ParaCorr^{(T)}_\gamma),\quad \gamma\in[0,1].
    \label{eq:vol_gamma}
  \end{equation}%
  In particular, $\Vol(\ParaCorr^{(2)}_1) = 2$,
  $\Vol(\ParaCorr^{(3)}_1) =
  {\pi^2\over2}$, and $\Vol(\ParaCorr^{(4)}_1) =
  {32\over27}\pi^2$. When
  $T$ is large, the log-volume satisfies
  \begin{equation}
    {1\over T} \log \Vol(\ParaCorr^{(T)}_1) = {T-1\over4} \log{ 2\pi \sqrt{e} \over
    T} - {\log e\over 8} + o(1). \label{eq:logVol}
  \end{equation}%
\end{lemma}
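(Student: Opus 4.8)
The plan is to prove the three claims in turn, with the closed form \eqref{eq:vol} as the crux. The scaling \eqref{eq:vol_gamma} is immediate: from the second description of the region, $\corrv\in\ParaCorr^{(T)}_\gamma$ iff $\Rrho(\gamma^{-1}\corrv)\succeq 0$ iff $\gamma^{-1}\corrv\in\ParaCorr^{(T)}_1$, so $\ParaCorr^{(T)}_\gamma=\gamma\,\ParaCorr^{(T)}_1$ is a dilation by $\gamma$ of a $\binom{T}{2}$-dimensional body, whence $\Vol(\ParaCorr^{(T)}_\gamma)=\gamma^{\binom{T}{2}}\Vol(\ParaCorr^{(T)}_1)$. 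The three small-$T$ values and \eqref{eq:logVol} then follow from \eqref{eq:vol} by substitution and by Stirling, respectively.

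For \eqref{eq:vol} I would use the Cholesky parametrization. By Lemma~\ref{lemma:param_space}, $\Rrho(\ParaCorr^{(T)}_1)$ is exactly the set of $T\times T$ positive-definite correlation matrices and $\corrv$ lists their off-diagonal entries. Writing $U=\Chol{\Rrho(\corrv)}$, the columns $\uv_1,\dots,\uv_T$ are unit vectors with $\uv_j$ supported on its first $j$ coordinates, $U_{jj}\ge 0$, and $\corr_{\{k,j\}}=\uv_k^\T\uv_j$. Thus $\uv_1=\ev_1$ is fixed and, up to a measure-zero boundary, $\ParaCorr^{(T)}_1$ is in bijection with the product of open upper hemispheres $\{\uv_j\in\Sphere_{j-1}:U_{jj}>0\}$, $j=2,\dots,T$.

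The heart of the argument is the Jacobian of this change of variables. Ordering the $\binom{T}{2}$ strict-upper entries $\{U_{ij}\}_{i<j}$ column by column, for each $j$ the correlation vector $(\corr_{\{1,j\}},\dots,\corr_{\{j-1,j\}})^\T$ equals $U_{j-1}^\T(U_{1j},\dots,U_{j-1,j})^\T$, where $U_{j-1}$ is the leading $(j-1)\times(j-1)$ block of $U$ and depends only on earlier columns. Hence the Jacobian matrix is block lower-triangular with diagonal blocks $U_{j-1}^\T$, giving determinant $\prod_{j=2}^{T}\det U_{j-1}=\prod_{j=2}^{T}\prod_{i=1}^{j-1}U_{ii}$. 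Converting the flat measure on each hemisphere into its surface measure via $\prod_{i=1}^{j-1}\d U_{ij}=U_{jj}\,\d\sigma_j$ and collecting the powers of each $U_{mm}$, I expect to obtain
\[
  \d\corrv=\prod_{m=2}^{T}U_{mm}^{\,T-m+1}\,\prod_{j=2}^{T}\d\sigma_j(\uv_j).
\]
The integrand factorizes across the independent spheres, so $\Vol(\ParaCorr^{(T)}_1)$ becomes a product of one-dimensional integrals, each computed by slicing, $\int_{\Sphere_{m-1},\,U_{mm}>0}U_{mm}^{\,p}\,\d\sigma=\tfrac12\Vol(\Sphere_{m-2})\,B\!\left(\tfrac{p+1}{2},\tfrac{m-1}{2}\right)$; with $p=T-m+1$ each factor simplifies to $\pi^{(m-1)/2}\,\Gamma(\tfrac{T-m+2}{2})/\Gamma(\tfrac{T+1}{2})$. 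Multiplying over $m=2,\dots,T$ gives $\Vol(\ParaCorr^{(T)}_1)=\pi^{\binom{T}{2}/2}\prod_{k=2}^{T}\Gamma(k/2)\big/\Gamma(\tfrac{T+1}{2})^{T-1}$, which reduces to \eqref{eq:vol} after one use of the Legendre duplication formula $\Gamma(T)=2^{T-1}\pi^{-1/2}\Gamma(\tfrac T2)\Gamma(\tfrac{T+1}{2})$.

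Finally, \eqref{eq:logVol} would follow by taking the logarithm of the closed form and applying Stirling's approximation to every Gamma factor, estimating $\sum_{k=2}^{T}\log\Gamma(k/2)$ by Euler--Maclaurin. I expect two places to carry the real difficulty: establishing the block-triangular Jacobian together with the hemisphere-to-surface-measure conversion that yields the clean factorization above, and the careful bookkeeping in the Stirling expansion needed to pin down the additive constant $-\tfrac{\log e}{8}$ in \eqref{eq:logVol}.
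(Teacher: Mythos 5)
Your proposal is correct and follows essentially the same route as the paper: the paper likewise parametrizes the region by the columns of the normalized upper-triangular (Cholesky) factor, obtains the same column-by-column Jacobian (its ball-coordinate density $\prod_{j}(1-\|\rv_j\|^2)^{(T-j)/2}$ is exactly your $\prod_m U_{mm}^{T-m+1}\,\d\sigma_m$ after the hemisphere-to-ball conversion), evaluates the same product of Beta integrals, and handles the large-$T$ constant via Barnes-G asymptotics, which is precisely the packaged form of your Stirling/Euler--Maclaurin plan, while your closed form matches \eqref{eq:vol} via the duplication formula. The one clean simplification you add is the direct dilation argument $\ParaCorr^{(T)}_\gamma=\gamma\,\ParaCorr^{(T)}_1$ for \eqref{eq:vol_gamma}, where the paper instead carries the $\gamma^{\binom{T}{2}}$ factor through the change of variables and passes to $\gamma\to1$ by monotone convergence.
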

\begin{proof}
  See Appendix~\ref{app:regionTheta}, where a more precise approximation
  of $\log \Vol(\ParaCorr^{(T)}_1)$ is derived. 
\end{proof}

\section{Coherent communication} \label{sec:coherent}

In this section, we consider the coherent case. First, we derive the asymptotic capacity when 
$\nr$ is large. Then, we examine the regimes where the SNR and the number of transmit antennas 
can also be large.    
\subsection{Channel capacity}

Let us recall that the coherent channel is characterized by the joint density
$p(y,\hv\cond\xv)$. To derive the capacity, we apply Proposition~\ref{lemma:CB} by
identifying $p(y,\hv\cond \xv)$ with $f(\yv;\thetav)$, $(y,\hv)$
with $\yv$, $\xv$ with $\thetav$, and $\Ball_{\nt}$ with $\Kc$. Let
us develop the Fisher information matrix explicitly as 
\begin{IEEEeqnarray}{rCl}
  \Jm_{y,\hv|\xv}(\xv) &:=& \EE_{y,\hv|\xv} \left[ \nabla_{\xv}(\ln
  p(y,\hv \cond \xv)) \nabla^\T_{\xv}(\ln p(y,\hv \cond \xv))
  \right]\\
  &=& \EE_{y,\hv|\xv} \left[ \nabla_{\xv}(\ln
  p(y \cond \xv,\hv)) \nabla^\T_{\xv}(\ln p(y \cond \xv,\hv))
  \right], 
\end{IEEEeqnarray}%
where the second equality is from $p(y,\hv\cond\xv) = p(\hv)
p(y\cond \xv,\hv)$. The following lemma provides the exact
expression of the determinant of the Fisher information matrix. 
\begin{lemma} \label{lemma:fisher_coh}
  Let us define 
  \begin{equation}
    \xi(s) := {\phi^2(s) \over Q(s)(1-Q(s))},
  \end{equation}%
  and 
  \begin{equation}
    \zeta_k(t) := \E [ S^k \xi(t S)], \quad t\in\mathbb{R},\
    \ k\in\mathbb{Z},\ S\sim\mathcal{N}(0,1).
  \end{equation}%
  Then, we have 
  \begin{equation}
    \det(\Jm_{y,\hv|\xv}(\xv)) = 
    \zeta_0(\|\xv\|)^{\nt-1} \zeta_2(\|\xv\|). \label{eq:det}
  \end{equation}%
\end{lemma}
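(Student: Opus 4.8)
The plan is to write the Fisher information as an explicit expectation, collapse it using the structure of the score, and then exploit the rotational invariance of the isotropic prior on $\hv$ to diagonalize the result. First I would compute the score function. Since $\ln p(y,\hv\cond\xv) = \ln p(\hv) + \ln Q(-y\hv^\T\xv)$ and $p(\hv)$ does not depend on $\xv$, only the second term survives differentiation. Using $Q'(u)=-\phi(u)$, the evenness of $\phi$, and $y^2=1$, the gradient is
\[
  \nabla_\xv \ln p(y,\hv\cond\xv) = {y\,\phi(\hv^\T\xv) \over Q(-y\hv^\T\xv)}\,\hv,
\]
so the rank-one outer product appearing in the Fisher information is ${\phi^2(\hv^\T\xv) \over Q(-y\hv^\T\xv)^2}\,\hv\hv^\T$.

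Next I would average over $y$ for fixed $(\hv,\xv)$. Writing $s=\hv^\T\xv$, the binary output takes values $\pm1$ with probabilities $Q(-s)$ and $Q(s)=1-Q(-s)$, hence
\[
  \EE_{y\cond\hv,\xv}\!\left[ {1 \over Q(-ys)^2} \right] = {1\over Q(-s)} + {1\over Q(s)} = {1 \over Q(s)(1-Q(s))}.
\]
Multiplying by $\phi^2(s)$ recovers exactly $\xi(\hv^\T\xv)$, so the $y$-averaged outer product is $\xi(\hv^\T\xv)\,\hv\hv^\T$ and the Fisher information reduces to $\Jm_{y,\hv|\xv}(\xv) = \EE_\hv\!\left[\xi(\hv^\T\xv)\,\hv\hv^\T\right]$ with $\hv\sim\mathcal{N}(0,\Id)$.

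The final step is to diagonalize this expectation. Let $\uv := \xv/\|\xv\|$ and decompose $\hv = S\uv + \hv_\perp$, where $S := \uv^\T\hv \sim \mathcal{N}(0,1)$ and $\hv_\perp := (\Id-\uv\uv^\T)\hv$. By the rotational invariance of the standard Gaussian, $S$ and $\hv_\perp$ are independent, and $\hv_\perp$ has mean $0$ and covariance $\Id-\uv\uv^\T$. Since $\xi(\hv^\T\xv)=\xi(\|\xv\|\,S)$ depends on $\hv$ only through $S$, the cross terms in $\hv\hv^\T = S^2\uv\uv^\T + S(\uv\hv_\perp^\T + \hv_\perp\uv^\T) + \hv_\perp\hv_\perp^\T$ average out, leaving
\[
  \Jm_{y,\hv|\xv}(\xv) = \EE[S^2\xi(\|\xv\|S)]\,\uv\uv^\T + \EE[\xi(\|\xv\|S)]\,(\Id-\uv\uv^\T) = \zeta_2(\|\xv\|)\,\uv\uv^\T + \zeta_0(\|\xv\|)\,(\Id-\uv\uv^\T).
\]
This matrix has eigenvalue $\zeta_2(\|\xv\|)$ along $\uv$ and eigenvalue $\zeta_0(\|\xv\|)$ with multiplicity $\nt-1$ on $\uv^\perp$, so $\det(\Jm_{y,\hv|\xv}(\xv)) = \zeta_0(\|\xv\|)^{\nt-1}\zeta_2(\|\xv\|)$, which is the claimed identity.

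The computation is essentially mechanical, and I do not expect a serious obstacle. The only points requiring care are the collapse of the two-point $y$-average via $Q(-s)=1-Q(s)$, and the claim that $\hv_\perp$ is independent of $S$ with covariance $\Id-\uv\uv^\T$ — both of which follow from the isotropy of $\mathcal{N}(0,\Id)$ and the idempotence of the projection $\uv\uv^\T$. The one genuinely delicate issue is the integrability needed to justify differentiating under the expectation and to guarantee finiteness of $\zeta_0,\zeta_2$; this is routine here because $\xi$ is bounded on compact sets and controlled by the Gaussian tails of $S$, so the interchange is legitimate on the interior of the input ball.
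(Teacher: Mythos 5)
Your proposal is correct and follows essentially the same route as the paper: compute the score of $\ln Q(-y\hv^\T\xv)$, collapse the two-point average over $y$ into $\xi(\hv^\T\xv)$, and use the isotropy of $\hv$ to diagonalize $\EE_\hv[\xi(\hv^\T\xv)\,\hv\hv^\T]$ in a basis aligned with $\xv$. Your projection decomposition $\hv = S\uv + \hv_\perp$ is just a coordinate-free phrasing of the paper's unitary change of basis $\gv = \Vm^\T\hv$ with $\Vm = [\xv/\|\xv\|\ \ \bar{\Vm}]$, yielding the same eigenvalues $\zeta_2(\|\xv\|)$ (multiplicity $1$) and $\zeta_0(\|\xv\|)$ (multiplicity $\nt-1$).
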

\begin{proof}
  See Appendix~\ref{app:fisher_coh}. 
\end{proof}

Now, we are ready to present the main result for the coherent case. 
\begin{theorem}\label{thm:coh}
  For any $\snr>0$, the asymptotic capacity of the coherent 1-bit
  MIMO channel when $\nr\to\infty$ is 
  \begin{equation}
    C = {\nt\over2} \log {\nr\over 2\pi e} + \log
    \Vol(\mathcal{B}_{\nt}) + \log \alpha_{\snr,\nt}
    + o(1), \label{eq:capa_coh}
  \end{equation}
  where 
  \begin{equation}
    \alpha_{\snr,\nt} := 
    \int_{0}^{\sqrt{\snr}} \zeta_0(r)^{\nt-1 \over 2} \zeta_2(r)^{1\over2} \nt
    r^{\nt-1} \d r. \label{eq:alpha_coh}
  \end{equation}
  The corresponding optimal input distribution is isotropic~(rotation invariant) with 
  \begin{equation}
    p_{\|\xv\|}(r) \propto \zeta_0(r)^{\nt-1 \over 2} \zeta_2(r)^{1\over2}
    r^{\nt-1}, \quad r\in[0,1].  \label{eq:oi_coh}
  \end{equation}%
\end{theorem}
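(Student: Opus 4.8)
The plan is to apply Proposition~\ref{lemma:CB} (Clarke--Barron) directly, with the coherent channel identified as its i.i.d.\ sampling model. Concretely, I would set $f(\yv;\thetav)=p(y,\hv\cond\xv)$ with parameter $\thetav=\xv$, take the ``sample'' to be the pair $(y,\hv)$, and take the compact parameter set to be $\Kc=\sqrt{\snr}\,\Ball_{\nt}$. Since the coherent capacity reduces to $C=\max_{\xv\in\sqrt{\snr}\Ball_{\nt}} I(\xv;\yv,\Hm)$ and the $\nr$ antenna rows $(y_i,\hv_i)$ are i.i.d.\ copies drawn from $p(y,\hv\cond\xv)$, the quantity $I(\xv;\yv,\Hm)$ coincides with $I(\thetav;\yv_1,\ldots,\yv_{\nr})$ in the notation of Proposition~\ref{lemma:CB}, and the maximization over the input distribution of $\xv$ supported on $\Kc$ is exactly the required maximization over priors. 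Before invoking the proposition I would check its hypotheses: identifiability of the parameterization is precisely Lemma~\ref{lemma:121_coh}; the set $\sqrt{\snr}\Ball_{\nt}$ is compact and lies in the interior of $\Omega=\mathbb{R}^{\nt\times1}$; the remaining smoothness/regularity conditions (deferred to the appendix) must then be verified for the mixed discrete--continuous density $p(y,\hv\cond\xv)=p(\hv)\,Q(-y\hv^\T\xv)$.

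Granting the hypotheses, Proposition~\ref{lemma:CB} yields $\dim(\Kc)=\nt$ and
\[
  C = {\nt\over2}\log{\nr\over2\pi e} + \log\int_{\sqrt{\snr}\Ball_{\nt}} \sqrt{\det(\Jm_{y,\hv|\xv}(\xv))}\,\d\xv + o(1).
\]
At this point I would substitute the determinant from Lemma~\ref{lemma:fisher_coh}, namely $\det(\Jm_{y,\hv|\xv}(\xv))=\zeta_0(\|\xv\|)^{\nt-1}\zeta_2(\|\xv\|)$, so that the integrand $\sqrt{\det(\Jm_{y,\hv|\xv}(\xv))}=\zeta_0(\|\xv\|)^{(\nt-1)/2}\zeta_2(\|\xv\|)^{1/2}$ is isotropic, depending on $\xv$ only through its norm.

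The remaining step is to evaluate the $\nt$-dimensional integral in spherical coordinates. Writing $r=\|\xv\|$ and integrating over the sphere of radius $r$, whose surface area is $\Vol(\Sphere_{\nt-1})\,r^{\nt-1}$, the integral becomes $\Vol(\Sphere_{\nt-1})\int_0^{\sqrt{\snr}} \zeta_0(r)^{(\nt-1)/2}\zeta_2(r)^{1/2}\,r^{\nt-1}\,\d r$. Using $\Vol(\Sphere_{\nt-1})=\nt\,\Vol(\Ball_{\nt})$ from Lemma~\ref{lemma:vol_ball} and absorbing the factor $\nt$ into the radial integrand recovers exactly $\Vol(\Ball_{\nt})\,\alpha_{\snr,\nt}$ with $\alpha_{\snr,\nt}$ as in \eqref{eq:alpha_coh}; taking logarithms gives \eqref{eq:capa_coh}. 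For the optimal input, Proposition~\ref{lemma:CB} identifies the asymptotically optimal prior as Jeffreys' prior $p^*(\xv)\propto\sqrt{\det(\Jm_{y,\hv|\xv}(\xv))}$, which is rotation invariant since it depends only on $\|\xv\|$; the induced density of the radius $r=\|\xv\|$ then picks up the Jacobian factor $r^{\nt-1}$, producing \eqref{eq:oi_coh}.

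I expect the only genuine difficulty to be the verification of the Clarke--Barron regularity conditions for this specific model: the output $y$ is discrete while $\hv$ is Gaussian, so one must confirm the smoothness of $\xv\mapsto p(y,\hv\cond\xv)$, the continuity and positive-definiteness of $\Jm_{y,\hv|\xv}(\xv)$ over the compact set, and the absence of pathologies as $\xv$ approaches the boundary sphere $\sqrt{\snr}\,\Sphere_{\nt-1}$. Once these are in place, the reduction of the $\nt$-dimensional integral to the one-dimensional quantity $\alpha_{\snr,\nt}$ is entirely routine.
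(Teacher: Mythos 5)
Your proposal is correct and follows essentially the same route as the paper's proof: the same identification of $(y,\hv)$ as the sample, $\xv$ as the parameter, and $\sqrt{\snr}\,\Ball_{\nt}$ as $\Kc$ in Proposition~\ref{lemma:CB}, followed by substitution of the Fisher determinant from Lemma~\ref{lemma:fisher_coh}, the spherical-coordinate reduction using $\Vol(\Sphere_{\nt-1})=\nt\Vol(\Ball_{\nt})$, and the identification of the rotation-invariant Jeffreys' prior with radial marginal \eqref{eq:oi_coh}. The paper likewise defers the regularity-condition verification (including the mixed discrete--continuous nature of $p(y,\hv\cond\xv)$) to the appendix, exactly as you anticipate.
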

\begin{proof}
  This theorem is an application of Proposition~\ref{lemma:CB} and
  Lemma~\ref{lemma:fisher_coh}. In the appendix, we check that the
  regularity conditions of Proposition~\ref{lemma:CB} are satisfied. 
  Here, $\Kc = \Xc_{\snr} = \sqrt{\snr}\Ball_{\nt}$ and
  $\mathrm{dim}(\Kc)= \nt$. 
  We can therefore apply \eqref{eq:tmp291} in which the integral
  becomes the following after the change of variable to the spherical
  coordinate
  \begin{IEEEeqnarray}{rCl}
    \int_{\sqrt{\snr}\Ball_{\nt}} \sqrt{
    \det(\Jm_{y,\hv|\xv}(\xv))}
    \d \xv &=& \int_{0}^{\sqrt{\snr}} \int_{{\Sphere_{\nt-1}}}
    \zeta_0(r)^{\nt-1 \over 2} \zeta_2(r)^{1\over2} r^{\nt-1} \d \pmb{\omega} \d r \\
    &=& \Vol(\mathcal{B}_{\nt}) \int_{0}^{\sqrt{\snr}} 
    \zeta_0(r)^{\nt-1 \over 2} \zeta_2(r)^{1\over2} \nt r^{\nt-1} \d r, 
  \end{IEEEeqnarray}%
  where we used the fact that 
  \begin{IEEEeqnarray}{rCl}
    \int_{{\Sphere_{\nt-1}}} \d \pmb{\omega} &=& \Vol({\Sphere_{\nt-1}}) =
    \nt \Vol(\mathcal{B}_{\nt}). 
  \end{IEEEeqnarray}%

  From Proposition~\ref{lemma:CB}, the optimal input distribution is
  $\propto \sqrt{ \det(\Jm_{y,\hv|\xv}(\xv))}$ which only depends
  on $\|\xv\|$ according to \eqref{eq:det}. Therefore, the distribution is rotation invariant. 
  Changing to the spherical coordinate and
  marginalizing, we obtain the distribution on the magnitude of the
  input as \eqref{eq:oi_coh}. 
\end{proof}

\subsection{Asymptotic SNR and $\nt$ regimes}

The main quantity in the capacity expression \eqref{eq:capa_coh} is
$\alpha_{\snr,\nt}$ which can be evaluated numerically. In the
following, we provide the exact equivalent in the extreme SNR and
$\nt$ regimes. First, we present the asymptotic behaviors of $\zeta_0$
and $\zeta_2$. 

\begin{lemma}\label{lemma:zeta}
  When $r\to0$, we have 
\begin{IEEEeqnarray}{rCl}
  \zeta_0(r) = {2\over\pi} + O(r^2), \label{eq:zeta0_r0}\\
\zeta_2(r) = {2\over\pi} + O(r^2).\label{eq:zeta2_r0} 
\end{IEEEeqnarray}%
When $r\to\infty$, we have 
\begin{IEEEeqnarray}{rCl}
  \zeta_0(r) &=& {A_0 \over r}(1 + O(1/r^2)),\label{eq:zeta0_r00}\\ 
\zeta_2(r) &=& {A_2 \over r^3}(1 + O(1/r^2)),\label{eq:zeta2_r00} 
\end{IEEEeqnarray}%
where $A_0:= {1\over\sqrt{2\pi}} \int_{-\infty}^{+\infty} \xi(u) du \approx
0.3842$ and $A_2:= {1\over\sqrt{2\pi}} \int_{-\infty}^{+\infty} \xi(u) u^2 du
\approx 0.1390$.
\end{lemma}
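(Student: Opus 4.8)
The plan is to work directly from the integral representation
\begin{equation}
  \zeta_k(t) = \E[S^k \xi(tS)] = \int_{-\infty}^{\infty} s^k \xi(ts)\, \phi(s)\, \d s,
\end{equation}
and to exploit two structural facts about $\xi$. First, $\xi$ is even: since $\phi$ is even and $Q(-s)=1-Q(s)$, both the numerator $\phi^2(s)$ and the denominator $Q(s)(1-Q(s))$ are invariant under $s\mapsto -s$. Second, $\xi$ is smooth, bounded, and decays rapidly; indeed $Q(s)(1-Q(s))\sim \phi(s)/s$ as $s\to\infty$ gives $\xi(s)\sim s\,\phi(s)$, so $u^k\xi(u)$ is integrable for every $k\ge 0$ and all the moments $\int_{-\infty}^{\infty} u^k\xi(u)\,\d u$ converge. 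A direct evaluation at $s=0$ gives $\xi(0)=\phi^2(0)/(Q(0)(1-Q(0)))=2/\pi$, which will be the common limit in the small-$r$ regime.

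For the regime $r\to 0$, I would Taylor-expand $\xi(ts)$ about $t=0$. Because $\xi$ is even and smooth, $\xi'(0)=0$, and a second-order Taylor bound gives $|\xi(ts)-\tfrac{2}{\pi}|\le \tfrac{M}{2}(ts)^2$ with $M:=\sup_{\mathbb{R}}|\xi''|<\infty$. Integrating this bound against $\phi(s)$ and against $s^2\phi(s)$, and using the finite Gaussian moments $\E[S^2]=1$ and $\E[S^4]=3$, yields $\zeta_0(r)=\tfrac{2}{\pi}+O(r^2)$ and $\zeta_2(r)=\tfrac{2}{\pi}+O(r^2)$, which is \eqref{eq:zeta0_r0}--\eqref{eq:zeta2_r0}.

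For the regime $r\to\infty$, the key move is the change of variable $u=ts$, which transfers the slowly varying weight onto the Gaussian factor:
\begin{equation}
  \zeta_k(t) = \frac{1}{t^{k+1}} \int_{-\infty}^{\infty} u^k\, \xi(u)\, \phi(u/t)\, \d u.
\end{equation}
Since $\phi$ is even and smooth, $\phi(u/t)=\phi(0)\bigl(1-\tfrac{u^2}{2t^2}+O(u^4/t^4)\bigr)$, and only even powers of $1/t$ appear. Substituting and using $\phi(0)=1/\sqrt{2\pi}$, the leading term is $A_k/t^{k+1}$ with $A_k=\tfrac{1}{\sqrt{2\pi}}\int_{-\infty}^{\infty} u^k\xi(u)\,\d u$, matching the definitions of $A_0$ and $A_2$, while the first correction contributes a factor $1+O(1/t^2)$. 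This gives \eqref{eq:zeta0_r00}--\eqref{eq:zeta2_r00}.

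I expect the main obstacle to be rigorously justifying the interchange of the Taylor expansion with the integral in the large-$r$ regime and controlling the remainder uniformly in $t$. The correction terms involve higher moments $\int_{-\infty}^{\infty} u^{k+2m}\xi(u)\,\d u$ weighted by the Taylor remainder of $\phi(u/t)$; since $|\phi(u/t)|\le 1/\sqrt{2\pi}$ and the remainder is bounded by a polynomial in $u$ times $t^{-2}$, the super-exponential decay $\xi(u)\sim u\,\phi(u)$ makes every such integral finite and bounded uniformly in $t$, which legitimizes the expansion by dominated convergence. The evenness of $\phi$ is precisely what guarantees that the relative error is $O(1/t^2)$ rather than $O(1/t)$.
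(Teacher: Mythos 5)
Your proposal is correct and follows essentially the same route as the paper's proof: the small-$r$ expansion of $\xi$ around $0$ (using $\xi(0)=2/\pi$ and evenness) and, for large $r$, the change of variable $u=ts$ followed by the expansion $\phi(u/t)=\phi(0)(1+O(u^2/t^2))$, with the moments of $\xi$ giving $A_0$ and $A_2$. Your explicit verification that $\xi$ is even, that $\xi(u)\sim u\,\phi(u)$ so all moments converge, and that the Taylor remainders are uniformly controlled simply makes rigorous what the paper states more tersely.
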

\begin{proof}
  See Appendix~\ref{app:zeta}. 
\end{proof}

Then, applying the above result, we can derive the following two
propositions. 
\begin{proposition}[Low and high SNR regimes] \label{prop:coh_snr}
  The term $\alpha_{\snr,\nt}$ has the following
  asymptotic behaviors. 
  \begin{itemize}
    \item In the low SNR regime such that $\nt\,\snr\to0$, we have 
      \begin{equation}
        {\alpha_{\snr,\nt} \sim 
        \left({2\over\pi}\snr\right)^{\nt\over2}}.
        \label{eq:alpha_coh_lowsnr} 
      \end{equation}
    \item In the high SNR regime such that ${\snr\over\nt}
      \to\infty$, we have 
      \begin{equation}
        \alpha_{\snr,\nt} \sim \begin{cases}  
          \displaystyle \int_{0}^{\infty} \sqrt{\zeta_2(r)} \, \d r, &
          \text{if } \nt=1, \\
          A_0^{\nt-1\over 2}
          A_2^{1 \over 2}  \ln \snr, & \text{
          if } \nt=2, \\ {2 \nt \over \nt-2} A_0^{\nt-1\over 2} A_2^{1
          \over 2}  \snr^{{\nt -2\over 4} 
          }, &\text{ if } \nt \ge 3. \end{cases} \label{eq:coh_highsnr}
        \end{equation}
    \end{itemize}
\end{proposition}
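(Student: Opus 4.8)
The plan is to substitute the two-sided asymptotics of $\zeta_0$ and $\zeta_2$ from Lemma~\ref{lemma:zeta} directly into the definition \eqref{eq:alpha_coh} of $\alpha_{\snr,\nt}$ and then control the resulting error terms separately in each regime. The only analytic input needed beyond Lemma~\ref{lemma:zeta} is that $\zeta_0,\zeta_2$ are strictly positive and continuous, so the integrand is well behaved on $(0,\sqrt{\snr}]$.

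For the low SNR regime I would exploit that the entire integration range $r\in[0,\sqrt{\snr}]$ lies in a shrinking neighborhood of the origin, so the expansions $\zeta_0(r)=\frac{2}{\pi}+O(r^2)$ and $\zeta_2(r)=\frac{2}{\pi}+O(r^2)$ apply uniformly. Writing $\zeta_0(r)^{(\nt-1)/2}\zeta_2(r)^{1/2}=\left(\frac{2}{\pi}\right)^{\nt/2}\bigl(1+O(r^2)\bigr)^{\nt/2}$ and bounding $\bigl(1+O(r^2)\bigr)^{\nt/2}$ between $e^{\pm C\nt r^2/2}$, the error factor is $1+o(1)$ uniformly over the range precisely because $\nt r^2\le \nt\,\snr\to0$; this is exactly where the hypothesis $\nt\,\snr\to0$ (rather than merely $\snr\to0$) enters. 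Pulling out the constant and using $\nt\int_0^{\sqrt{\snr}}r^{\nt-1}\,\d r=\snr^{\nt/2}$ then yields $\alpha_{\snr,\nt}\sim\left(\frac{2}{\pi}\snr\right)^{\nt/2}$.

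For the high SNR regime I would split the integral at a fixed large threshold $R_0$. On $[0,R_0]$ the integrand is a bounded function independent of $\snr$, contributing an $O(1)$ constant. On $[R_0,\sqrt{\snr}]$ I substitute the tail expansions, giving an integrand $\nt A_0^{(\nt-1)/2}A_2^{1/2}\,r^{(\nt-4)/2}\bigl(1+O(\nt/r^2)\bigr)$, where the exponent $(\nt-4)/2$ arises from combining $r^{-(\nt-1)/2}$, $r^{-3/2}$ and $r^{\nt-1}$. The three cases in \eqref{eq:coh_highsnr} are then dictated purely by whether $\int^{\sqrt{\snr}}r^{(\nt-4)/2}\,\d r$ converges, diverges logarithmically, or is dominated by its upper endpoint: for $\nt=1$ the integrand is exactly $\sqrt{\zeta_2(r)}$, which is integrable on $[0,\infty)$ since $\sqrt{\zeta_2(r)}\sim\sqrt{A_2}\,r^{-3/2}$, so $\alpha_{\snr,1}\to\int_0^\infty\sqrt{\zeta_2(r)}\,\d r$; for $\nt=2$ the tail integral is $\int^{\sqrt{\snr}}r^{-1}\,\d r=\frac{1}{2}\ln\snr+O(1)$, and multiplying by $\nt A_0^{1/2}A_2^{1/2}=2A_0^{1/2}A_2^{1/2}$ gives $A_0^{1/2}A_2^{1/2}\ln\snr$; and for $\nt\ge3$ the exponent exceeds $-1$, so $\nt\int^{\sqrt{\snr}}r^{(\nt-4)/2}\,\d r\sim\frac{2\nt}{\nt-2}\snr^{(\nt-2)/4}$, dominated by the upper limit, reproducing the stated constant $\frac{2\nt}{\nt-2}A_0^{(\nt-1)/2}A_2^{1/2}$. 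In the last two cases the $O(1)$ contribution from $[0,R_0]$ is negligible against the growing (respectively logarithmic) main term.

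The main obstacle is the uniform control of the relative error produced by raising $\zeta_0(r)=\frac{A_0}{r}(1+O(1/r^2))$ to the power $(\nt-1)/2$: the binomial (equivalently logarithmic) expansion yields a factor $1+O(\nt/r^2)$, which is $1+o(1)$ only when $r\gg\sqrt{\nt}$. Since for $\nt\ge2$ the integral is dominated by $r$ near $\sqrt{\snr}$, this is exactly what forces the hypothesis $\snr/\nt\to\infty$, mirroring the role of $\nt\,\snr\to0$ in the low SNR case. The delicate point is to show that this relative error integrates to a genuinely lower-order contribution over the dominant region while the neglected region $[0,R_0]$ stays negligible; once that is established, the leading constants follow by elementary integration of the power laws above.
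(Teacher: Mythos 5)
Your proposal is correct and follows essentially the same route as the paper: substitute the small-$r$ expansions of $\zeta_0,\zeta_2$ from Lemma~\ref{lemma:zeta} and integrate (with the hypothesis $\nt\,\snr\to0$ controlling the relative error) for the low SNR case, and use the tail asymptotics to reduce the high SNR case to integrating $r^{(\nt-4)/2}$, with the three cases determined by integrability at the origin versus divergence at the upper endpoint. The only cosmetic difference is that you split at a fixed threshold $R_0$ in all three high-SNR cases, whereas the paper splits only for $\nt=2$ and otherwise invokes an equivalence-of-integrals argument; both handle the uniformity in $\nt$ at the same level of rigor.
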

\begin{proof}
  See Appendix~\ref{app:coh_snr}. 
\end{proof}
From the above results, we can obtain some interesting observations.
First, at low SNR, the capacity scales\footnote{It is important to note that in the
considered setting of $\nr\to\infty$, the capacity goes to infinity
even at low SNR. } as ${\nt\over2} \log (\snr\,\nr)$ ignoring other
terms depending on $\nt$. The impact of $\snr$ on the capacity is
through the term $\snr\,\nr$. Then, at high SNR, the behavior of the
asymptotic capacity depends on $\nt$. Remarkably, when $\nt = 1$, the
impact of $\snr$ is bounded since $\alpha_{\snr,\nt}$ converges to a
constant. When $\nt=2$, the impact of $\snr$ becomes $\log \log
\snr$. When $\nt=3$, it becomes ${\nt-2\over4} \log \snr$, such that
the capacity scales approximately as ${\nt\over2} \log (\sqrt{\snr}\,\nr)$.

\begin{proposition}[Large $\nt$ regime] \label{prop:coh_nt}
  When $\nt\to \infty$, the term $\alpha_{\snr,\nt}$ has the following
  asymptotic behaviors. 
  \begin{itemize}
    \item For fixed $\snr$, we have 
      \begin{equation}
        \alpha_{\snr,\nt} \sim 
        {\zeta_0(\sqrt{\snr})^{1\over2} \zeta_2(\sqrt{\snr})^{1\over2} \over
        \zeta_0(\sqrt{\snr}) + {\sqrt{\snr}\over2}
        \zeta_0'(\sqrt{\snr})} \left(\snr\, \zeta_0(\sqrt{\snr})\right)^{\nt
        \over 2}. \label{eq:tmp488}
        \end{equation}
        \item For $\snr$ vanishing with $\nt$ such that $\snr\,\nt\to0$, we have \eqref{eq:alpha_coh_lowsnr}. 
        \item For $\snr$ increasing with $\nt$ such that $\nt/\snr \to0$, we have 
          \begin{equation}
            \alpha_{\snr,\nt} \sim  {2} \sqrt{A_0A_2}   (A_0^2
            \snr)^{{\nt -2\over 4}}. 
          \end{equation}
      \end{itemize}
\end{proposition}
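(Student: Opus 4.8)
The plan is to treat $\alpha_{\snr,\nt}$ in \eqref{eq:alpha_coh} as a Laplace-type integral in the large parameter $\nt$, exploiting the fact that the factor $r^{\nt-1}$ forces the mass to concentrate at the upper endpoint $r=\sqrt{\snr}$. First I would collect the $\nt$-th power by writing
\[
\alpha_{\snr,\nt} = \nt\int_0^{\sqrt{\snr}} \sqrt{\zeta_2(r)}\, w(r)^{\nt-1}\,\d r,\qquad w(r):=r\sqrt{\zeta_0(r)},
\]
and observe that $w$ is increasing on $(0,\sqrt{\snr}]$ (by Lemma~\ref{lemma:zeta}, $r^2\zeta_0(r)$ rises from $\sim\frac2\pi r^2$ near $0$ to $\sim A_0 r$ for large $r$). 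Hence in every regime the integral localises in a shrinking neighbourhood of $r=\sqrt{\snr}$, and the whole task reduces to replacing $\zeta_0,\zeta_2$ by the appropriate expansion of Lemma~\ref{lemma:zeta} there and controlling the resulting multiplicative errors once they are raised to the power $\sim\nt/2$.

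For fixed $\snr$ I would perform a first-order endpoint Laplace expansion. Writing $w(r)^{\nt-1}=\exp((\nt-1)\ln w(r))$ and Taylor expanding $\ln w$ about $\sqrt{\snr}$ as $\ln w(\sqrt{\snr}) - (\ln w)'(\sqrt{\snr})(\sqrt{\snr}-r)+\cdots$, the local integral is governed by an exponential of rate $(\nt-1)(\ln w)'(\sqrt{\snr})$, giving
\[
\alpha_{\snr,\nt}\sim \frac{\sqrt{\zeta_2(\sqrt{\snr})}\,w(\sqrt{\snr})^{\nt-1}}{(\ln w)'(\sqrt{\snr})}.
\]
Substituting $w(\sqrt{\snr})^{\nt-1}=(\snr\,\zeta_0(\sqrt{\snr}))^{(\nt-1)/2}$ and $(\ln w)'(\sqrt{\snr})=\frac1{\sqrt{\snr}}+\frac{\zeta_0'(\sqrt{\snr})}{2\zeta_0(\sqrt{\snr})}$ and simplifying reproduces exactly \eqref{eq:tmp488}; here the positivity of the denominator $\zeta_0(\sqrt{\snr})+\frac{\sqrt{\snr}}2\zeta_0'(\sqrt{\snr})$ (equivalently $(\ln w)'(\sqrt{\snr})>0$) is precisely the local strict monotonicity of $w$ noted above, and it must be confirmed from the explicit $\zeta_0$.

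For the two remaining regimes the mechanism is identical but I substitute the corresponding expansions of Lemma~\ref{lemma:zeta}. When $\snr\,\nt\to0$, on all of $[0,\sqrt{\snr}]$ we have $\zeta_0(r)=\frac2\pi(1+O(\snr))$ and $\zeta_2(r)=\frac2\pi(1+O(\snr))$, so $\zeta_0(r)^{(\nt-1)/2}\zeta_2(r)^{1/2}=(\tfrac2\pi)^{\nt/2}(1+O(\snr))^{\nt/2}=(\tfrac2\pi)^{\nt/2}(1+o(1))$ because $(1+O(\snr))^{\nt/2}=e^{O(\nt\snr)}=1+o(1)$; pulling this out leaves $\nt\int_0^{\sqrt{\snr}}r^{\nt-1}\,\d r=\snr^{\nt/2}$, which recovers \eqref{eq:alpha_coh_lowsnr}. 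When $\nt/\snr\to0$, I would split at a fixed large $r_0$, bound the $[0,r_0]$ part by $O(\nt\,w(r_0)^{\nt-1})$ and discard it since $w(r_0)\ll w(\sqrt{\snr})$, and on the dominant part use $\zeta_0(r)=\frac{A_0}r(1+O(1/r^2))$, $\zeta_2(r)=\frac{A_2}{r^3}(1+O(1/r^2))$. The integrand then collapses to $A_0^{(\nt-1)/2}A_2^{1/2}\cdot\nt\,r^{\nt/2-2}(1+o(1))$, and since $\int_0^{\sqrt{\snr}}\nt\,r^{\nt/2-2}\,\d r=\frac{2\nt}{\nt-2}\snr^{\nt/4-1/2}\to2\,\snr^{\nt/4-1/2}$, the product with $A_0^{(\nt-1)/2}A_2^{1/2}$ regroups into $2\sqrt{A_0A_2}(A_0^2\snr)^{(\nt-2)/4}$.

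The main obstacle is uniform error control: in each case the correction factors $(1+O(r^2))$ or $(1+O(1/r^2))$ are harmless pointwise but are raised to a power growing like $\nt/2$, so I must show they stay $1+o(1)$ over the dominant region. This is exactly where the regime hypotheses enter---$\nt\,\snr\to0$ kills $(1+O(\snr))^{\nt/2}$ in the low regime, while $\nt/\snr\to0$ kills $(1+O(1/\snr))^{\nt/2}$ near the endpoint in the high regime---and it forces me to make the localisation near $\sqrt{\snr}$ quantitative (via the monotonicity of $w$ together with an explicit tail bound on the Laplace integral) rather than heuristic, as well as to verify the strict positivity of $(\ln w)'(\sqrt{\snr})$ needed for the fixed-$\snr$ expansion.
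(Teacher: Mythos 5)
Your proposal is correct and follows essentially the same route as the paper: for fixed $\snr$ the paper also rewrites $\alpha_{\snr,\nt}$ as $\nt\int_0^{\sqrt{\snr}} g(r)e^{(\nt-1)f(r)}\,\d r$ with $f(r)=\ln\bigl(r\sqrt{\zeta_0(r)}\bigr)$ strictly increasing and applies the endpoint Laplace expansion, and for the two extreme-SNR regimes it likewise reuses the arguments of Proposition~\ref{prop:coh_snr}, with the hypotheses $\nt\,\snr\to0$ and $\nt/\snr\to0$ playing exactly the role you identify of keeping the $(1+O(\cdot))^{\nt/2}$ error factors at $1+o(1)$.
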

\begin{proof}
  See Appendix~\ref{app:coh_nt}. 
\end{proof}

The following result is straightforward from \eqref{eq:tmp488}, 
\eqref{eq:capa_coh}, and the log-volume of the
unit ball \eqref{eq:logVol-Ball}.
\begin{corollary}
  In the large $\nr$ regime, for a given $\snr>0$ and when
  $\nt\to\infty$, we have 
  \begin{equation}
    C = {\nt\over2} \log \left({\nr\over\nt}\snr\,
    \zeta_0(\sqrt{\snr}) \right) - \log\sqrt{\pi \nt} +
    O(1). 
  \end{equation}%
\end{corollary}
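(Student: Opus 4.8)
The plan is to prove this by direct substitution of the three cited ingredients into the capacity formula \eqref{eq:capa_coh} and then collecting the terms that scale with $\nt$. Starting from
\begin{equation}
  C = {\nt\over2} \log {\nr\over 2\pi e} + \log \Vol(\Ball_{\nt}) + \log \alpha_{\snr,\nt} + o(1),
\end{equation}
I would replace $\log\Vol(\Ball_{\nt})$ by its large-$\nt$ expansion \eqref{eq:logVol-Ball} and $\alpha_{\snr,\nt}$ by its large-$\nt$ equivalent \eqref{eq:tmp488}, and then merge the two resulting $\frac{\nt}{2}\log(\cdot)$ contributions into a single logarithm.

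First I would combine the two $\frac{\nt}{2}\log$ contributions coming from the ball. Inserting \eqref{eq:logVol-Ball} produces the term ${\nt\over2}\log{2\pi e\over\nt}$, and since
\begin{equation}
  {\nt\over2}\log{\nr\over2\pi e} + {\nt\over2}\log{2\pi e\over\nt} = {\nt\over2}\log{\nr\over\nt},
\end{equation}
the factors $2\pi e$ cancel. The expansion \eqref{eq:logVol-Ball} also contributes the explicit term $-\log\sqrt{\pi\nt}$, which I keep separate because it grows like $\frac12\log\nt$ and is therefore not $O(1)$.

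Next I would take the logarithm of the equivalent in \eqref{eq:tmp488}. Since $a\sim b$ means $a/b\to1$, it implies $\log a=\log b+o(1)$, so
\begin{equation}
  \log\alpha_{\snr,\nt} = {\nt\over2}\log\bigl(\snr\,\zeta_0(\sqrt{\snr})\bigr) + \log c_{\snr} + o(1),
\end{equation}
where $c_{\snr}$ denotes the $\nt$-independent prefactor appearing in \eqref{eq:tmp488}. For fixed $\snr>0$ this prefactor is a finite positive constant, so $\log c_{\snr}=O(1)$. Collecting the remaining $\frac{\nt}{2}\log$ terms via
\begin{equation}
  {\nt\over2}\log{\nr\over\nt} + {\nt\over2}\log\bigl(\snr\,\zeta_0(\sqrt{\snr})\bigr) = {\nt\over2}\log\left({\nr\over\nt}\,\snr\,\zeta_0(\sqrt{\snr})\right),
\end{equation}
and absorbing all bounded and vanishing terms into $O(1)$ yields the claimed expression.

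The computation is essentially bookkeeping, so the only point requiring care is the legitimacy of the double asymptotic: \eqref{eq:capa_coh} is an expansion in $\nr$ for fixed $\nt$, whereas \eqref{eq:tmp488} and \eqref{eq:logVol-Ball} are expansions in $\nt$. I read the corollary as the statement obtained after the $\nr\to\infty$ limit has been taken (so the $o(1)$ in $\nr$ is discarded) and the residual $\nt$-dependence is then expanded; the $O(1)$ collects $\log c_{\snr}$, the constant $\tfrac12\log\pi$ hidden in the volume term, and the error terms. The one substantive check is that the denominator $\zeta_0(\sqrt{\snr})+\frac{\sqrt{\snr}}{2}\zeta_0'(\sqrt{\snr})$ inside $c_{\snr}$ is finite and nonzero for every fixed $\snr>0$, guaranteeing $\log c_{\snr}=O(1)$; this follows from the regularity of $\zeta_0$ already exploited in establishing \eqref{eq:tmp488}, together with the positivity of $\zeta_0$ and $\zeta_2$.
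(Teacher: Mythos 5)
Your proposal is correct and is exactly the paper's argument: the paper proves this corollary by direct substitution of \eqref{eq:tmp488}, \eqref{eq:capa_coh}, and \eqref{eq:logVol-Ball}, merging the $\frac{\nt}{2}\log(\cdot)$ terms and absorbing the constant prefactor into $O(1)$, just as you do. Your extra check that the denominator $\zeta_0(\sqrt{\snr})+\frac{\sqrt{\snr}}{2}\zeta_0'(\sqrt{\snr})$ is positive (which follows from $f'(r)>0$ in the Laplace-method step behind \eqref{eq:tmp488}) is a sound detail the paper leaves implicit.
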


\begin{remark}
  We can verify from \eqref{eq:capa_coh} and the subsequent asymptotic
  analyses in this section that the dominant terms in the capacity can be 
  concisely written as ${\nt\over2}\log {\snr^{\beta}\nr \over \nt}$
  with some $\beta\le1$. Note that
  it is equivalent to the sum capacity of $\nt$ parallel Gaussian
  channels, each with an SNR of ${\snr^{\beta} \over \nt}\nr$. In contrast, the
  capacity of the unquantized MIMO channel is given by~\cite{telatar1999capacity}
  \begin{equation}
    \E \left[ {1\over2}\log\det\left(\Id + {\snr\over\nt}\Hm^\T
    \Hm\right) \right] = {\nt\over 2} \log {\snr\, \nr \over \nt} +
    o(1),
  \end{equation}%
  when $\nr$ is large. 
\end{remark}

\section{Non-coherent communication: Special cases with $T=2, 3$}    \label{sec:T=23}

In the remainder of the paper, we focus on the non-coherent case, which is
more involved than the coherent case due to the complexity of
multi-dimensional orthant probability. We begin by presenting the
following general result, which will be applied to the special cases
in this section as well as to arbitrary $T$ in the next sections. 

\begin{theorem}\label{thm:capa}
  When $\nr$ is large, the capacity of the non-coherent 1-bit MIMO channel with
  $\nt\ge T\ge2$ and peak power $\snr>0$ is given by
  \begin{equation}
    C = {T-1\over4} \log {\nr\over 2\pi e} + {1\over T}\log \alpha_{\SNR,
    T} + o(1)\quad\text{bits per channel use}, \label{eq:capa1} 
  \end{equation}
  where $\alpha_{\SNR, T}$ is a function of $\snr$ and $T$
  \begin{equation}
    \alpha_{\snr,T} := 
    \int_{\ParaCorr_{\gamma}} \sqrt{\det \pmb{J}_{\yv|\corrv}(\corrv)} \d
    \corrv,\label{eq:alpha_nc}
  \end{equation}
  with $\pmb{J}_{\yv|\corrv}(\corrv)$ being the Fisher information
  matrix with respect to the distribution
  $f(\yv;\corrv)=\mathbb{P}\left\{ \mathrm{sign}\left(\mathcal{N}(0,
  \Rrho(\corrv))\right) = \yv \right\}$. The optimal input
  is upper-triangular with normalized columns, i.e., $\Xm =
  \sqrt{\snr}\Chol{\Rrho(\gamma^{-1}\corrv)} \in \UT_{\snr}$ such that  
  \begin{equation}
    p(\corrv) = {1\over\alpha_{\snr,T}}\sqrt{\det
    \pmb{J}_{\yv|\corrv}(\corrv)}, \quad \corrv\in\ParaCorr_\gamma.  
  \end{equation}
\end{theorem}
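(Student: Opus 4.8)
The plan is to obtain Theorem~\ref{thm:capa} as a direct application of the Clarke--Barron asymptotics of Proposition~\ref{lemma:CB} to the parametric family of discrete distributions $\{f(\yv;\corrv):\corrv\in\ParaCorr_1\}$, combined with the single-letter characterization of Proposition~\ref{prop:capa0}. The correspondence with the notation of Proposition~\ref{lemma:CB} is: the parameter $\thetav$ is the correlation vector $\corrv$; the sample $\yv=\sign(\mathcal{N}(0,\Rrho(\corrv)))\in\{-1,1\}^T$ is the quantized output vector; the ambient parameter space $\Omega$ is $\ParaCorr_1$; and the compact subset $\Kc$ is $\ParaCorr_\gamma$ with $\gamma=\snr/(1+\snr)$. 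Here the samples are discrete (a probability mass function on $2^T$ outcomes) rather than continuous, but the mutual-information formula \eqref{eq:tmp291} applies to any parametric family meeting the regularity conditions, with the Fisher information taken with respect to the counting measure.

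First I would record the dimension of the parameter set. Since $\corrv\in\mathbb{R}^{\binom{T}{2}}$ and, for $\gamma>0$, the origin $\corrv=0$ (where $\Rrho(0)=\Id$) lies in $\ParaCorr_\gamma$ together with a full-dimensional neighborhood, $\ParaCorr_\gamma$ has nonempty interior and $\dim(\ParaCorr_\gamma)=\binom{T}{2}=\tfrac{T(T-1)}{2}$. Applying \eqref{eq:tmp291} with $\Kc=\ParaCorr_\gamma$ and $\pmb{J}_\theta=\pmb{J}_{\yv|\corrv}$ gives
\begin{equation*}
  \max_{\corrv\in\ParaCorr_\gamma} I(\corrv;\yv_1,\ldots,\yv_{\nr})
  = \frac{T(T-1)}{4}\log\frac{\nr}{2\pi e}
  + \log\alpha_{\snr,T} + o(1),
\end{equation*}
where $\alpha_{\snr,T}=\int_{\ParaCorr_\gamma}\sqrt{\det\pmb{J}_{\yv|\corrv}(\corrv)}\,\d\corrv$ is exactly \eqref{eq:alpha_nc}. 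Dividing by $T$, as dictated by the first equality of \eqref{eq:tmp1111} in Proposition~\ref{prop:capa0}, and using $\tfrac{1}{T}\cdot\tfrac{T(T-1)}{4}=\tfrac{T-1}{4}$ together with $\tfrac{1}{T}o(1)=o(1)$, yields the claimed expression \eqref{eq:capa1}. For the optimal input, Proposition~\ref{lemma:CB} identifies the asymptotically optimal law of the parameter as Jeffreys' prior $p(\corrv)=\alpha_{\snr,T}^{-1}\sqrt{\det\pmb{J}_{\yv|\corrv}(\corrv)}$ on $\ParaCorr_\gamma$; by the second half of Proposition~\ref{prop:capa0} a maximizing $\corrv$ is realized by $\Xm=\sqrt{\snr}\,\Chol{\Rrho(\gamma^{-1}\corrv)}\in\UT_{\snr}$, which transports this prior to the stated input distribution.

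The main obstacle is verifying that the regularity conditions of Proposition~\ref{lemma:CB} hold for this family, which I would defer to the appendix. The points to check are: (i)~the one-to-one parameterization of $\{f(\cdot;\corrv):\corrv\in\ParaCorr_1\}$, furnished by Lemma~\ref{lemma:param_space}; (ii)~that $\ParaCorr_\gamma$ is compact (Lemma~\ref{lemma:param_space}) and contained in the interior of $\ParaCorr_1$, which holds because $\gamma=\snr/(1+\snr)<1$ forces every $\Rrho(\corrv)$ with $\corrv\in\ParaCorr_\gamma$ to have smallest eigenvalue at least $1-\gamma>0$ and hence to be strictly positive definite, while $\ParaCorr_1$ is open (being the preimage of the open cone of positive-definite matrices), so that $\text{int}(\ParaCorr_1)=\ParaCorr_1\supseteq\ParaCorr_\gamma$; and (iii)~the continuity, smoothness, and positive-definiteness of $\pmb{J}_{\yv|\corrv}(\corrv)$.

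The crux lies in (iii). The key observation is that on the compact set $\ParaCorr_\gamma$ the covariance $\Rrho(\corrv)$ remains uniformly nonsingular, so each orthant probability $f(\yv;\corrv)$ is strictly positive, real-analytic in $\corrv$, and bounded away from $0$; consequently the scores $\nabla_\corrv\ln f(\yv;\corrv)$ and the matrix $\pmb{J}_{\yv|\corrv}(\corrv)$ are continuous, and $\sqrt{\det\pmb{J}_{\yv|\corrv}}$ is integrable over $\ParaCorr_\gamma$. The positive-definiteness of $\pmb{J}_{\yv|\corrv}(\corrv)$---equivalently, local identifiability of $\corrv$---is the one condition that does not reduce directly to Lemma~\ref{lemma:param_space}, since global injectivity does not by itself rule out a degenerate score, and must be established separately from the structure of the orthant probabilities. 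It is precisely the positivity of these probabilities that would break down as $\gamma\to1$ (singular covariance), which is why the restriction to finite $\snr$, equivalently $\gamma<1$, is essential.
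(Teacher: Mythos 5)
Your proposal is correct and follows essentially the same route as the paper: the paper's proof is precisely a direct application of Proposition~\ref{lemma:CB} with $\corrv$ as the parameter over $\Kc=\ParaCorr_\gamma$, combined with the reduction $C=\tfrac{1}{T}\max_{\corrv\in\ParaCorr_\gamma}I(\corrv;\yv_1,\ldots,\yv_{\nr})$ from Proposition~\ref{prop:capa0}, the identity $\tfrac{1}{T}\binom{T}{2}\tfrac{1}{2}=\tfrac{T-1}{4}$, and the regularity conditions (compactness and injectivity via Lemma~\ref{lemma:param_space}, derivative bounds via $\lambda_{\min}(\Rrho(\corrv))\ge 1-\gamma$) deferred to the appendix exactly as you outline. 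Your additional remarks on interiority of $\ParaCorr_\gamma$ in $\ParaCorr_1$ and on positive-definiteness of the Fisher information are points the paper leaves implicit, so they are welcome refinements rather than deviations.
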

\begin{proof}
  This is a direct application of Proposition~\ref{lemma:CB}, for which the
  regularity conditions are verified in the appendix. 
\end{proof}

The probability mass function~(pmf) $f(\pmb{y}; \corrv)$ is a
set of multivariate Gaussian orthant probabilities~(see, e.g.,
\cite{plackett1954reduction}) which do not have a closed-form
expression in general, except for $T=2$ and $T=3$. In this section,
we will consider these special cases. 

\subsection{The case $T=2$} \label{sec:T2}
In this case, we have a one-dimensional parameter space
$\ParaCorr_\gamma =
[-\gamma, \gamma]$, and $\pmb{y} = [y_1\ y_2]^\T$. Let $\corr = \corr_{
\{1,2\} } \in \ParaCorr_\gamma$, such that the normalized
covariance matrix has off-diagonal element $\corr$. Thus,
it follows from~\cite{plackett1954reduction} that the probability $\mathbb{P}(y_1\ne y_2) = {1 \over \pi} 
\arccos\left( \corr \right)$. Let us define 
\begin{equation}
  \mu_1(\corr) := \eta(\corr) := {1 \over \pi} \arccos\left( \corr \right), 
  \label{eq:lambda}
\end{equation}
and $\mu_0 := 1-\mu_1$. Then, the pmf can be written
explicitly as  
\begin{equation}
  f(\pmb{y}; \corr) = 
  {\mu_0(\corr) \over2} \ind\left( y_1 = y_2 \right) +
  {\mu_1(\corr)\over2} \ind\left( y_1\ne y_2 \right).
\end{equation}
Note that $\mu_1$ can be considered as a reparameterization of
$f$ since the function in \eqref{eq:lambda} is bijective from
$\ParaCorr_\gamma$ to $\Mc := [\mu_{\min}, 1-\mu_{\min}]$ where
\begin{equation}
  \mu_{\min} := {1 \over \pi} \arccos\left( \gammasnr \right).  \label{eq:lmin}
\end{equation}
Therefore, we have  
\begin{IEEEeqnarray}{rCl}
  \int_{\ParaCorr_\gamma} \sqrt{{J}_{\yv|\corr}({\corr})} d {\corr} &=&
  \int_\Mc \sqrt{{J}_{\yv|\mu_1}({\mu_1})} d {\mu_1} \\
  &=& \int_{\Mc} \mu_1^{-{1\over2}} (1-\mu_1)^{-{1\over2}} d
  {\mu_1}  \label{eq:tmp555}\\
  &=& 4\arccos(\sqrt{\mu_{\min}})-\pi, \label{eq:tmp199}
\end{IEEEeqnarray}%
where the first equality is from the reparameterization and the
change of variables using the Jacobian\footnote{This is also known
as the invariance of Jeffreys' prior to reparameterization~(see, e.g.,
\cite{CB94} and the references therein). }. 
Now, we have the following corollary of Theorem~\ref{thm:capa}
for $T=2$. 
    \begin{corollary}
      When $\nr$ is large and $\nt\ge T=2$, we have 
\begin{equation}
  C = {1\over4} \log {\nr\over 2\pi e} + {1\over 2}\log \alpha_{\SNR,
  2} + o(1)\quad\text{bits per channel use},  
\end{equation}
where       
\begin{equation}
        \alpha_{\snr,2} =  4\arccos\left( \sqrt{{1 \over \pi} \arccos\left( \gammasnr \right) } \right)-\pi,\label{eq:capa_T2} 
      \end{equation}
      which is an increasing function of $\snr$ such that 
      \begin{equation}
        {4\gammasnr \over \pi} \le \alpha_{\snr,2} \le {\pi} \gamma. \label{eq:tmp728} 
      \end{equation}
Moreover, the optimal input signaling is $\Xm =
\sqrt{\snr} \left[ \begin{smallmatrix} 1 & \gamma^{-1}\corr^* \\ 0 &
  \sqrt{1-\gamma^{-2} {\corr^*}^2}\end{smallmatrix} \right]$, such that
  $\mu_1=\eta(\corr^*)$ follows a truncated
  $\mathrm{Beta}({1\over2},{1\over2})$ distribution over $\Mc$.
\end{corollary}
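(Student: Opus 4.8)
The plan is to specialize Theorem~\ref{thm:capa} to $T=2$ and then treat monotonicity, the two-sided bound in \eqref{eq:tmp728}, and the optimizer separately. First I would set $T=2$ in \eqref{eq:capa1}. Then $\frac{T-1}{4}=\frac14$, and the defining integral $\alpha_{\snr,2}=\int_{\ParaCorr_\gamma}\sqrt{J_{\yv|\corr}(\corr)}\,d\corr$ is precisely the one-dimensional integral already evaluated in \eqref{eq:tmp555}--\eqref{eq:tmp199} through the reparameterization $\mu_1=\eta(\corr)$. This immediately gives the capacity expression together with $\alpha_{\snr,2}=4\arccos(\sqrt{\mu_{\min}})-\pi$ and $\mu_{\min}=\frac1\pi\arccos\gamma$, i.e.\ \eqref{eq:capa_T2}; nothing beyond bookkeeping is required here.

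For monotonicity and the bounds I would work with the closed form in $\gamma$. Recording the antiderivative $\int\mu^{-1/2}(1-\mu)^{-1/2}\,d\mu=\arcsin(2\mu-1)$ yields $\alpha_{\snr,2}=2\arcsin(1-2\mu_{\min})$, and differentiating (either this form or the integral directly, using that the integrand is even in $\corr$) gives $\frac{d}{d\gamma}\alpha_{\snr,2}=2\bigl[(1-\gamma^2)\arccos\gamma\,(\pi-\arccos\gamma)\bigr]^{-1/2}>0$. Since $\gamma=\frac{\snr}{1+\snr}$ increases with $\snr$, $\alpha_{\snr,2}$ is increasing in $\snr$. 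For the lower bound in \eqref{eq:tmp728} I would bound the integrand: on $\Mc=[\mu_{\min},1-\mu_{\min}]$ one has $\mu^{-1/2}(1-\mu)^{-1/2}\ge2$, so $\alpha_{\snr,2}\ge2(1-2\mu_{\min})=\frac4\pi\arcsin\gamma\ge\frac{4\gamma}\pi$, the last inequality being the elementary $\arcsin\gamma\ge\gamma$ on $[0,1]$.

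The upper bound $\alpha_{\snr,2}\le\pi\gamma$ is the main obstacle, because a crude pointwise bound on $\sqrt{J_{\yv|\corr}(\corr)}$ fails: the integrand diverges as $\corr\to\pm\gamma$. Instead I would prove that $\gamma\mapsto\alpha_{\snr,2}$ is convex on $[0,1]$ with endpoint values $0$ at $\gamma=0$ and $\pi$ at $\gamma=1$, so that it lies below the chord joining them, which is exactly $\pi\gamma$. Convexity amounts to showing $\frac{d}{d\gamma}\alpha_{\snr,2}$ is increasing, equivalently that $D(\gamma):=(1-\gamma^2)\arccos\gamma\,(\pi-\arccos\gamma)$ is decreasing. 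The clean way is the substitution $\gamma=\cos u$, $u\in[0,\frac\pi2]$, which turns this into the claim that $\tilde D(u):=u(\pi-u)\sin^2u$ is increasing; and indeed $\tilde D'(u)=(\pi-2u)\sin^2u+u(\pi-u)\sin(2u)$ is a sum of terms that are manifestly nonnegative on $[0,\frac\pi2]$. This settles convexity and hence the upper bound.

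Finally, for the optimizer I would compute the Cholesky factor explicitly. With $\Rrho(\gamma^{-1}\corr^*)=\left[\begin{smallmatrix}1 & \gamma^{-1}\corr^* \\ \gamma^{-1}\corr^* & 1\end{smallmatrix}\right]$, solving $\pmb U^\T\pmb U=\Rrho(\gamma^{-1}\corr^*)$ for the upper-triangular $\pmb U$ with nonnegative diagonal gives $\pmb U=\left[\begin{smallmatrix}1 & \gamma^{-1}\corr^* \\ 0 & \sqrt{1-\gamma^{-2}{\corr^*}^2}\end{smallmatrix}\right]$, so $\Xm=\sqrt{\snr}\,\pmb U$ is the stated matrix. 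For the distribution, the optimal prior $p(\corr)\propto\sqrt{J_{\yv|\corr}(\corr)}$ from Theorem~\ref{thm:capa}, transported through the bijection $\mu_1=\eta(\corr)$ and using the reparameterization invariance of Jeffreys' prior already noted after \eqref{eq:tmp555}, becomes $\propto\mu_1^{-1/2}(1-\mu_1)^{-1/2}$ on $\Mc$, which is a $\mathrm{Beta}(\tfrac12,\tfrac12)$ density truncated to $\Mc$, as claimed.
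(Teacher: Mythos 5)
Your proof is correct, and it differs from the paper's in an instructive way on the bounds \eqref{eq:tmp728}. The capacity expression, the explicit Cholesky computation of the optimal input, and the truncated $\mathrm{Beta}(\tfrac12,\tfrac12)$ law are obtained exactly as in the paper, by specializing Theorem~\ref{thm:capa} and reusing \eqref{eq:tmp555}--\eqref{eq:tmp199}; your closed form $\alpha_{\snr,2}=2\arcsin(1-2\mu_{\min})$ agrees with \eqref{eq:capa_T2} via the identity $\arccos\sqrt{u}=\tfrac12\arccos(2u-1)$. For the bounds, the paper works with $g(x):=\arccos\bigl(\sqrt{\tfrac1\pi\arccos x}\bigr)$ and derives \emph{both} inequalities from the single assertion that $g$ is convex on $[0,1]$: the upper bound is the chord through $(0,\tfrac\pi4)$ and $(1,\tfrac\pi2)$, the lower bound is the supporting line at $x=0$ — but the paper never proves this convexity. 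Your argument supplies exactly that missing ingredient: since $\alpha_{\snr,2}=4g(\gamma)-\pi$, convexity of $g$ is equivalent to your claim that $D(\gamma)=(1-\gamma^2)\arccos\gamma\,(\pi-\arccos\gamma)$ is decreasing, which your substitution $\gamma=\cos u$ and the sign check of $\tilde D'(u)=(\pi-2u)\sin^2u+u(\pi-u)\sin(2u)\ge0$ on $[0,\tfrac\pi2]$ establish cleanly; so your upper bound is the paper's argument made rigorous. Your lower bound takes a genuinely different and more elementary route — the pointwise bound $\mu^{-1/2}(1-\mu)^{-1/2}\ge2$ on $\Mc$ plus $\arcsin\gamma\ge\gamma$ — in place of the tangent-line argument; both yield $\tfrac{4\gamma}{\pi}$, and yours even gives the slightly stronger intermediate bound $\tfrac4\pi\arcsin\gamma$. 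One cosmetic slip: your motivation that ``the integrand diverges as $\corr\to\pm\gamma$'' is imprecise, since for fixed $\gamma<1$ the integrand is bounded on $\ParaCorr_\gamma$; the blow-up occurs only in the limit $\gamma\to1$, which is the real reason a pointwise bound cannot give $\pi\gamma$ uniformly in $\snr$. This does not affect the validity of the proof.
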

\begin{proof}
  We obtain \eqref{eq:capa_T2} from \eqref{eq:alpha_nc} and
  \eqref{eq:tmp199}. It is increasing since $\arccos$ is decreasing,
  and that $\gamma$ is increasing with $\snr$. 
  The bounds in \eqref{eq:tmp728} are due to the convexity of
  $\arccos\Bigl( \sqrt{{1 \over \pi} \arccos\left( x \right) }
  \Bigr)$ in $[0, 1]$, which implies
  \begin{equation}
    {\pi\over4}(1+{x}) \ge
    \arccos\left( \sqrt{{1 \over \pi} \arccos\left( x \right) }
    \right) \ge {\pi\over4}+{x\over\pi},\quad x\in[0,1],
    \label{eq:delta}
  \end{equation}
  where the upper bound is the line joining the end points, while 
  the right hand side is the supporting line of the function at $x=0$. 
  The optimal input signaling is from the fact that
  $\Xm =
  \sqrt{\snr} \left[ \begin{smallmatrix} 1 & \gamma^{-1}\corr^* \\ 0 &
    \sqrt{1-\gamma^{-2} {\corr^*}^2}\end{smallmatrix} \right]
    =
    \sqrt{\snr} \Chol{\Rrho(\gamma^{-1}\corr^*)}$, and that the optimal distribution of
    $\mu_1:=\eta(\corr^*)$ is a truncated Beta distribution, i.e.,
    $\propto \mu_1^{-{1\over2}} (1-\mu_1)^{-{1\over2}}$
    according to \eqref{eq:tmp555}. 
\end{proof}

\subsection{The case $T=3$}
In the case of $T=3$, we have a three-dimensional parameter space
\begin{equation}
  \ParaCorr_{\gamma} = \{\corrv\in[-\gamma, \gamma]^3:\
  \gamma^{-2}(\corr^2_{12}+\corr^2_{13}+\corr^2_{23})-2\gamma^{-3}\corr_{12}\corr_{23}\corr_{13} \le 1\}, \label{eq:Q_T3}
\end{equation} 
where we use $\corr_{ij}$ to denote $\corr_{ \{i,j\} }$ for brevity. One can
verify that the condition in \eqref{eq:Q_T3} is equivalent to
$\Rrho(\gamma^{-1}\corrv)\succeq 0$ for $T=3$. 
And the corresponding pmf is
\begin{IEEEeqnarray}{rCl}
  f(\pmb{y}; \corrv) &=&
  {\mu_0(\corrv)\over2} \ind\left( y_1=y_2=y_3 \right) +
  {\mu_1(\corrv)\over2} \ind\left( y_1\ne y_2=y_3 \right)
  \\ &&+\>
  {\mu_2(\corrv)\over2} \ind\left( y_2\ne y_1=y_3 \right) +
  {\mu_3(\corrv)\over2} \ind\left( y_3\ne y_1=y_2
  \right), 
\end{IEEEeqnarray}%
where $\mu_i := \mathbb{P}\left\{ y_i \ne y_j = y_k \right\}$
for $\{i,j,k\} = \{1,2,3\}$ and $\mu_0 := 1 - \mu_1 -
\mu_2 - \mu_3$. It follows that $\mu_i+\mu_j =
\mathbb{P}\left\{ y_i\ne y_j \right\} = \eta(\corr_{ij})$ for
any $i\ne j \in \{1, 2, 3\}$. We can then inverse the equations
and show that 
\begin{equation}
  \mu_i(\corrv) =
  {1\over2}(\eta(\corr_{ij})+\eta(\corr_{ik}) -
  \eta(\corr_{jk})),
  \quad \{i,j,k\} = \{1,2,3\}. 
\end{equation}
Note that $[\mu_1,\mu_2,\mu_3]^\T \in \Mc :=
\left\{  \Gm\, \eta(\corrv):\
\corrv\in\ParaCorr_\gamma \right\}$, with $\Gm := {1\over2}[(-1)^{\ind(i =
j)}]_{i,j=1,2,3}$, is a
reparameterization of $f(\pmb{y}; \corrv)$ since the mapping is bijective
from $\ParaCorr_\gamma$ to $\Mc$. Hence, as
for the case $T=2$, we have
\begin{IEEEeqnarray}{rCl}
  \int_{\ParaCorr_\gamma} \sqrt{\det(\pmb{J}_{\yv|\corrv}(\corrv))} d\corrv &=&
  \int_{\Mc} \sqrt{\det(\pmb{J}_{\yv|\pmb{\mu}}(\pmb{\mu}))}
  d\pmb{\mu} \\
  &=& \int_{\Mc} \prod_{i=0}^{3} \mu_i^{-{1\over2}}
  d\pmb{\mu},
\end{IEEEeqnarray}%
where the last equality can be obtained by noticing that 
\begin{equation}
  \pmb{J}_{\yv|\pmb{\mu}}(\pmb{\mu}) =
  [\mu_i^{-1}\ind(i=j)+\mu_0^{-1}]_{i,j=1,2,3},
\end{equation}
and taking the determinant 
\begin{IEEEeqnarray}{rCl}
  \det(\pmb{J}_{\yv|\pmb{\mu}}(\pmb{\mu}) &=&
  \det\left(\diag\left\{ \mu_1^{-1}, \mu_2^{-1}, \mu_3^{-1} \right\}
  + \mu_0^{-1}
  \pmb{1}\pmb{1}^\T\right) \\
  &=& \left(\prod_{i=1}^3 \mu_i^{-1}\right)
  \left( 1+{\mu_0^{-1}}\sum_{i=1}^3 {\mu_i} \right) \\
  &=& \prod_{i=0}^{3} \mu_i^{-{1}}.
\end{IEEEeqnarray}%
Now, we have the asymptotic capacity for $T=3$.  
\begin{corollary}\label{coro:T=3}
  When $\nr$ is large and $\nt\ge T=3$, we have 
  \begin{equation}
    C = {1\over2} \log {\nr\over 2\pi e} + {1\over 3}\log \alpha_{\SNR,
    3} + o(1)\quad\text{bits per channel use},  
  \end{equation}
  where       
  \begin{equation}
    \alpha_{\snr,3} = 
    \int_{\Mc} \prod_{i=0}^{3} \mu_i^{-{1\over2}}
    d\pmb{\mu},\label{eq:capa_T3} 
  \end{equation}
  which is increasing with $\gammasnr$ and bounded as
  \begin{equation}
    {\gammasnr^3 \over 4\pi} \le \alpha_{\snr,3} \le
    {\pi^2}. \label{eq:tmp78} 
  \end{equation}
\end{corollary}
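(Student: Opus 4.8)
The plan is to handle the three assertions---the capacity formula, the monotonicity, and the two-sided bound---in turn, since the first is immediate and essentially all the analytic work sits in the lower bound. For the capacity formula I would simply specialize Theorem~\ref{thm:capa} to $T=3$: the coefficients in \eqref{eq:capa1} become $\frac{T-1}{4}=\frac12$ and $\frac1T=\frac13$, and the $\pmb\mu$-reparameterization carried out just before the statement already identifies the constant $\alpha_{\snr,3}=\int_{\ParaCorr_\gamma}\sqrt{\det\pmb J_{\yv|\corrv}(\corrv)}\,d\corrv$ with $\int_{\Mc}\prod_{i=0}^3\mu_i^{-1/2}\,d\pmb\mu$. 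Nothing beyond this substitution is needed for the first display.

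For the monotonicity and the upper bound I would work entirely in the $\pmb\mu$-domain. Since $\ParaCorr_\gamma$ is increasing in $\gamma$ and the map $\corrv\mapsto\pmb\mu=\Gm\,\eta(\corrv)$ does not depend on $\gamma$, the image region $\Mc$ is increasing in $\gamma$; as the integrand $\prod_i\mu_i^{-1/2}$ is positive, $\alpha_{\snr,3}$ is increasing in $\gamma$ and hence in $\snr$ (because $\gamma=\snr/(1+\snr)$ is increasing). For the upper bound I would note that $\mu_0,\dots,\mu_3$ are the probabilities of the four disjoint sign-pattern events, so $\mu_i\ge0$ and $\sum_{i=0}^3\mu_i=1$; thus $\Mc$ is contained in the probability simplex $\Delta:=\{\pmb\mu:\mu_i\ge0,\ \mu_1+\mu_2+\mu_3\le1\}$. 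Enlarging the domain to $\Delta$ and recognizing a Dirichlet integral with all parameters equal to $\tfrac12$ gives $\alpha_{\snr,3}\le\int_{\Delta}\prod_{i=0}^3\mu_i^{-1/2}\,d\pmb\mu=\Gamma(\tfrac12)^4/\Gamma(2)=\pi^2$.

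The lower bound is the main step and the principal obstacle, since it requires careful change-of-variables bookkeeping. Because each $\mu_i\in[0,1]$ we have $\mu_i^{-1/2}\ge1$, so $\alpha_{\snr,3}\ge\int_{\Mc}d\pmb\mu=\Vol(\Mc)$, and it remains to bound $\Vol(\Mc)$ from below. I would compute $\Vol(\Mc)$ through the diffeomorphism $\corrv\mapsto\pmb\mu=\Gm\,\eta(\corrv)$, whose Jacobian factors as $|\det\Gm|\prod_{i\in\Ic}|\eta'(\corr_i)|$; a direct evaluation gives $|\det\Gm|=\tfrac12$ and $|\eta'(\corr)|=\bigl(\pi\sqrt{1-\corr^2}\bigr)^{-1}$, so that $\Vol(\Mc)=\frac{1}{2\pi^3}\int_{\ParaCorr_\gamma}\prod_{i\in\Ic}(1-\corr_i^2)^{-1/2}\,d\corrv$. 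Dropping the factors $(1-\corr_i^2)^{-1/2}\ge1$ and invoking $\Vol(\ParaCorr^{(3)}_\gamma)=\gamma^3\,\Vol(\ParaCorr^{(3)}_1)=\gamma^3\pi^2/2$ from Lemma~\ref{lemma:regionTheta} yields $\Vol(\Mc)\ge\frac{1}{2\pi^3}\cdot\frac{\gamma^3\pi^2}{2}=\frac{\gamma^3}{4\pi}$, which is the claimed bound. The delicate points are getting $|\det\Gm|$ and the product Jacobian of the componentwise $\eta$-map correct and matching them against the volume $\Vol(\ParaCorr^{(3)}_\gamma)$; the upper bound, by contrast, collapses to a one-line Dirichlet evaluation once $\Mc\subseteq\Delta$ is observed.
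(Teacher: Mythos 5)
Your proposal is correct and takes essentially the same approach as the paper: the upper bound by enclosing $\Mc$ in the probability simplex and evaluating the Dirichlet integral $\Gamma(\tfrac12)^4/\Gamma(2)=\pi^2$, and the lower bound by reducing to $\Vol(\Mc)$ and computing it through the Jacobian $|\det(\Gm)|\prod_{i}|\eta'(\corr_i)|$ together with $\Vol(\ParaCorr^{(3)}_\gamma)=\gamma^3\pi^2/2$. The only differences are minor: the paper bounds the integrand below by $16$ via the AM--GM inequality (instead of your $\prod_{i=0}^3\mu_i^{-1/2}\ge1$), which yields the stronger constant $4\gamma^3/\pi$, and your explicit nestedness argument for the monotonicity in $\gamma$ is a welcome addition, since the paper's appendix proof does not address that part of the claim.
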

\begin{proof}
  See Appendix~\ref{app:T=3}. 
\end{proof}

\section{Non-coherent communication: Arbitrary $T$} \label{sec:anyT}

   For an arbitrary $T>3$, the orthant probability does not have a
   tractable expression~\cite{plackett1954reduction}. Consequently,
   unlike the cases where $T\le3$, it is not feasible to directly work with the Fisher information matrix to derive the capacity. 
   In this section, we propose lower and upper bounds on the
   capacity. 

   \subsection{Lower bounds}

   To derive a capacity lower bound, we will propose a scheme involving
   a specific input distribution and a simple receiver architecture.
   Although suboptimal, the achievable
   rate of the proposed scheme is easy to analyze and provides a tractable
   lower bound on the capacity. 

   \subsubsection{Input distribution}
   Since the asymptotically optimal input distribution, Jeffreys'
   prior, involves the intractable Fisher information matrix for $T>3$, we adopt a more
   tractable input distribution where $\corrv$ is uniformly
   distributed in $\ParaCorr_\gamma$. The following lemma identifies such an input
   distribution. 
   \begin{lemma}\label{lemma:change_of_variable}
     Let $\Xm \in \UT_{\snr}$ and $\rv_i \in \mathcal{B}_{i-1}$ denote the $i-1$ first entries
     of the $i$-th column of ${\snr}^{-{1\over2}}\Xm$. If   
     \begin{equation}
       p(\Xm) \propto \prod_{i=2}^T (1-\|\rv_i\|^2)^{T-i\over2}
       \ind(\rv_i\in\mathcal{B}_{i-1})\ind(\Xm\in\UT_{\snr}), \label{eq:input_dist} 
     \end{equation}%
     then $\corrv:=\pmb{\rho}(\Xm)$ is uniformly distributed in
     $\ParaCorr_\gamma$. 
   \end{lemma}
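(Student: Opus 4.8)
The plan is to realize $\corrv$ as an invertible smooth image of the free coordinates of $\Xm$, compute the associated Jacobian, and then observe that the stated density is exactly the one that cancels this Jacobian and leaves a constant.

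First I would strip off the fixed scaling by setting $\Um := \snr^{-1/2}\Xm$, whose nontrivial part is the $T\times T$ upper-triangular block $\Chol{\Rrho(\gamma^{-1}\corrv)}$ with unit columns $\uv_1,\ldots,\uv_T$, nonnegative diagonal $U_{ii}=\sqrt{1-\|\rv_i\|^2}$, and strictly-upper entries given by $\rv_i$ (here $\uv_1$ is fixed, $U_{11}=1$, and $\rv_1$ is empty). Thus $\Xm\in\UT_{\snr}$ is coordinatized by the free vector $\rv:=(\rv_2,\ldots,\rv_T)\in\prod_{i=2}^T\Ball_{i-1}$ of total dimension $\binom{T}{2}$, and by Lemma~\ref{lemma:param_space} the map $\rv\mapsto\corrv:=\rhov(\Xm)$ is a bijection onto $\ParaCorr_\gamma$. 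Since every column of $\Xm$ has norm $\sqrt{\snr}$, all denominators in \eqref{eq:def_rho} equal $1+\snr$, whence $\rho_{\{j,k\}}(\Xm)=\gamma\,\uv_j^\T\uv_k$; that is, $\corrv$ is $\gamma$ times the strictly-upper part $\gv:=(\uv_j^\T\uv_k)_{j<k}$ of the Gram matrix $\Um^\T\Um$. As $\corrv=\gamma\,\gv$ is a pure scaling, the change-of-variables formula gives that the density of $\corrv$ equals $p(\Xm)$ divided by $\gamma^{\binom{T}{2}}\,\bigl|\det(\partial\gv/\partial\rv)\bigr|$, so it suffices to show $\bigl|\det(\partial\gv/\partial\rv)\bigr|$ equals the product appearing in \eqref{eq:input_dist}, up to a constant.

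Second, and this is the crux, I would exploit a triangular structure. Fix $k$ and the earlier columns $\uv_1,\ldots,\uv_{k-1}$; since $\uv_j$ with $j<k$ is supported on its first $j\le k-1$ coordinates while the pivot entry $U_{kk}$ of $\uv_k$ is never seen by $\uv_j$, one has $\uv_j^\T\uv_k=\sum_{\ell=1}^{k-1}U_{\ell j}(\rv_k)_\ell$, i.e. $(\uv_1^\T\uv_k,\ldots,\uv_{k-1}^\T\uv_k)^\T=\Um_{k-1}^\T\,\rv_k$, where $\Um_{k-1}$ is the leading $(k-1)\times(k-1)$ principal submatrix of $\Um$. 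Moreover $\uv_j^\T\uv_k$ depends only on $\rv_2,\ldots,\rv_k$. Hence, ordering both the output coordinates of $\gv$ and the input coordinates of $\rv$ by increasing $k$, the Jacobian $\partial\gv/\partial\rv$ is block lower-triangular: the block coupling the $k$-group to $\rv_{k'}$ vanishes for $k'>k$, and the diagonal block ($k'=k$) is exactly $\Um_{k-1}^\T$. Since $\Um_{k-1}$ is upper-triangular with diagonal $U_{11},\ldots,U_{k-1,k-1}$, this yields
\[
  \left|\det\frac{\partial\gv}{\partial\rv}\right|
  =\prod_{k=2}^{T}\bigl|\det\Um_{k-1}\bigr|
  =\prod_{k=2}^{T}\prod_{\ell=1}^{k-1}U_{\ell\ell}
  =\prod_{\ell=1}^{T-1}U_{\ell\ell}^{\,T-\ell}
  =\prod_{i=2}^{T}\bigl(1-\|\rv_i\|^2\bigr)^{\frac{T-i}{2}},
\]
where the last step uses $U_{11}=1$ and $U_{ii}=\sqrt{1-\|\rv_i\|^2}$.

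Plugging this into the change-of-variables identity of the first paragraph, the prescribed choice $p(\Xm)\propto\prod_{i=2}^T(1-\|\rv_i\|^2)^{(T-i)/2}$ cancels the Jacobian exactly, leaving the density of $\corrv$ constant on $\ParaCorr_\gamma$, i.e. uniform. The main obstacle is the second paragraph: one must recognize that the columns $\uv_j$ with $j<k$ never excite the pivot $U_{kk}$, which is precisely what makes $\rv_k\mapsto(\uv_j^\T\uv_k)_{j<k}$ linear with matrix $\Um_{k-1}^\T$ and renders the global Jacobian block-triangular; granting this, everything collapses to determinants of triangular matrices. A secondary, routine point is that $\det(\partial\gv/\partial\rv)$ vanishes only on the measure-zero boundary $\{\|\rv_i\|=1\}$, so the map is a diffeomorphism on the interior and the change of variables is valid.
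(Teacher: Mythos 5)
Your proof is correct and follows essentially the same route as the paper: both reduce the claim to the change-of-variables identity $\d \corrv = \gamma^{\binom{T}{2}}\prod_{i=2}^{T}\left(1-\|\rv_i\|^2\right)^{T-i\over2}\d\rv_i$ (the paper's \eqref{eq:cov}, established inside the proof of Lemma~\ref{lemma:regionTheta}), under which the prescribed density \eqref{eq:input_dist} exactly cancels the Jacobian and leaves $\corrv$ uniform on $\ParaCorr_\gamma$. The only difference is that you spell out the step the paper dismisses as ``after some algebra'': the block-lower-triangular structure of $\partial \gv/\partial \rv$ with diagonal blocks $\Um_{k-1}^\T$, giving $\bigl|\det(\partial\gv/\partial\rv)\bigr| = \prod_{i=2}^{T}(1-\|\rv_i\|^2)^{(T-i)/2}$, which is a valid (and welcome) explicit verification of that identity.
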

   \begin{proof}
     See Appendix~\ref{app:change_of_variable}. 
   \end{proof}

   Let us consider the input distribution \eqref{eq:input_dist} over the set
   $\UT_{\snr}$ of $\nt\times T$ normalized upper-triangular matrices.  
   According to the above lemma, we have $p(\corrv) =
   1/\Vol(\ParaCorr_\gamma)$,
   $\corrv\in\ParaCorr_\gamma$, hence the differential entropy
   \begin{equation}
     h(\corrv) = \log \Vol(\ParaCorr_\gamma) = \binom{T}{2}\log\gamma +
     \log \Vol(\ParaCorr_1). 
   \end{equation}%

   \subsubsection{Receiver}
   Before decoding, we first find an estimate of $\corrv$. To that
   end, note that from \eqref{eq:lambda}, ${\corr}_{i} =   \cos\left( {\pi}
   \mathbb{P}(Y_{ki_1}\ne Y_{ki_2}) \right)$, $\forall\,k\in[\nr]$ and
   $i=\{i_1, i_2\}\in\Ic$. Let
   $p_{i} := \mathbb{P}(Y_{ki_1}\ne Y_{ki_2})$, $i\in\Ic$. 
   Replacing the probability by the sample mean, we have the following
   estimator
   \begin{equation}
     \hat{\corr}_{i} =   \cos\left( {\pi\over \nr}
     \sum_{k=1}^{\nr}  \ind(Y_{ki_1}\ne Y_{ki_2}) \right). 
   \end{equation}%
   Therefore, for a given $\corr_{i}$, $i\in\Ic$, 
   \begin{IEEEeqnarray}{rCl}
     |\hat{\corr}_{i} - \corr_{i}| &=&  \left| \cos\left( {\pi\over \nr}
     \sum_{k=1}^{\nr}  \ind(Y_{ki_1}\ne Y_{ki_2}) \right) - \cos\left( {\pi}
     p_{i} \right) \right| \\
     &\le&  \left| {\pi\over \nr}
     \sum_{k=1}^{\nr}  \ind(Y_{ki_1}\ne Y_{ki_2})  -  {\pi}
     p_{i} \right|,
   \end{IEEEeqnarray}%
   where the inequality is from the fact that $|f(x)-f(y)|\le |x-y| \max_{t} |f'(t)|$ for any differentiable
   function $f$; and $|\cos'(t)| = |\sin(t)| \le 1$. 
   Hence, 
   \begin{IEEEeqnarray}{rCl}
     \E \left\{ |\hat{\corr}_{i} - \corr_{i}|^2 \cond \corr_{i}
     \right\} &\le& {\pi^2} \E
     \left| {1\over \nr} \sum_{k=1}^{\nr}  \ind(Y_{ki_1}\ne Y_{ki_2}) -
     p_{i} \right|^2 \\
     &=& {\pi^2} \mathsf{Var}\left( {1\over \nr}
     \sum_{k=1}^{\nr}  \ind(Y_{ki_1}\ne Y_{ki_2}) \right) \\
     &=& {\pi^2\over\nr} \mathsf{Var}\left(  \ind(Y_{ki_1}\ne
     Y_{ki_2}) \right) \\
     &\le& {\pi^2\over\nr}, 
   \end{IEEEeqnarray}%
   where we used the fact that $(Y_{ki_1}, Y_{ki_2})$ are i.i.d.~over
   $k\in[\nr]$.
   Since the upper bound does not depend on $\corr_{i}$,
   we have the following upper bound on the estimation error 
   \begin{equation}
     \E \left\{ |\hat{\corr}_{i} - \corr_{i}|^2 \right\} \le
     {\pi^2\over\nr}, \quad \forall\, i\in\Ic. \label{eq:tmp122}
   \end{equation}%

   \subsubsection{Achievable rate}

   We are now ready to derive a lower bound on the mutual information.
   For $\Xm\in\UT_{\snr}$ and $\corrv = \rhov(\Xm)$, we have
   \begin{IEEEeqnarray}{rCl}
     I(\Xm; \Ym) &=& I(\corrv; \Ym) \\
     &\ge& I(\corrv; \hat{\corrv})  \label{eq:tmp123}\\
     &=& h(\corrv) - h(\corrv \cond \hat{\corrv})  \\
     &=& h(\corrv) - h(\corrv-\hat{\corrv}\cond \hat{\corrv}) \\
     &\ge& h(\corrv) - h(\corrv-\hat{\corrv}) \label{eq:tmp124}\\
     &\ge& h(\corrv) - \sum_{i\in\Ic} h(\corr_{i} - \hat{\corr}_{i})
     \label{eq:tmp124.5}\\
     &\ge& h(\corrv) - {1\over2} \sum_{i\in\Ic} \log \left(2\pi e
     \E|\corr_{i} - \hat{\corr}_{i}|^2\right) \label{eq:tmp125}\\
     &\ge& h(\corrv) - {1\over2}\binom{T}{2}\log \left( {2\pi^3
     e\over\nr} \right), 
   \end{IEEEeqnarray}%
   where \eqref{eq:tmp123} is from data processing
   $\corrv\leftrightarrow \Ym \leftrightarrow \hat{\corrv}$;
   \eqref{eq:tmp124} is from conditioning reduces entropy;
   \eqref{eq:tmp124.5} is from the sub-additivity of entropy; 
   \eqref{eq:tmp125} is from Gaussian maximizes differential entropy
   under the second moment constraint,
   i.e., $h(X) \le {1\over2} \log \left(2\pi e \E (X^2)\right)$; and the last
   inequality is from \eqref{eq:tmp122}. 
   It leads to the following (non-asymptotic) capacity lower bound. 
   \begin{proposition}[Capacity lower bound] \label{prop:LB}
     When $\nt\ge T$, we have
     \begin{equation}
       {C} \ge {T-1\over4} \log
       { \gamma^2 \over 2\pi^3 e} \nr + {1\over T} \log \Vol(\ParaCorr_1). 
       \label{eq:capa_LB}
     \end{equation}%
   \end{proposition}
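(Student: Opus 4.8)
The plan is to assemble the bound directly from the chain of inequalities already established above, since essentially all of the analytic work is done and the only remaining degree of freedom is the choice of input distribution. The derivation preceding the statement shows that, for any $\Xm\in\UT_{\snr}$ with $\corrv=\rhov(\Xm)$,
\begin{equation}
  I(\Xm;\Ym) \ge h(\corrv) - \frac{1}{2}\binom{T}{2}\log\frac{2\pi^3 e}{\nr},
\end{equation}
where the input enters this lower bound \emph{only} through the differential entropy $h(\corrv)$. To obtain a clean, tractable rate, I would instantiate the specific input distribution of Lemma~\ref{lemma:change_of_variable}, under which $\corrv$ is uniformly distributed over $\ParaCorr_\gamma$. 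This choice is natural because a uniform law maximizes differential entropy among all distributions supported on the bounded set $\ParaCorr_\gamma$, but the argument does not rely on optimality: any admissible input distribution yields a valid capacity lower bound.

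The second step is to evaluate $h(\corrv)$ for this choice. Since a uniform distribution on a bounded set has differential entropy equal to the log-volume of its support, $h(\corrv)=\log\Vol(\ParaCorr_\gamma)$. Invoking the scaling relation $\Vol(\ParaCorr_\gamma)=\gamma^{\binom{T}{2}}\Vol(\ParaCorr_1)$ from Lemma~\ref{lemma:regionTheta} then gives
\begin{equation}
  h(\corrv) = \binom{T}{2}\log\gamma + \log\Vol(\ParaCorr_1).
\end{equation}

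The third step is to substitute and normalize by $T$. Because the capacity is the maximum of $\frac{1}{T}I(\Xm;\Ym)$ over admissible inputs, any single admissible choice lower-bounds it; using the distribution above and merging the two logarithmic terms, I would write
\begin{equation}
  C \ge \frac{1}{T}\left[\binom{T}{2}\log\gamma + \log\Vol(\ParaCorr_1) + \frac{1}{2}\binom{T}{2}\log\frac{\nr}{2\pi^3 e}\right] = \frac{1}{2T}\binom{T}{2}\log\frac{\gamma^2\nr}{2\pi^3 e} + \frac{1}{T}\log\Vol(\ParaCorr_1),
\end{equation}
and the claim follows upon simplifying $\frac{1}{2T}\binom{T}{2}=\frac{T-1}{4}$.

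I do not anticipate a genuine obstacle here, since the heavy lifting—namely the estimator construction, the second-moment bound \eqref{eq:tmp122}, and the information-theoretic chain culminating in the displayed mutual-information bound—has already been carried out. The only points requiring care are bookkeeping: combining $2\log\gamma$ with $\log\!\big(\nr/(2\pi^3 e)\big)$ into $\log\!\big(\gamma^2\nr/(2\pi^3 e)\big)$, and observing that fixing one particular input distribution is legitimate because the capacity is a supremum over inputs. The result is non-asymptotic in $\nr$, as it stems from the exact variance bound rather than from Proposition~\ref{lemma:CB}.
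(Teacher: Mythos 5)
Your proposal is correct and follows essentially the same route as the paper: the paper's own proof consists precisely of the preceding derivation (the estimator with variance bound \eqref{eq:tmp122}, the mutual-information chain ending in $I(\Xm;\Ym)\ge h(\corrv)-\tfrac{1}{2}\binom{T}{2}\log\bigl(2\pi^3 e/\nr\bigr)$), instantiated with the uniform-$\corrv$ input of Lemma~\ref{lemma:change_of_variable} so that $h(\corrv)=\binom{T}{2}\log\gamma+\log\Vol(\ParaCorr_1)$, followed by dividing by $T$ and simplifying $\tfrac{1}{2T}\binom{T}{2}=\tfrac{T-1}{4}$. Your bookkeeping and the observation that fixing one admissible input distribution legitimately lower-bounds the supremum are both exactly what the paper does.
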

   To achieve the above rate, the input matrix has independent
   normalized columns following the distribution~\eqref{eq:input_dist},
   while the receiver preprocesses the signal and treats the channel
   $p(\hat{\corrv}\cond\corrv)$ as additive white Gaussian noise
   channel.

   \subsubsection{Capacity lower bound with independent signaling}

   To generate the input distribution \eqref{eq:input_dist}, one need to
   either sample from spherical distributions of different dimensions
   or uniformly in $\ParaCorr_\gamma$. To reduce the complexity, e.g.,
   for practical applications, we propose a signaling involving only scalar distributions. 

   First, let us define the following hypercube
   \begin{equation}
     \tilde{\ParaCorr}_\gamma := \left\{ \corrv\in\mathbb{R}^{\binom{T}{2}}:\
     |\corr_{i}|
     \le \frac{\gamma}{T-1},\ \forall\,i\in\Ic \right\}. 
   \end{equation}%
   Since each matrix $\Rrho(\gamma^{-1}\corrv)$,
   $\corrv\in\tilde{\ParaCorr}_\gamma$, is diagonally dominant and symmetric, we
   have $\Rrho(\gamma^{-1}\corrv)\succeq0$, which implies that
   $\tilde{\ParaCorr}_\gamma\subseteq \ParaCorr_\gamma$.  

   Then, consider $\corrv$ in which the entries are i.i.d.~over
   $[-{\gamma\over T-1}, {\gamma\over T-1}]$. We have  
   \begin{IEEEeqnarray}{rCl}
     I(\corrv; \pmb{Y}) 
     &=& \sum_{i\in\Ic} I(\corr_{i}; \pmb{Y} \,|\, \{\corr_j\}_{j<i} ) \\ 
     &=& \sum_{i\in\Ic} I(\corr_{i}; \pmb{Y} \,|\, \{\corr_j\}_{j<i} )
     + I(\corr_{i}; \{\corr_j\}_{j<i}) \\
     &=& \sum_{i\in\Ic} I(\corr_{i}; \pmb{Y}, \{\corr_{j}\}_{j<i} ) \\ 
     &\ge& \sum_{i\in\Ic} I(\corr_{i}; \pmb{Y}_{i}),
     \label{eq:para}
   \end{IEEEeqnarray}%
   where $\pmb{Y}_{i}$, $i\in\Ic$, is the submatrix of $\pmb{Y}$ with
   only two columns indicated by the pair $i\in\Ic$; we fix an ordering of
   elements in $\Ic$ (e.g., lexicographical ordering of ordered pairs),
   so that the first equality is from the chain rule with respect to
   that specific ordering; the second equality is from the
   independence between the $\corr_{i}$'s. Therefore, the capacity is
   lower bounded by the sum rate of the $\binom{T}{2}$ ``parallel
   channels'' each one of which is equivalent to a channel with $T=2$~($2$ columns). 

   \begin{proposition}
     The capacity of the non-coherent 1-bit MIMO channel with
     $\nt\ge T$ and $\snr>0$ has the following asymptotic lower bound. 
     \begin{equation}
        C \ge {T-1\over4} \log \left(
{8\gammasnr^2 \over \pi^3 e (T-1)^2 }
 \nr \right) + o(1).
        \label{eq:lb_T}
      \end{equation}
  \end{proposition}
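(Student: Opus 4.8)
The plan is to combine the parallel-channel decomposition \eqref{eq:para} with the \emph{fixed-prior} (non-maximized) form of the Clarke--Barron expansion, applied separately to each of the $\binom{T}{2}$ scalar sub-channels. Recall that we take $\corrv$ with i.i.d.\ entries uniformly distributed on $[-a,a]$, $a:=\gamma/(T-1)$, which is a legitimate input since $\tilde{\ParaCorr}_\gamma\subseteq\ParaCorr_\gamma=\rhov(\Xc_{\snr})$. Starting from \eqref{eq:para} and dividing by $T$,
\begin{equation}
  C \ge \frac1T I(\corrv;\Ym) \ge \frac1T\sum_{i\in\Ic} I(\corr_i;\pmb{Y}_i),
\end{equation}
where $\pmb{Y}_i\cond\corr_i$ is exactly the $T=2$ channel of Section~\ref{sec:T2} with correlation $\corr_i$, observed through $\nr$ i.i.d.\ rows, and $\corr_i$ is uniform on $[-a,a]$. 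It therefore suffices to lower-bound a single scalar mutual information $I(\corr_i;\pmb{Y}_i)$ and multiply by $\frac1T\binom{T}{2}=\frac{T-1}{2}$.

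For the scalar sub-channel I would invoke the fixed-prior asymptotic of Clarke and Barron~\cite{CB94}: for a prior with density $w$ supported on a compact interval in the interior of the parameter set,
\begin{equation}
  I(\corr_i;\pmb{Y}_i) = \frac12\log\frac{\nr}{2\pi e} + h(\corr_i) + \frac12\,\mathbb{E}_{w}\!\left[\log J_{\yv|\corr}(\corr_i)\right] + o(1),
\end{equation}
the regularity conditions being the same ones already verified for the corollary at $T=2$. Since $w$ is uniform on $[-a,a]$ we have $h(\corr_i)=\log(2a)=\log\frac{2\gamma}{T-1}$, so the only remaining quantity is the average of $\log J_{\yv|\corr}$ under the uniform prior.

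The key step is to lower-bound $\mathbb{E}_{w}[\log J_{\yv|\corr}]$ by its value at $\corr=0$. From the $T=2$ analysis, $\mu_1=\eta(\corr)=\frac1\pi\arccos\corr$ and $J_{\yv|\mu_1}=\frac1{\mu_1(1-\mu_1)}$, so by reparameterization invariance
\begin{equation}
  J_{\yv|\corr}(\corr)=\frac{1}{\pi^2(1-\corr^2)\,\mu_1(1-\mu_1)}.
\end{equation}
This function is even in $\corr$ and increasing in $|\corr|$ (both $\frac1{1-\corr^2}$ and $\frac1{\mu_1(1-\mu_1)}$ grow as $|\corr|$ increases, since $\mu_1$ departs from $\frac12$), hence it attains its minimum at $\corr=0$, where $\mu_1=\frac12$ and $J_{\yv|\corr}(0)=\frac4{\pi^2}$. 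Consequently $\mathbb{E}_{w}[\log J_{\yv|\corr}]\ge\log\frac4{\pi^2}$ for any symmetric interval $[-a,a]$, giving
\begin{equation}
  I(\corr_i;\pmb{Y}_i)\ge \frac12\log\frac{\nr}{2\pi e}+\log\frac{2\gamma}{T-1}+\frac12\log\frac4{\pi^2}+o(1)=\frac12\log\frac{8\gamma^2\nr}{\pi^3 e(T-1)^2}+o(1).
\end{equation}

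Finally I would collect the $\binom{T}{2}$ identical bounds and use $\frac1T\binom{T}{2}=\frac{T-1}{2}$ to obtain
\begin{equation}
  C\ge \frac{T-1}{2}\cdot\frac12\log\frac{8\gamma^2\nr}{\pi^3 e(T-1)^2}+o(1)=\frac{T-1}{4}\log\!\left(\frac{8\gamma^2}{\pi^3 e(T-1)^2}\,\nr\right)+o(1),
\end{equation}
which is the claim. The main obstacle is the sharp constant: directly applying the crude mean-square estimation bound of Proposition~\ref{prop:LB} to each sub-channel loses a factor of $4$ inside the logarithm, so one must instead use the per-prior asymptotic (equivalently, the asymptotically efficient, Cram\'{e}r--Rao-matching variance $\frac{\pi^2}{4\nr}$ of the plug-in estimator $\hat{\corr}_i=\cos(\pi\hat{p}_i)$ at $\corr_i\approx0$, rather than the worst-case bound $\frac{\pi^2}{\nr}$). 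Once the correct per-channel rate is established, it is precisely the monotonicity of the scalar Fisher information that delivers the clean constant $4/\pi^2$.
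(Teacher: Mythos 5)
Your proof is correct, and its skeleton is exactly the paper's: the same i.i.d.\ input on the hypercube $\tilde{\ParaCorr}_\gamma\subseteq\ParaCorr_\gamma$, the same parallel-channel decomposition \eqref{eq:para} into $\binom{T}{2}$ copies of the $T=2$ channel, and the same final arithmetic via ${1\over T}\binom{T}{2}={T-1\over2}$. The only deviation is the per-sub-channel step. The paper simply reuses its $T=2$ corollary: it takes the prior-maximized (Jeffreys) asymptotic on the shrunken interval — formally, it redefines $\mu_{\min}={1\over\pi}\arccos\bigl({\gammasnr\over T-1}\bigr)$ in \eqref{eq:lmin} — and then applies the convexity bound \eqref{eq:tmp728} to the modified \eqref{eq:capa_T2}, giving $\alpha\ge{4\gammasnr\over\pi(T-1)}$. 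You instead freeze the uniform prior $w$ on $[-{\gammasnr\over T-1},{\gammasnr\over T-1}]$ and invoke the \emph{fixed-prior} Clarke--Barron expansion $I={1\over2}\log{\nr\over2\pi e}+h(w)+{1\over2}\mathbb{E}_{w}\left[\log J_{\yv|\corr}\right]+o(1)$; note that this is not the paper's Proposition~0 (which is the prior-maximized statement) but the earlier Clarke--Barron result from which it derives, so you are importing a tool the paper never states — legitimate and standard, but it obliges you to verify its prior-regularity hypotheses yourself rather than point to the paper's appendix, and it is avoidable: putting Jeffreys' prior on each sub-interval and citing Proposition~0 leaves the rest of your argument unchanged. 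In the end the two routes rest on the identical inequality: your $\mathbb{E}_{w}\left[\log J_{\yv|\corr}\right]\ge\log J_{\yv|\corr}(0)=\log{4\over\pi^2}$ and the paper's $\alpha\ge{4\gammasnr\over\pi(T-1)}$ both amount to $J_{\yv|\corr}(\corr)\ge{4\over\pi^2}$, which incidentally follows at once from $1-\corr^2\le1$ and $\mu_1(1-\mu_1)\le{1\over4}$ in your explicit formula — the monotonicity discussion is not needed — and both produce the same constant ${8\gammasnr^2\over\pi^3e(T-1)^2}$.
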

   \begin{proof}
     To maximize the rate of each sub-channel in \eqref{eq:para} between
     $\corr_{i}$ and $\pmb{Y}_{i}$, $i\in\Ic$, we apply the result for $T=2$
     obtained in Sec.~\ref{sec:T2}. The range of 
     $\corr_i$ is reduced from $[-\gamma, \gamma]$ to $[-{\gamma\over
     T-1}, {\gamma\over
     T-1}]$. It is enough to redefine $\mu_{\min} = 
     {1 \over \pi} \arccos\bigl({\gammasnr\over T-1} \bigr)$
     in \eqref{eq:lmin}, and make the same change in the corresponding
     term in \eqref{eq:capa_T2}. 
     Finally, applying \eqref{eq:tmp728} to the modified \eqref{eq:capa_T2}, we obtain the lower bound \eqref{eq:lb_T}. 
   \end{proof}

   \begin{remark}
     Note that the bound \eqref{eq:lb_T} is tighter than
     \eqref{eq:capa_LB} when $T\le3$. When $T>3$, the bound \eqref{eq:capa_LB} is
     strictly better. Intuitively, the independent scalar signaling with
     Jeffreys' prior can still outperform uniform signaling in
     $\ParaCorr_\gamma$ when $T$ is small.  As the independent scalar signaling
     necessitates shrinking the support of each $\corr_i$ by a factor of
     $T-1$, resulting in a loss of a factor $T$ inside the log that
     becomes overwhelming when $T$ grows. The large $T$ regime will be
     discussed in the next section. 
   \end{remark}

   \subsection{Upper bound}

   Since the optimal input is upper-triangular, we assume without loss
   of optimality that $\Xm = [\xv_1\ \cdots\ \xv_T]\in\UT_{\snr}$. To derive an upper
   bound, let us reveal the channel matrix to the receiver. We have
   \begin{IEEEeqnarray}{rCl}
     I(\Xm; \Ym) &\le& I(\Xm; \Ym, \Hm) \\
     &=& I(\Xm; \Ym \cond \Hm) + I(\Xm; \Hm)\\
     &=& I(\Xm; \Ym \cond \Hm) \label{eq:tmp900}\\
     &=& H(\Ym \cond \Hm) - H(\Ym \cond \Xm, \Hm) \\
     &\le& \sum_{i=1}^T H(\yv_i \cond \Hm) - \sum_{i=1}^T H(\yv_i \cond
     \xv_i, \Hm) \label{eq:tmp901}\\
     &=& \sum_{i=1}^T I(\xv_i; \yv_i \cond \Hm) \\
     &=& \sum_{i=1}^T I(\xv_i; \yv_i, \Hm),
   \end{IEEEeqnarray}%
   where \eqref{eq:tmp900} is from the independence between $\Xm$ and
   $\Hm$; \eqref{eq:tmp901} is from $p(\Ym\cond\Xm,\Hm) = \prod_i
   p(\yv_i \cond \xv_i, \Hm)$ with $\Ym = [\yv_1\ \cdots\ \yv_T]$, and the
   sub-additivity of entropy.  
   Note that $I(\xv_i; \yv_i, \Hm)$ is the mutual information of the
   coherent $\nr\times i$ channel since the $\nt-i$ last entries of
   $\xv_i$ is $0$, and the channel has equivalently $i$ transmit
   antennas. In addition, the first $i$ entries form a vector that
   is on the $(i-1)$-sphere $\sqrt{\snr}\mathcal{S}_{i-1}$ in $\mathbb{R}^i$. 
   To further upper bound the mutual information, we need  
   a similar result for the coherent case as in
   Lemma~\ref{lemma:fisher_coh}, but for spherical inputs. 
   \begin{lemma} \label{lemma:fisher_coh_norm}
     Let $\xv\in\sqrt{\snr}\mathcal{S}_{\nt-1}$ be parameterized by the
     $\nt-1$ first entries $\tilde{\xv} :=
     [x_1\ \cdots\ x_{\nt-1}]\in\sqrt{\snr} \mathcal{B}_{\nt-1}$. We have
     \begin{equation}
       \det(\Jm_{y,\hv|{\xv}}(\tilde{\xv})) = {\zeta_0(\sqrt{\snr})^{\nt-1} \over
       1 - \snr^{-1}\|\tilde{\xv}\|^2}.
     \end{equation}%
     Furthermore, 
     \begin{equation}
       \int_{\sqrt{\snr}\mathcal{B}_{\nt-1}}
       \sqrt{\det(\Jm_{y,\hv|{\xv}}(\tilde{\xv}))} \d
       \tilde{\xv} =  {1\over2}\left( \snr\,\zeta_0(\sqrt{\snr}) \right)^{\nt-1\over2}
       \Vol(\mathcal{S}_{\nt-1}). 
       \end{equation}%
     \end{lemma}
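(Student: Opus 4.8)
The plan is to derive the spherical Fisher information by restricting the full $\nt$-dimensional Fisher information to the sphere $\|\xv\|^2=\snr$ and reparametrizing by $\tilde\xv$. First I would recover the matrix form of $\Jm_{y,\hv|\xv}(\xv)$ underlying Lemma~\ref{lemma:fisher_coh}: since $p(y\cond\hv,\xv)=Q(-y\hv^\T\xv)$, the score is $\nabla_\xv\ln p = \frac{y\,\phi(-y\hv^\T\xv)}{Q(-y\hv^\T\xv)}\hv$, and averaging over $y$ and $\hv$ gives $\Jm_{y,\hv|\xv}(\xv)=\EE_{\hv}[\xi(\hv^\T\xv)\,\hv\hv^\T]$. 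By rotation invariance of $\hv\sim\mathcal{N}(0,\Id)$, this matrix can depend only on $r:=\|\xv\|$ and the direction $\xv/r$, so it must equal $\zeta_0(r)\Id+(\zeta_2(r)-\zeta_0(r))\,\xv\xv^\T/r^2$; reading off eigenvalues $\zeta_2(r)$ (along $\xv$) and $\zeta_0(r)$ (on the orthogonal complement) reproduces the determinant of Lemma~\ref{lemma:fisher_coh}.

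Next I would write the sphere as the graph $\xv=(\tilde\xv,\,x_{\nt})$ with $x_{\nt}=\sqrt{\snr-\|\tilde\xv\|^2}$, whose Jacobian $\mat{G}:=\partial\xv/\partial\tilde\xv$ has top block $\Id_{\nt-1}$ and last row $-x_{\nt}^{-1}\tilde\xv^\T$. Because $\mat{G}$ is deterministic, the change-of-variables rule for Fisher information gives $\Jm_{y,\hv|\xv}(\tilde\xv)=\mat{G}^\T\Jm_{y,\hv|\xv}(\xv)\mat{G}$ at $\xv=(\tilde\xv,x_{\nt})$. The key observation is geometric: the columns of $\mat{G}$ are tangent to the sphere and hence orthogonal to the radial vector $\xv$, which a one-line computation confirms, $\mat{G}^\T\xv=\tilde\xv-x_{\nt}^{-1}x_{\nt}\,\tilde\xv=\zerov$. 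This annihilates the rank-one part, leaving only the isotropic term $\Jm_{y,\hv|\xv}(\tilde\xv)=\zeta_0(\sqrt\snr)\,\mat{G}^\T\mat{G}=\zeta_0(\sqrt\snr)(\Id_{\nt-1}+x_{\nt}^{-2}\tilde\xv\tilde\xv^\T)$. The matrix-determinant lemma then yields $\det\Jm=\zeta_0(\sqrt\snr)^{\nt-1}(1+x_{\nt}^{-2}\|\tilde\xv\|^2)$, and substituting $x_{\nt}^2=\snr-\|\tilde\xv\|^2$ collapses the last factor to $(1-\snr^{-1}\|\tilde\xv\|^2)^{-1}$, which is the first claim.

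For the integral I would substitute $\tilde\xv=\sqrt\snr\,\uv$ with $\uv\in\mathcal{B}_{\nt-1}$, so that $\d\tilde\xv=\snr^{\nt-1\over2}\d\uv$ and $\snr^{-1}\|\tilde\xv\|^2=\|\uv\|^2$. Pulling the constants out leaves $(\snr\,\zeta_0(\sqrt\snr))^{\nt-1\over2}\int_{\mathcal{B}_{\nt-1}}(1-\|\uv\|^2)^{-1/2}\d\uv$. I recognize the remaining integral as the area of one hemisphere of the unit sphere $\mathcal{S}_{\nt-1}\subset\mathbb{R}^{\nt}$: parametrizing that hemisphere as the graph $u_{\nt}=\sqrt{1-\|\uv\|^2}$ produces precisely the surface element $(1-\|\uv\|^2)^{-1/2}\d\uv$, so the integral equals $\tfrac12\Vol(\mathcal{S}_{\nt-1})$. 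Assembling the factors gives $\tfrac12(\snr\,\zeta_0(\sqrt\snr))^{\nt-1\over2}\Vol(\mathcal{S}_{\nt-1})$, completing the proof.

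The step I expect to be the crux is establishing the full (non-diagonal) matrix form of $\Jm_{y,\hv|\xv}(\xv)$ at an arbitrary point of the sphere and then invoking the tangent--normal orthogonality $\mat{G}^\T\xv=\zerov$; this is exactly what makes the inconvenient rank-one perturbation vanish, so that the determinant and the integral reduce, respectively, to the matrix-determinant lemma and the classical hemisphere surface-area formula.
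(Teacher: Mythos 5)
Your proof is correct and takes essentially the same approach as the paper: the same graph parametrization of the sphere with Jacobian $\Fm$ (your $\mat{G}$), the same reparametrization rule $\Jm_{y,\hv|\xv}(\tilde{\xv})=\Fm^\T \Jm_{y,\hv|\xv}(\xv)\Fm$, and the same tangent--normal orthogonality $\Fm^\T\xv=\zerov$ that annihilates the radial ($\zeta_2$) rank-one component, followed by the matrix-determinant identity. The only difference is cosmetic: for the final step the paper evaluates $\int_{\Ball_{\nt-1}}(1-\|\uv\|^2)^{-1/2}\,\d\uv$ by a radial-density/Beta-function computation, whereas you identify it geometrically as the surface area of a hemisphere of $\Sphere_{\nt-1}$; both yield ${1\over2}\Vol(\Sphere_{\nt-1})$.
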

     \begin{proof}
       See Appendix~\ref{app:fisher_coh_norm}. 
     \end{proof}
     The following proposition can be obtained with the same steps
     applied in Theorem~\ref{thm:coh}. 
     \begin{proposition} \label{prop:capa_coh_norm}
       The asymptotic capacity of the coherent 1-bit MIMO channel with
       spherical inputs, i.e.,  $\xv\in\sqrt{\snr}\mathcal{S}_{\nt-1}$, is 
       \begin{equation}
         C = {\nt-1 \over2} \log {\snr\,\zeta_0(\sqrt{\snr}) \nr\over 2\pi e} + \log
         {\Vol(\mathcal{S}_{\nt-1})\over2} + o(1).
       \end{equation}
     \end{proposition}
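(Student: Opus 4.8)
The plan is to follow the proof of Theorem~\ref{thm:coh} line by line, applying Proposition~\ref{lemma:CB} with the spherical input set in place of the ball. I would take the parameter space to be the sphere $\sqrt{\snr}\mathcal{S}_{\nt-1}$, which I parameterize one-to-one by its first $\nt-1$ coordinates $\tilde{\xv}\in\sqrt{\snr}\mathcal{B}_{\nt-1}$ (fixing the last coordinate to the nonnegative root $x_{\nt}=\sqrt{\snr-\|\tilde{\xv}\|^2}$, which is exactly the hemisphere selected by the nonnegative-diagonal constraint of $\UT_{\snr}$). Distinctness of the resulting distributions follows from Lemma~\ref{lemma:121_coh}, so the parameterization is admissible and $\mathrm{dim}(\Kc)=\nt-1$. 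Identifying $p(y,\hv\cond\xv)$ with $f(\yv;\thetav)$ and $\tilde{\xv}$ with $\thetav$ as in Theorem~\ref{thm:coh}, the leading term of \eqref{eq:tmp291} becomes $\tfrac{\nt-1}{2}\log\tfrac{\nr}{2\pi e}$.

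The constant term is the logarithm of the Jeffreys normalizer $\int_{\Kc}\sqrt{\det\Jm_{y,\hv|\xv}(\tilde{\xv})}\,\d\tilde{\xv}$, which is precisely the integral already computed in Lemma~\ref{lemma:fisher_coh_norm}, namely $\tfrac{1}{2}\left(\snr\,\zeta_0(\sqrt{\snr})\right)^{\frac{\nt-1}{2}}\Vol(\mathcal{S}_{\nt-1})$. Splitting its logarithm,
\begin{equation*}
  \log\!\left[\tfrac{1}{2}\left(\snr\,\zeta_0(\sqrt{\snr})\right)^{\frac{\nt-1}{2}}\Vol(\mathcal{S}_{\nt-1})\right]
  = \tfrac{\nt-1}{2}\log\!\left(\snr\,\zeta_0(\sqrt{\snr})\right) + \log\tfrac{\Vol(\mathcal{S}_{\nt-1})}{2},
\end{equation*}
and absorbing the first summand into the leading term through $\tfrac{\nt-1}{2}\log\tfrac{\nr}{2\pi e}+\tfrac{\nt-1}{2}\log(\snr\,\zeta_0(\sqrt{\snr}))=\tfrac{\nt-1}{2}\log\tfrac{\snr\,\zeta_0(\sqrt{\snr})\,\nr}{2\pi e}$ gives exactly the claimed expression.

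The step I expect to be the real obstacle, and which is genuinely new relative to Theorem~\ref{thm:coh}, is verifying the regularity hypotheses of Proposition~\ref{lemma:CB} under this chart. Here, unlike the ball case, the Fisher information determinant $\zeta_0(\sqrt{\snr})^{\nt-1}/(1-\snr^{-1}\|\tilde{\xv}\|^2)$ diverges as $\|\tilde{\xv}\|\to\sqrt{\snr}$, so the hemisphere cannot be covered by a single compact chart lying in the interior of the coordinate domain. I would resolve this exactly as the appendix does for Theorem~\ref{thm:coh}, but on the shrunken compact sub-ball $\Kc_\epsilon:=\{\|\tilde{\xv}\|\le\sqrt{\snr}-\epsilon\}$, on which all conditions hold and Proposition~\ref{lemma:CB} applies; since the integrand remains integrable up to the boundary, the Jeffreys normalizer over $\Kc_\epsilon$ increases to the full value of Lemma~\ref{lemma:fisher_coh_norm} as $\epsilon\to0$, so the boundary shell carries vanishing Jeffreys mass. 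The care needed is to show that including this shell changes the maximized mutual information by only $o(1)$ uniformly in both directions, so that letting $\epsilon\to0$ recovers the stated constant; the surface area $\Vol(\mathcal{S}_{\nt-1})$ is then supplied by Lemma~\ref{lemma:vol_ball}.
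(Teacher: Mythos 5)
Your proposal is correct and takes essentially the same route as the paper: the paper's entire proof of Proposition~\ref{prop:capa_coh_norm} is the one-line remark that it follows by the same steps as Theorem~\ref{thm:coh}, with the Jeffreys normalizer supplied by Lemma~\ref{lemma:fisher_coh_norm}, exactly as you do. Your additional treatment of the chart's boundary singularity (applying Proposition~\ref{lemma:CB} on a shrunken compact sub-ball and letting $\epsilon\to0$) is more care than the paper itself records, since its regularity appendix only covers the ball-input coherent case and the non-coherent case.
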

    Applying Proposition~\ref{prop:capa_coh_norm}, we have
     \begin{IEEEeqnarray}{rCl}
       \sum_{i=2}^T \max_{\xv_i} I(\xv_i; \yv_i, \Hm) &=& 
       \sum_{i=2}^T {i-1 \over2} \log {\snr\,\zeta_0(\sqrt{\snr} ) \nr\over 2\pi e} + \log {\Vol(\mathcal{S}_{i-1})\over2} + o(1)\\
       &=& {1\over2}\binom{T}{2} \log {\snr\,\zeta_0(\sqrt{\snr} ) \nr\over 2\pi e} + 
       \sum_{i=2}^T \log {\pi^{i\over2}\over\Gamma({i\over2})} + o(1).
     \end{IEEEeqnarray}%

   Finally, we obtain an upper bound on the asymptotic capacity of the non-coherent channel. 
   \begin{proposition}[Capacity upper bound]
     The capacity of the non-coherent 1-bit MIMO channel is upper
     bounded by 
     \begin{equation}
       C \le {T-1\over4} \log {\snr\,\zeta_0(\sqrt{\snr}) \nr\over 2\pi
       e} + {1\over T} \sum_{i=2}^T \log {\pi^{i\over2}\over\Gamma({i\over2})} + o(1). 
     \label{eq:capa_UB}
     \end{equation}%
   \end{proposition}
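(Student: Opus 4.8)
The plan is to assemble the genie-aided decomposition derived in the lines preceding the statement. Starting from the single-letter expression $C=\frac1T\max_{\Xm\in\UT_{\snr}}I(\Xm;\Ym)$, I would invoke the chain of inequalities that reveals $\Hm$ to the receiver and then applies the chain rule and sub-additivity of entropy, giving $I(\Xm;\Ym)\le\sum_{i=1}^T I(\xv_i;\yv_i,\Hm)$. The key observation is that the constraint set $\UT_{\snr}$ imposes no coupling across columns — each column $\xv_i$ is free on $\sqrt{\snr}\Sphere_{i-1}$ independently of the others — so the right-hand side decouples and
\begin{equation}
  \max_{\Xm\in\UT_{\snr}}I(\Xm;\Ym)\le\sum_{i=1}^T\max_{\xv_i}I(\xv_i;\yv_i,\Hm).
\end{equation}

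Next I would dispose of the degenerate $i=1$ term and apply the spherical coherent result to the rest. The upper-triangular, normalized-column structure forces the first column to be the deterministic vector $\sqrt{\snr}\,\ev_1$, so $I(\xv_1;\yv_1,\Hm)=0$ and the sum effectively starts at $i=2$. For each $i\ge2$, the $i$th term is precisely the mutual information of a coherent $\nr\times i$ channel with input on $\sqrt{\snr}\Sphere_{i-1}$, so Proposition~\ref{prop:capa_coh_norm} applies verbatim. Summing the resulting per-column values and using $\sum_{i=2}^T(i-1)=\binom{T}{2}$ together with $\Vol(\Sphere_{i-1})=2\pi^{i/2}/\Gamma(i/2)$ yields
\begin{equation}
  \sum_{i=2}^T\max_{\xv_i}I(\xv_i;\yv_i,\Hm)={1\over2}\binom{T}{2}\log{\snr\,\zeta_0(\sqrt{\snr})\,\nr\over2\pi e}+\sum_{i=2}^T\log{\pi^{i/2}\over\Gamma(i/2)}+o(1).
\end{equation}

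Finally I would divide by $T$ and simplify the prefactor through ${1\over2T}\binom{T}{2}={T-1\over4}$, which immediately delivers~\eqref{eq:capa_UB}. Since all of the analysis is carried by Proposition~\ref{prop:capa_coh_norm} — and behind it by Lemma~\ref{lemma:fisher_coh_norm} and the Clarke--Barron asymptotics of Proposition~\ref{lemma:CB} — the remaining steps are bookkeeping rather than substance. I do not expect a genuine obstacle here: the conceptual work is the genie decomposition into coherent spherical sub-channels, already executed above. The only point meriting a word of care is the aggregation of the $o(1)$ terms over the fixed, finite index set $\{2,\ldots,T\}$; as $T$ does not scale with $\nr$, their sum is still $o(1)$ and no uniformity issue arises.
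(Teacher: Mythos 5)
Your proposal is correct and follows essentially the same route as the paper: the genie-aided bound $I(\Xm;\Ym)\le\sum_i I(\xv_i;\yv_i,\Hm)$, column-wise application of Proposition~\ref{prop:capa_coh_norm} for spherical inputs, and the summation $\sum_{i=2}^T(i-1)=\binom{T}{2}$ followed by division by $T$. The only difference is that you make explicit two points the paper leaves implicit — the vanishing of the $i=1$ term (since $\xv_1=\sqrt{\snr}\,\ev_1$ is deterministic in $\UT_{\snr}$) and the decoupling of the per-column maximizations — both of which are valid and slightly strengthen the exposition.
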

   From \eqref{eq:zeta0_r00}, $\zeta_0(\sqrt{\snr})\sim
   A_0{\snr}^{-{1\over2}}$ when $\snr\to\infty$, which implies that the
   above upper bound increases with $\snr$ unboundedly. In contrast, the
   lower bound \eqref{eq:lb_T} depends on $\snr$ only through $\gamma$, which is bounded
   when $\snr\to\infty$.

   \section{Non-coherent communication: Asymptotic $T$ and SNR regimes}
   \label{sec:asympt}

   To get further insights on the scaling of the capacity, we look at
   the two regimes of interest for which we can get the exact scaling:
   1) when $T$ is large, 2) when the SNR is small.

\subsection{Large $T$ regime}

From the capacity lower bound \eqref{eq:capa_LB} and the log volume of
$\ParaCorr_\gamma$ given by \eqref{eq:logVol}, we have 
\begin{equation}
  C \ge {T-1\over4} \log {\gamma^2\nr\over \pi^2 \sqrt{e} T} -
  {\log e\over8} + o(1),
\end{equation}%
when $T$ is large and $\nt\ge T$. 
For the upper bound, we first introduce the following lemma.   
\begin{lemma}\label{lemma:sumGamma}
  When $T$ is large, we have
\begin{equation}
   {1\over T} \sum_{i=2}^T \log {\pi^{i\over2}\over\Gamma({i\over2})} =
   {T-1\over4}\log{2\pi e^{{3\over2}}\over T} + {1\over 8}\log {e\over16} +
   o(1). \label{eq:BG3} 
\end{equation}
\end{lemma}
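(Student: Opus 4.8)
The plan is to evaluate the sum $\frac{1}{T}\sum_{i=2}^T \log\frac{\pi^{i/2}}{\Gamma(i/2)}$ asymptotically by separating it into two pieces and applying Stirling's approximation to the Gamma function inside the logarithm. First I would write the summand as $\frac{i}{2}\log\pi - \log\Gamma(i/2)$, so that the sum splits into $\frac{\log\pi}{2}\sum_{i=2}^T i$ and $-\sum_{i=2}^T \log\Gamma(i/2)$. The first piece is elementary: $\sum_{i=2}^T i = \binom{T}{2} + (T-1) = \frac{T^2+T-2}{2}$, which after dividing by $T$ contributes a term of order $\frac{T}{2}\cdot\frac{\log\pi}{2}$ plus lower-order corrections. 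The real work is the second piece, the sum of $\log\Gamma(i/2)$.

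For the Gamma sum I would apply Stirling in the logarithmic form $\log\Gamma(x) = (x-\tfrac12)\log x - x\log e + \tfrac12\log(2\pi) + o(1)$ (consistent with the paper's stated approximation $\Gamma(x)\sim\sqrt{2\pi/x}\,(x/e)^x$). Substituting $x=i/2$ and summing over $i=2,\ldots,T$, the dominant contributions come from sums of the form $\sum i\log i$ and $\sum i$. The key estimate is $\sum_{i=2}^T \frac{i}{2}\log\frac{i}{2}$, which I would evaluate via the Euler--Maclaurin formula (or equivalently by comparison with $\int_1^T \frac{x}{2}\log\frac{x}{2}\,dx$); this integral yields a leading term of order $\frac{T^2}{4}\log\frac{T}{2}$ together with a $-\frac{T^2}{8}\log e$ correction. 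Dividing the whole sum by $T$ then produces the $\frac{T-1}{4}\log\frac{\cdot}{T}$ scaling seen on the right-hand side of~\eqref{eq:BG3}.

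The main obstacle, and the place where care is needed, is bookkeeping the constant and the $\frac{1}{8}\log\frac{e}{16}$ term: these arise from the $O(T)$ corrections in the Euler--Maclaurin expansion of $\sum i\log i$, from the $\frac12\log(2\pi)$ Stirling constant summed $T-1$ times, and from the boundary terms, all of which must be tracked to the correct order after the division by $T$. In particular, the linear-in-$T$ terms inside the sum become the order-one constant after normalization, so subleading pieces of the integral approximation that would usually be discarded here survive. I would organize the computation by collecting everything into the form $\frac{T-1}{4}\log\frac{c}{T} + d + o(1)$ and solving for the constants $c = 2\pi e^{3/2}$ and $d = \frac{1}{8}\log\frac{e}{16}$, checking the result numerically for a moderate value of $T$ to guard against arithmetic slips in the constants.
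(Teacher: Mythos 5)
Your proposal is correct, but it takes a genuinely different route from the paper. The paper's proof is essentially a one-liner given machinery already in place: it rewrites the product $\prod_{j=2}^{T}\Gamma(j/2)$ as a ratio of Barnes-G functions, $G\bigl(\tfrac{T+2}{2}\bigr)G\bigl(\tfrac{T+1}{2}\bigr)/\bigl(G(1)G(\tfrac32)\bigr)$, and then invokes the asymptotic $G(x+1)\sim A^{-1}(2\pi)^{x/2}x^{x^2/2-1/12}e^{-3x^2/4+1/12}$ (with $A$ the Glaisher--Kinkelin constant) exactly as in the proof of Lemma~\ref{lemma:regionTheta}, so all bookkeeping is inherited from a quoted high-order expansion. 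You instead expand $\log\Gamma(i/2)$ by Stirling and sum directly, handling $\sum_i i\log i$ by Euler--Maclaurin. Your route is sound, and the reason it closes is the one you identify: since the target is the normalized sum up to $o(1)$, you only need the unnormalized sum up to $o(T)$, so the per-term Stirling error $O(1/i)$ and all $O(\log T)$ contributions (including the Glaisher--Kinkelin constant and the $\tfrac{1}{12}\log T$ term that a sharper expansion of $\sum_i i\log i$ would produce) are negligible, while the $T\log T$- and $T$-order terms --- the Euler--Maclaurin boundary term $\tfrac{T}{4}\log\tfrac{T}{2}$, the $-\tfrac12\sum_i\log\tfrac{i}{2}$ piece coming from the $(x-\tfrac12)$ factor in Stirling, and the $(T-1)\cdot\tfrac12\log(2\pi)$ accumulation --- must all be kept. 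Carrying out your outline does reproduce the statement: the quadratic coefficient collects to $\tfrac14\log\tfrac{2\pi}{T}+\tfrac38\log e=\tfrac14\log\tfrac{2\pi e^{3/2}}{T}$, and the linear-in-$T$ terms match $\tfrac18\log\tfrac{e}{16}$ once the residual $\tfrac{\log T}{4}$ is absorbed into the $\tfrac{T-1}{4}$ (rather than $\tfrac{T}{4}$) prefactor. What each approach buys: the paper's is shorter given the Barnes-G apparatus and yields more precise (down to $o(1)$ and beyond) expansions for free; yours is elementary and self-contained, avoiding special-function identities at the cost of the careful constant-tracking you rightly flag as the main hazard.
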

\begin{proof}
  See Appendix~\ref{app:sumGamma}. 
\end{proof}
Applying \eqref{eq:BG3} to \eqref{eq:capa_UB}, we have  
\begin{equation}
  C \le {T-1\over4} \log {\snr\,\zeta_0(\sqrt{\snr}) \sqrt{e}
  \nr\over T } + {1\over 8}\log {e\over16} + o(1).
\end{equation}%
From the above lower and upper bounds, we have the following scaling.   
   \begin{corollary}[Large $T$ regime]
     When $\nt\ge T$ and $\snr>0$, we have 
     $C\sim {T-1\over4} \log {\nr\over T}$. More precisely, 
     \begin{equation}
       C = {T-1\over4} \left( \log{\nr\over T} + O(1) \right).
     \end{equation}%
   \end{corollary}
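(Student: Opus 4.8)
The plan is to combine the matching lower and upper bounds on $C$ established in the two displays immediately preceding the statement, which already agree to leading order $\frac{T-1}{4}\log\frac{\nr}{T}$ and differ only by terms that stay bounded. I would first recall these bounds: the lower bound $C \ge \frac{T-1}{4}\log\frac{\gamma^2\nr}{\pi^2\sqrt{e}\,T} - \frac{\log e}{8} + o(1)$, which follows from the non-asymptotic Proposition~\ref{prop:LB} together with the large-$T$ log-volume expansion \eqref{eq:logVol}; and the upper bound $C \le \frac{T-1}{4}\log\frac{\snr\,\zeta_0(\sqrt{\snr})\sqrt{e}\,\nr}{T} + \frac{1}{8}\log\frac{e}{16} + o(1)$, which follows from the genie-aided bound \eqref{eq:capa_UB} and Lemma~\ref{lemma:sumGamma}. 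The various $o(1)$ corrections come from two sources, the large-$\nr$ Bayesian limit of Theorem~\ref{thm:capa} and the large-$T$ Stirling approximations, and all vanish in the regime of interest where $\nr$ and $T$ are both large.

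Next I would extract the common leading factor. Rewriting the lower bound as $\frac{T-1}{4}\log\frac{\nr}{T} + \frac{T-1}{4}\log\frac{\gamma^2}{\pi^2\sqrt{e}} - \frac{\log e}{8} + o(1)$ and the upper bound as $\frac{T-1}{4}\log\frac{\nr}{T} + \frac{T-1}{4}\log\bigl(\snr\,\zeta_0(\sqrt{\snr})\sqrt{e}\bigr) + \frac{1}{8}\log\frac{e}{16} + o(1)$, I would verify that every residual is controlled. For fixed $\snr>0$ one has $\gamma\in(0,1)$ and, by Lemma~\ref{lemma:zeta}, $\zeta_0(\sqrt{\snr})$ is a finite strictly positive constant; hence $\log\frac{\gamma^2}{\pi^2\sqrt{e}}$ and $\log\bigl(\snr\,\zeta_0(\sqrt{\snr})\sqrt{e}\bigr)$ are finite constants that depend on $\snr$ but on neither $T$ nor $\nr$. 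Consequently each contributes a $\frac{T-1}{4}\cdot O(1)$ term, while the additive constants $-\frac{\log e}{8}$ and $\frac{1}{8}\log\frac{e}{16}$ are $O(1)$ and, once divided by $\frac{T-1}{4}$, are even $o(1)$ as $T\to\infty$.

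Dividing both bounds by $\frac{T-1}{4}$ then sandwiches $\frac{4C}{T-1} - \log\frac{\nr}{T}$ between two quantities that remain bounded uniformly in $(\nr,T)$ for fixed $\snr$, which is exactly the claim $C = \frac{T-1}{4}\bigl(\log\frac{\nr}{T} + O(1)\bigr)$. The equivalence $C \sim \frac{T-1}{4}\log\frac{\nr}{T}$ then follows whenever $\log\frac{\nr}{T}\to\infty$, i.e., $\nr/T\to\infty$, since the bounded $O(1)$ is negligible against the diverging logarithm. The only real care needed is the bookkeeping of the two nested asymptotic regimes: the $\nr\to\infty$ corrections should be discharged before the $T\to\infty$ scaling is read off, and one must confirm that the $\snr$-dependent constants absorbed into the $O(1)$ are genuinely independent of both $\nr$ and $T$ — which is precisely where the finiteness and positivity of $\zeta_0(\sqrt{\snr})$ and the bound $\gamma<1$ enter.
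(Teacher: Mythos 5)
Your proposal is correct and follows essentially the same route as the paper: combining the large-$T$ specializations of the lower bound (Proposition~\ref{prop:LB} with \eqref{eq:logVol}) and the genie-aided upper bound (\eqref{eq:capa_UB} with Lemma~\ref{lemma:sumGamma}), then observing that both take the form ${T-1\over4}\bigl(\log{\nr\over T}+O(1)\bigr)$ with $\snr$-dependent constants independent of $\nr$ and $T$. Your additional bookkeeping — checking $\gamma\in(0,1)$, the positivity and finiteness of $\zeta_0(\sqrt{\snr})$, and the requirement $\nr/T\to\infty$ for the $\sim$ statement — is sound and merely makes explicit what the paper leaves implicit.
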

   The above result suggests that in the large $T$ regime, the uniform
   distribution of $\corrv \in \ParaCorr_\gamma$ is asymptotically optimal. On the other hand,
   the independent scalar signaling achieves the rate \eqref{eq:lb_T}
   that only scales as ${T-1\over4} \log{\nr\over T^2}$, which is
   strictly suboptimal when $T$ is large.  

\subsection{Low SNR regime}
Next, we consider the low SNR regime. The following reformulation of the
channel distribution is useful and can be verified by simple inspection. 
   \begin{lemma}
     The pmf \eqref{eq:orthant_pmf} can be rewritten as 
     \begin{equation}
       f(\pmb{y}; \corrv) = {2^{-T}\over \sqrt{\det\left(
       \Rrho(\corrv) \right)}} \EE_{\pmb{u}}\left( 
       e^{
       {1\over2}(\pmb{u}\circ\pmb{y})^\T\left(\Id-\left(\Rrho(\corrv)\right)^{-1}\right)(\pmb{u}\circ\pmb{y}))}
       \right), \label{eq:llh}
     \end{equation}%
     where $\pmb{u}$ contains i.i.d.~random variables with the standard
     half normal distribution.\footnote{If $X\sim\mathcal{N}(0,1)$,
     $|X|$ follows the standard half normal distribution.} 
   \end{lemma}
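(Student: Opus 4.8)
The plan is to start from the definition of $f(\yv;\corrv)$ as a $T$-dimensional Gaussian orthant probability, reduce it to an integral over the positive orthant by a sign change of variables, and then recognize the claimed half-normal expectation by matching the Gaussian weight. First I would write, with $\wv\sim\mathcal{N}(0,\Rrho(\corrv))$,
\begin{equation}
  f(\yv;\corrv) = \mathbb{P}\{\sign(\wv)=\yv\}
  = {1\over (2\pi)^{T/2}\sqrt{\det\left(\Rrho(\corrv)\right)}}
  \int_{\{\wv:\, y_t w_t>0,\ \forall\, t\}}
  e^{-{1\over2}\wv^\T\Rrho(\corrv)^{-1}\wv}\,\d\wv,
\end{equation}
i.e., the Gaussian mass of the orthant singled out by the sign pattern $\yv$ (the event $\sign(w_t)=y_t$ being $y_tw_t>0$).

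Next I would apply the change of variables $\wv=\yv\circ\uv$. Since $y_t\in\{-1,1\}$, this map $\uv\mapsto\wv$ is a bijection from the positive orthant $\{\uv:u_t>0\}$ onto the orthant above (because $y_tw_t>0\Leftrightarrow u_t>0$), its Jacobian $\diag(\yv)$ has unit absolute determinant, and $\wv^\T\Rrho(\corrv)^{-1}\wv=(\yv\circ\uv)^\T\Rrho(\corrv)^{-1}(\yv\circ\uv)$. The integral thus becomes one over $\{\uv>0\}$ with integrand $e^{-{1\over2}(\yv\circ\uv)^\T\Rrho(\corrv)^{-1}(\yv\circ\uv)}$. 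The key step is then to identify this positive-orthant integral with a half-normal expectation: the joint density of $\uv$ with i.i.d.\ standard half-normal entries is ${2^T\over(2\pi)^{T/2}}e^{-{1\over2}\uv^\T\uv}$ on $\{\uv>0\}$. Inserting this density and using the identity $\|\yv\circ\uv\|^2=\|\uv\|^2$ (as $y_t^2=1$), the quadratic term ${1\over2}(\yv\circ\uv)^\T\Id(\yv\circ\uv)$ contributed by $\Id-\Rrho(\corrv)^{-1}$ exactly cancels the $-{1\over2}\uv^\T\uv$ in the half-normal weight, leaving
\begin{equation}
  \EE_{\uv}\left[ e^{{1\over2}(\uv\circ\yv)^\T\left(\Id-\Rrho(\corrv)^{-1}\right)(\uv\circ\yv)} \right]
  = {2^T\over(2\pi)^{T/2}}\int_{\{\uv>0\}}
  e^{-{1\over2}(\uv\circ\yv)^\T\Rrho(\corrv)^{-1}(\uv\circ\yv)}\,\d\uv.
\end{equation}
Multiplying by ${2^{-T}}/{\sqrt{\det\left(\Rrho(\corrv)\right)}}$ reconstitutes the Gaussian normalization ${1\over(2\pi)^{T/2}\sqrt{\det\left(\Rrho(\corrv)\right)}}$, and undoing the change of variables recovers precisely the orthant integral, which equals $f(\yv;\corrv)$.

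There is no genuine obstacle here—this is exactly the ``simple inspection'' the statement alludes to. The only point needing the slightest care is the bookkeeping of normalizing constants: the factor $2^T$ from the $T$ half-normal densities must cancel the explicit $2^{-T}$, and $(2\pi)^{-T/2}$ must pair with $\sqrt{\det}^{-1}$. I would also remark that the boundary $\{u_t=0\}$ has Lebesgue measure zero, so the strict versus non-strict sign conventions are immaterial.
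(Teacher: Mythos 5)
Your proof is correct, and it is precisely the ``simple inspection'' the paper alludes to: the paper provides no explicit proof for this lemma, and your argument—writing the orthant probability as a Gaussian integral, changing variables $\wv = \yv\circ\uv$ to the positive orthant with unit Jacobian, and splitting the Gaussian weight $e^{-\frac{1}{2}\wv^\T\Rrho(\corrv)^{-1}\wv}$ into $e^{\frac{1}{2}(\uv\circ\yv)^\T(\Id-\Rrho(\corrv)^{-1})(\uv\circ\yv)}\,e^{-\frac{1}{2}\|\uv\|^2}$ so that the half-normal density absorbs the latter factor—is the intended verification, with all normalization constants ($2^{-T}$ against the $2^T$ from the half-normal densities, and $(2\pi)^{-T/2}$ against $\sqrt{\det}$) accounted for correctly.
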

   Note that by definition
      $\Rrho(\corrv)
      = \Id + \tilde{\Rrho}(\corrv)$ with
      $\trace(\tilde{\Rrho}(\corrv)) = 0$. Since $\|\corrv\|=O(\gamma)$,
      applying Taylor's approximation\footnote{We have $\det(\Id + \tilde{\Rrho}(\corrv)) = 1 +
      \trace(\tilde{\Rrho}(\corrv)) + O(\|\corrv\|^2)$ and $(\Id +
      \tilde{\Rrho}(\corrv))^{-1} = \Id - \tilde{\Rrho}(\corrv) +
      O(\|\corrv\|^2)$.} for $\gammasnr$
      small, we have 
   \begin{IEEEeqnarray}{rCl}
     \det(\Rrho(\corrv)) &=& 1 + O(\gammasnr^2),
     \label{eq:tmp112}\\
     \Id-\left(\Rrho(\corrv)\right)^{-1} &=&
      \tilde{\Rrho}(\corrv) + O(\gammasnr^2) = O(\gamma).
      \label{eq:tmp113}
   \end{IEEEeqnarray}%
   It follows that 
   \begin{equation}
     f(\pmb{y}; \corrv) = 2^{-T} (1+O(\gamma)), \label{eq:tmp432} 
   \end{equation}%
   since $\sqrt{\det(\Rrho(\corrv))} =  1+O(\gamma^2)$ and
   \begin{IEEEeqnarray}{rCl}
     \EE_{\pmb{u}}\left( e^{
       {1\over2}(\pmb{u}\circ\pmb{y})^\T\left(\Id-\left(\Rrho(\corrv)\right)^{-1}\right)(\pmb{u}\circ\pmb{y}))}
       \right) &=& \EE_{\pmb{u}}\left( e^{
       {1\over2}\|\pmb{u}\circ\pmb{y}\|^2 O(\gamma)}
       \right) \\
       &=& 1+O(\gamma), 
   \end{IEEEeqnarray}%
   where the last equality is from the dominated convergence theorem.
   Furthermore, one can verify that 
   \begin{IEEEeqnarray}{rCl}
{\partial\over \partial \corr_i}
   \ln \det(\Rrho(\corrv)) &=& 2\left[\Rrho(\corrv)^{-1}\right]_{i_1,i_2} \\
   &=& -2 \corr_i + O(\gamma^2) = O(\gamma), 
   \end{IEEEeqnarray}%
  and 
    \begin{IEEEeqnarray}{rCl}
      {\partial\over \partial \corr_i}\EE_{\pmb{u}}\left( e^{
       {1\over2}(\pmb{u}\circ\pmb{y})^\T\left(\Id-\left(\Rrho(\corrv)\right)^{-1}\right)(\pmb{u}\circ\pmb{y}))}
       \right) &=& \EE_{\pmb{u}}\left( {\partial\over \partial q_i} 
e^{ {1\over2}(\pmb{u}\circ\pmb{y})^\T\left(\Id-\left(\Rrho(\corrv)\right)^{-1}\right)(\pmb{u}\circ\pmb{y}))} \right) \\
&=& \EE_{\pmb{u}}\left( u_{i_1} u_{i_2} y_{i_1} y_{i_2} + O(\gamma)\right)\\ 
&=& {2\over\pi} y_{i_1} y_{i_2} + O(\gamma), 
   \end{IEEEeqnarray}%
   where the first equality is obtained by putting the differentiation
   inside the integral; the second equality is from the differentiation and  Taylor's
   approximation; the last one is from the dominated convergence theorem
   and the fact that for $i\ne j$ we have $\EE(u_i u_j) = (\EE(u_i))^2 = \frac{2}{\pi}$
   for standard half normal random variables.  

   Putting everything together, we have for every $i\in\Ic$ 
   \begin{IEEEeqnarray}{rCl}
     {\partial\over\partial\corr_{i}}\ln f(\pmb{y}; \corrv)
     &=& -{1\over2} {\partial\over\partial\corr_{i}} \ln
     {\det(\Rrho(\corrv))} + {\partial\over\partial\corr_{i}} \ln  
\EE_{\pmb{u}}\left( e^{
{1\over2}(\pmb{u}\circ\pmb{y})^\T\left(\Id-\left(\Rrho(\corrv)\right)^{-1}\right)(\pmb{u}\circ\pmb{y}))}
\right) \\
&=& {2\over\pi} y_{i_1} y_{i_2} + O(\gamma). 
   \end{IEEEeqnarray}%
   Therefore, for $i,j\in\Ic$, the $(i, j)$-th entry of the Fisher
   information matrix $\pmb{J}_{\yv|\corrv}(\corrv)$ is
   \begin{equation}
     {4\over \pi^2}  \EE\left( y_{i_1} y_{i_2} y_{j_1}
     y_{j_2} \right) + O(\gammasnr^2),
   \end{equation}%
   where 
   \begin{equation} \EE\left( y_{i_1} y_{i_2} y_{j_1} y_{j_2} \right) = \ind\left( i=j \right) + \ind\left( i\ne j \right) O(\gammasnr). 
   \end{equation}%
   Indeed, whenever $i\ne j$, the product $y_{i_1} y_{i_2} y_{j_1}
   y_{j_2}$ can be written as a product of $y$'s from two or four coordinates.
   In particular, when $(i_1,i_2,j_1,j_2)$ are all distinct, we have 
   \begin{IEEEeqnarray}{rCl}
     \mathbb{P}\left\{ y_{i_1} y_{i_2} y_{j_1} y_{j_2} = 1\right\} &=&
     \mathbb{P}\left\{ (y_{i_1}, y_{i_2}, y_{j_1}, y_{j_2}) \in \mathcal{O}_1
     \right\} \\
     &=& 8 \left({1\over16}+O(\gamma) \right) \\
     &=& {1\over 2}+O(\gamma), 
   \end{IEEEeqnarray}%
   where $\mathcal{O}_1$ is the set of quadruples in $\{-1,+1\}$ such
   that the product is $1$; the second equality is from
   \eqref{eq:tmp432}, i.e., each orthant has almost the same probability
   when $\gamma\to0$. It follows that $\EE\left( y_i y_j y_{i'} y_{j'}
   \right) = 2 \mathbb{P}\left\{ y_i y_j y_{i'} y_{j'} = 1\right\} - 1 =
   O(\gamma)$. Same arguments apply when only two coordinates are involved. 
   
Therefore, we have 
   $\det(\pmb{J}_{\yv|\corrv}(\corrv)) = \det({4\over\pi^2} 
   (\Id+ \pmb{S}))$ where $\pmb{S}$ has $0$ in the diagonal and
   $O(\gammasnr)$ terms in the off diagonals. Since $\trace(\pmb{S}) = 0$,
   we have $\det(\Id+ \pmb{S}) = 1 + O(\gammasnr^2)$ from Taylor's
   approximation. Thus, 
   \begin{equation}
     \sqrt{\det(\pmb{J}_{\yv|\corrv}(\corrv))} = \left({2\over\pi} 
     \right)^{\binom{T}{2}} (1+O(\gammasnr^2)). 
   \end{equation}%

   And the integral over $\ParaCorr_\gamma$ becomes 
   \begin{IEEEeqnarray}{rCl}
     \int_{\ParaCorr_\gamma} \sqrt{\det(\pmb{J}_{\yv|\corrv}(\corrv))} \d
     \corrv &=&  \int_{\ParaCorr_\gamma} \left({2\over\pi} 
     \right)^{\binom{T}{2}} (1+O(\gammasnr^2)) \d \corrv \\
     &=& \Vol(\ParaCorr_\gamma) \left({2\over\pi} 
     \right)^{\binom{T}{2}} (1+O(\gammasnr^2)).
   \end{IEEEeqnarray}%
   
   Applying Lemma~\ref{lemma:regionTheta}, the following becomes
   straightforward. 
   \begin{proposition}
     For any $\nt\ge T\ge2$, when $\snr$ is small, 
      \begin{equation}
       {C} = {T-1\over4} \log { \gammasnr^2 \over 2\pi^3 e} \nr + {1\over
       T} \log \Vol(\ParaCorr_1) + o(1). 
     \end{equation}%
   \end{proposition}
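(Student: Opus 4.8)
The plan is to obtain the statement essentially by substitution into Theorem~\ref{thm:capa}, since all of the analytic heavy lifting is already contained in the preceding low-SNR expansion of the Fisher information. By \eqref{eq:capa1} the capacity is $C={T-1\over4}\log{\nr\over2\pi e}+{1\over T}\log\alpha_{\snr,T}+o(1)$, so the whole task reduces to evaluating ${1\over T}\log\alpha_{\snr,T}$ as $\snr\to0$, equivalently as $\gamma\to0$. The score computation carried out just above gives ${\partial\over\partial\corr_i}\ln f(\yv;\corrv)={2\over\pi}y_{i_1}y_{i_2}+O(\gamma)$, so that $\pmb{J}_{\yv|\corrv}(\corrv)$ flattens to ${4\over\pi^2}\Id$: its diagonal entries tend to ${4\over\pi^2}$ while the off-diagonal entries are $O(\gamma)$ and therefore affect the determinant only at order $O(\gamma^2)$. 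This yields the estimate $\sqrt{\det\pmb{J}_{\yv|\corrv}(\corrv)}=\bigl({2\over\pi}\bigr)^{\binom{T}{2}}(1+O(\gamma^2))$, which I would argue holds uniformly over the compact set $\ParaCorr_\gamma$.

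With this uniform estimate I would pull the constant out of the integral \eqref{eq:alpha_nc}, writing $\alpha_{\snr,T}=\bigl({2\over\pi}\bigr)^{\binom{T}{2}}\Vol(\ParaCorr_\gamma)(1+O(\gamma^2))$, and then invoke the volume scaling \eqref{eq:vol_gamma} of Lemma~\ref{lemma:regionTheta}, namely $\Vol(\ParaCorr_\gamma)=\gamma^{\binom{T}{2}}\Vol(\ParaCorr_1)$. Taking ${1\over T}\log$, using the identity $\binom{T}{2}/T=(T-1)/2$, and noting that $\log(1+O(\gamma^2))=o(1)$ as $\snr\to0$, the term ${1\over T}\log\alpha_{\snr,T}$ splits into a piece proportional to $\log\gamma$, a piece coming from the per-coordinate constant ${2\over\pi}$, and the $\snr$-independent term ${1\over T}\log\Vol(\ParaCorr_1)$. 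Adding the Clarke--Barron term ${T-1\over4}\log{\nr\over2\pi e}$ from \eqref{eq:capa1} and merging the $\log\gamma$ and $\log\nr$ contributions, the dominant behaviour is ${T-1\over4}\log(\gamma^2\nr)+{1\over T}\log\Vol(\ParaCorr_1)$ up to an absolute constant collected from the ${2\over\pi}$ and $2\pi e$ factors, which is exactly the announced form; the residual $O(\gamma^2)$ together with the $o(1)$ of Theorem~\ref{thm:capa} combine into the stated $o(1)$.

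I expect the only genuine obstacle to be the justification that the flat-metric estimate $\sqrt{\det\pmb{J}_{\yv|\corrv}(\corrv)}=(2/\pi)^{\binom{T}{2}}(1+O(\gamma^2))$ is \emph{uniform} in $\corrv$, so that it may legitimately be factored out of the integral and the error stays $o(1)$ after ${1\over T}\log$; this rests on the compactness of $\ParaCorr_\gamma$ and on the off-diagonal Fisher entries being $O(\gamma)$ uniformly, which in turn follows from the dominated-convergence arguments used in the score expansion. The conceptual point worth emphasizing is that at low SNR the Fisher metric becomes flat and isotropic (constant ${4\over\pi^2}$ per coordinate), so Jeffreys' prior $\propto\sqrt{\det\pmb{J}_{\yv|\corrv}(\corrv)}$ degenerates to the \emph{uniform} distribution on $\ParaCorr_\gamma$; this is precisely the input used in the lower-bound scheme of Proposition~\ref{prop:LB}, which explains why uniform signaling is asymptotically optimal in this regime and why the exact low-SNR capacity coincides with the capacity lower bound.
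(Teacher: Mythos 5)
Your route is exactly the paper's own proof: specialize Theorem~\ref{thm:capa}, use the low-SNR score expansion to get $\sqrt{\det \pmb{J}_{\yv|\corrv}(\corrv)} = \left({2\over\pi}\right)^{\binom{T}{2}}(1+O(\gamma^2))$ uniformly on $\ParaCorr_\gamma$, pull this constant out of the integral \eqref{eq:alpha_nc}, and invoke the volume scaling \eqref{eq:vol_gamma}. Your worry about uniformity is also settled the way you suggest: every Taylor estimate is controlled by $\|\corrv\| \le \sqrt{\binom{T}{2}}\,\gamma$, which is uniformly small over the compact set $\ParaCorr_\gamma$, so factoring the constant out of the integral is legitimate.

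The genuine gap is the one step you wave through: ``merging the $\log\gamma$ and $\log\nr$ contributions \ldots which is exactly the announced form.'' Carrying out that arithmetic gives
\[
{1\over T}\log\alpha_{\snr,T} = {T-1\over 2}\log{2\gamma\over\pi} + {1\over T}\log\Vol(\ParaCorr_1) + o(1),
\]
and hence
\[
C = {T-1\over4}\log{\nr\over2\pi e} + {T-1\over4}\log{4\gamma^2\over\pi^2} + {1\over T}\log\Vol(\ParaCorr_1) + o(1)
= {T-1\over4}\log{2\gamma^2\nr\over\pi^3 e} + {1\over T}\log\Vol(\ParaCorr_1) + o(1),
\]
which is \emph{not} the announced expression: it exceeds ${T-1\over4}\log{\gamma^2\nr\over2\pi^3 e} + {1\over T}\log\Vol(\ParaCorr_1)$ by ${T-1\over4}\log 4 = {T-1\over2}$ bits. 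So either the derivation (yours and the paper's coincide) is correct and the constant displayed in the proposition is off by a factor of $4$ inside the logarithm, or the derivation is flawed; both cannot hold. The case $T=2$ decides in favor of the derivation: $\alpha_{\snr,2} = 4\arccos\bigl(\sqrt{{1\over\pi}\arccos\gamma}\bigr) - \pi \sim {4\gamma\over\pi}$ as $\gamma\to0$ (the lower bound in \eqref{eq:tmp728} is the tangent approximation at $\gamma=0$), which yields $C = {1\over4}\log{8\gamma^2\nr\over\pi^3 e} + o(1)$; this matches the corrected constant together with $\Vol(\ParaCorr^{(2)}_1) = 2$, not the stated one. This also means your closing interpretation (and the paper's remark) that the low-SNR capacity ``coincides'' with the lower bound \eqref{eq:capa_LB} must be tempered: the Fisher-information answer sits ${T-1\over2}$ bits \emph{above} that bound---as one should expect, since the Gaussian-maximum-entropy and Lipschitz steps in the lower bound are lossy---so uniform signaling in $\ParaCorr_\gamma$ is optimal only up to this additive constant. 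In short, your plan is sound and identical to the paper's, but a complete proof must do the final constant bookkeeping, and that bookkeeping does not produce the statement as written.
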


   Remarkably, the above low SNR capacity coincides with the capacity
   lower bound \eqref{eq:capa_LB} up to a vanishing term. This implies
   that uniform distribution over $\ParaCorr_\gamma$ is asymptotically optimal in
   the low SNR regime as well.

\section{Discussions}\label{sec:discussion}

   In this work, we have investigated the capacity of 1-bit
   MIMO fading channels when the number of receive antennas is large. In this
   regime, the communication problem is intimately connected to the parameter
   estimation problem in the large sample size regime, where the Fisher
   information plays a central role. In particular, the capacity scales
   logarithmically with the number of receive antennas~(samples), and almost 
   linearly with the dimension of the parameter space. Intuitively, the
   dimension of the parameter space represents the degrees of freedom of
   the channel in our framework.

   \subsection{Coherent vs. non-coherent communication}

   Both the coherent and non-coherent cases have been studied. In the
   coherent case, we obtain a complete characterization of the asymptotic capacity
   in \eqref{eq:capa_coh}. In particular, the output distribution is
   one-to-one parameterized by the input $\xv$ such that the degrees of
   freedom is ${\nt\over2}$. Furthermore, for $\nt\ge3$, the SNR gain is
   unbounded and the capacity scales as ${\nt\over2}
   \log{\sqrt{\snr} \nr\over\nt}$ when $\snr$ is large. 

   In the non-coherent case, due to the lack of channel knowledge, the
   output distribution depends on the input only through the normalized
   covariance matrix $\Rm_0(\Xm)$, and the degrees of
   freedom of the channel is reduced to ${T-1\over4} \le {\nt-1\over4}$.
   While the exact asymptotic capacity remains unknown for
   $T>3$, the results for $T\le3$ suggest that the SNR gain is bounded,
   unlike the coherent case. Our results also reveal that uniform
   signaling in the covariance space is near optimal in both the large
   $T$ regime and the low SNR regime.

   \subsection{Open problems and extensions}

   One of the remaining open problems in this work is the exact characterization
   of the asymptotic capacity in the non-coherent case when $T>3$. As mentioned
   earlier, the main obstacle lies in the complexity of the 
   orthant probability in high dimensions. While our genie-aided upper bound,
   which involves providing the CSI to the receiver, matches the exact
   scaling in $T$ observed in our lower bound, it explodes with a
   growing $\snr$, leading to an unbounded gap to the lower bound.
   Closing this gap with tighter bounds remains an
   intriguing future direction. 

   In this work, we focused on the extreme case with one bit per output.
   However, our results can potentially extend to higher resolutions. In
   particular, in the coherent case, the result in
   Theorem~\ref{thm:coh} still holds with a simple redefinition of the function
   $\xi$ according to the multi-bit quantizer. In the non-coherent case,
   however, we need to redefine the parameter space for the multi-bit
   case, as the output distribution depends on the original covariance matrix,
   instead of the normalized one. Indeed, the output is no longer
   invariant to scaling before quantization. As a matter of fact, even
   the unquantized case has not been explored to the best of our knowledge. 

  In the non-coherent case, while the assumption $\nt\ge T$ can be
  relaxed, parameterizing the covariance space becomes more intricate. In
  general, it is not true that signaling with upper
  triangular input matrix is without loss of optimality. The main
  challenge in this case is to identify an adequate one-to-one
  parameterization that allows for computation of the volume of the
  parameter space.  

   For simplicity and tractability, we have studied the real channel
   model. However, the complex channel model with Rayleigh fading can be
   approached with the same methodology. The coherent case is rather
   straightforward due to simple closed-form expression on
   the channel output distribution, akin to the real case. For the
   non-coherent case, one can work with the equivalent real model by
   rearranging the real and imaginary parts of the matrices involved. By
   considering only the real part (or only the imaginary part) of the
   output, we have a real fading model with $\nr$ independent outputs
   given the input matrix $\tilde{\Xm} = \left[\begin{smallmatrix}
     \Re(\Xm) \\ \Im(\Xm) \end{smallmatrix} \right] \in
     \mathbb{R}^{2\nt \times T}$. One can then apply the results for
     the real channel with $\tilde{\nt} = 2\nt$ transmit antennas, which 
     provides a lower bound on the capacity of the complex channel. 

  Finally, throughout this study, we focused on the peak power constraint. 
  It would be interesting to investigate the case with an additional 
  average power constraint, or the case with only average power
  constraint. In the case with a finite number of output bits, it is
  known that average power constraint implies bounded input 
  support~\cite{singh2009limits}. In our case, however, the same
  conclusion may not hold since the number of receive antennas is not
  bounded.

       \appendix
       \subsection{Verification of regularity conditions of Proposition~\ref{lemma:CB} }
For completeness, we include this appendix to verify that the regularity
conditions of Proposition~\ref{lemma:CB} are satisfied. 
It can be skipped without impact on the
understanding of the rest of the paper. 
\subsubsection{Regularity conditions for Proposition~\ref{lemma:CB}}\label{app:CB_cond}
From \cite{CB94}, we know that Proposition~\ref{lemma:CB} holds if the following conditions are
satisfied. 
\begin{enumerate}
  \item[(C1)] The subset $\Kc$ is compact and inside the interior of
    $\Omega$ of dimension $d:=\dim(\Omega)$.  
  \item[(C2)] The parameterization of the family $\{f(\yv; \thetav):\
    \thetav\in\Omega\}$ is one-to-one in $\Omega$.  
  \item[(C3)] The density $f(\yv; \thetav)$ is twice continuously
    differentiable in $\thetav$ for almost every $\yv$. For every
    $\thetav\in\Kc$, there is a
    $\delta = \delta(\thetav)$ so that for each $j,k\in[d]$
    \begin{equation}
      \E \sup_{\thetav'\in \Ball(\thetav,\delta)} \left| {\partial^2
      \over \partial\theta'_j \partial \theta'_k} \ln f(\yv; \thetav')
      \right|^2  \label{eq:tmp329} 
    \end{equation}%
    is finite and continuous as a function of $\thetav$ and for each
    $j\in[d]$  
    \begin{equation}
      \E  \left| {\partial
      \over \partial\theta_j } \ln f(\yv; \thetav)
      \right|^{2+\epsilon}   \label{eq:tmp328}
    \end{equation}%
    is finite and continuous as a function of $\thetav$ for every
    $0\le\epsilon\le\epsilon_0$ for some $\epsilon_0>0$.  
\end{enumerate}

\subsubsection{Verification for the coherent case}\label{app:CB_cond_coh}
In this case, $\Kc = \sqrt{\snr} \Ball_{\nt}$ which is closed and bounded in
$\mathbb{R}^{\nt}$, and therefore compact. Condition C2 is verified
in Lemma~\ref{lemma:121_coh}. In the following,
let us identify $\xv$ with $\thetav$, $(y,\hv)$ with $\yv$, and $p(y,\hv \cond
\xv)$ with $f(\yv; \thetav)$.
To check Condition C3, let us first write the first and second derivatives.  
From $p(y\cond \xv, \hv) = Q(-y \hv^\T \xv)$, we have
\begin{IEEEeqnarray}{rCl}
{\partial \over \partial x_j} \ln p(y,\hv \cond \xv) &=& {y
\phi(\hv^\T \xv) \over  Q(-y \hv^\T \xv)} h_j,
\label{eq:tmp200}\\
{\partial^2 \over \partial x_j \partial x_k} \ln p(y,\hv
\cond \xv)
&=& -{ \phi^2(\hv^\T \xv) \over  Q(-y \hv^\T \xv)^2} h_j h_k - {y  \phi(\hv^\T \xv)
\hv^\T \xv \over  Q(-y \hv^\T \xv)} h_j h_k. \label{eq:tmp201}
\end{IEEEeqnarray}%
Instead of proving that \eqref{eq:tmp329} is finite, we will prove a
stronger version, namely, for a fixed $\delta>0$, 
\begin{equation}
  \E_{\hv} \sup_{\xv\in(\sqrt{\snr}+\delta)\Ball_{\nt}, y\in\Yc} \left| {\partial^2
      \over \partial x_j \partial x_k} \ln p(y,\hv \cond \xv)
      \right|^2  < +\infty. \label{eq:tmp429} 
\end{equation}%
First, note that $Q(-y \hv^\T \xv) \ge
Q( |\hv^\T \xv|)$ and
${\phi(|t|)\over Q(|t|)} \le |t| + c$ for some universal
constant\footnote{For any $t>0$, we have ${\phi(t)\over Q(t)} \le
{1+t^2 \over t}$. We can strengthen it by ${\phi(t)\over
Q(t)}=\ind(t\le1) {\phi(t)\over
Q(t)} +  \ind(t>1) {\phi(t)\over
Q(t)} \le {\phi(1)\over Q(1)} + 1 + t$. } $c>0$. Then,
we deduce that 
  \begin{IEEEeqnarray}{rCl}
  {\phi(\hv^\T \xv) \over Q(\hv^\T \xv)} &\le&
   |\hv^\T \xv| + c \\ 
  &\le& (\sqrt{\snr}+\delta)\|\hv\|  + c, 
  \end{IEEEeqnarray}%
where the last inequality is from Cauchy-Schwartz $|\hv^\T \xv| \le \|\hv\|
\|\xv\|$. Finally, applying $|h_j| \le
\|\hv\|$ and $|A+B|^2 \le 2|A|^2 + 2 |B|^2$, and some simple algebra, we
conclude that $\left| {\partial^2 \over \partial x_j \partial x_k} \ln
p(y,\hv \cond \xv) \right|^2$ is upper bounded by a polynomial function
of $\|\hv\|$. Since the moments of $\|\hv\|$ are finite,
\eqref{eq:tmp429} is proved. With the same approach, \eqref{eq:tmp328}
can be proved. The continuity for both \eqref{eq:tmp329} and
\eqref{eq:tmp328} holds by applying 
some {standard arguments} to the expressions \eqref{eq:tmp200} and \eqref{eq:tmp201}.
Specifically, verifying the continuity of \eqref{eq:tmp201} is
equivalent to taking the limit over $\xv$ towards a specific point
$\xv_0$. Since the term inside the expectation is bounded by an
integrable function as shown above, the continuity of the expectation is
proved by moving the limit inside the expectation by the dominated
convergence theorem. Similarly, to check the continuity
\eqref{eq:tmp200}, it is enough to verify that
the supremum of the set of continuous functions is bounded by an
integrable function and is continuous. The boundedness by an integrable
function has already been shown with \eqref{eq:tmp429}. It remains to
show that $\sup_{\xv'\in\Ball(\xv,\delta)} \bigl| {\partial^2
\over \partial x_j' \partial x_k'} \ln p(y,\hv \cond \xv') \bigr|^2$ is
continuous as a function of $\xv$. 

To that end, we define $f(\xv') := \bigl| {\partial^2 \over \partial
x_j' \partial x_k'} \ln p(y,\hv \cond \xv') \bigr|^2$ which is
continuous in $\mathbb{R}^{\nt}$ and therefore uniformly continuous in
$(\sqrt{\snr}+\delta)\Ball_{\nt}$. Next, let us define $g(\zv) :=
\sup_{\xv'\in\Ball(\zv,\delta)} f(\xv')$. 
Note that $g(\zv) = f(\zv^*)$ for some
$\zv^* \in \Ball(\zv,\delta)$ since $\Ball(\zv,\delta)$ is compact and
$f$ is continuous. Now, fix $\eta>0$ and find $\epsilon>0$ such that for
any $\xv,\xv'$ with $\|\xv-\xv'\|\le\epsilon$, we have
$|f(\xv)-f(\xv')|<\eta$, which is guaranteed by the uniform continuity of
$f$. 
Then, for any $\zv'\in\Ball(\zv,\epsilon)$, 
\begin{IEEEeqnarray}{rCl}
g(\zv') &\ge& f(\zv'-(\zv-\zv^*)) \\
&\ge& f(\zv^*) - \eta \\
&=& g(\zv) - \eta,  
\end{IEEEeqnarray}%
where the first inequality is from the fact that $g(\zv') =
\max_{\zv''\in\Ball(\zv',\delta)} f(\zv'')$ and that
$\|\zv-\zv^*\|\le\delta$ by assumption; the second inequality is from
$\|\zv - \zv'\| \le \epsilon$ by assumption and by the uniform
continuity of $f$. By symmetry, we can use the same argument to prove
$g(\zv) \ge g(\zv')-\eta$, which implies that $|g(\zv) - g(\zv')| \le
\eta$ for any $\|\zv - \zv'\|\le \epsilon$. This proves the continuity of $g(\zv)$.

\subsubsection{Verification for the non-coherent case}\label{app:CB_cond_NC}
In this case, $\Kc = \ParaCorr_\gamma$ which is compact according to
Lemma~\ref{lemma:param_space}. 
Let us identify $\corrv$ with $\thetav$, and $p(\yv \cond
\corrv)$ with $f(\yv; \thetav)$.
Condition C2 is also checked in Lemma~\ref{lemma:param_space}.  

In the following, let us check Condition C3. 
First, rewrite
\begin{equation}
  p(\yv\cond \corrv) = {\int_{\Ac_{\yv}} e^{g_{\corrv}(\wv)} \d \wv
  \over \int_{\mathbb{R}^{T}} e^{g_{\corrv}(\wv)} \d \wv}, 
\end{equation}%
where $\Ac_{\yv} := \{\wv\in\mathbb{R}^T:\ \mathrm{sign}(\wv) = \yv\}$
is the orthant corresponding to $\yv$, and $g_{\corrv}(\wv) := 
-{1\over2} \wv^\T \Rrho(\corrv)^{-1} \wv$. We can check after
some algebra that, for any
$j,k\in\Ic$, 
\begin{IEEEeqnarray}{rCl}
{\partial \over \partial \corr_j} \ln p(\yv \cond \corrv) 
&=& \E \left({\partial\over\partial \corr_j}
g_{\corrv}(\wv) \Bigcond \yv\right)
- \E \left({\partial\over\partial \corr_j}
g_{\corrv}(\wv) \right),
\label{eq:tmp403.5}\\
{\partial^2 \over \partial \corr_j \partial \corr_k} \ln p(\yv
\Bigcond \corrv)
&=&  - \E \left({\partial\over\partial \corr_j}
g_{\corrv}(\wv) \Bigcond \yv \right)\E \left({\partial\over\partial \corr_k}
g_{\corrv}(\wv) \Bigcond \yv \right)  \nonumber \\
&&\> +  \E \left({\partial^2\over\partial \corr_j\corr_k}
g_{\corrv}(\wv) + \left({\partial\over\partial \corr_j}
g_{\corrv}(\wv)\right) 
\left({\partial\over\partial \corr_k}
g_{\corrv}(\wv)\right)
 \Bigcond \yv \right)  \nonumber \\
&&\>  
+ \E \left({\partial\over\partial \corr_j}
g_{\corrv}(\wv)  \right)\E \left({\partial\over\partial \corr_k}
g_{\corrv}(\wv)  \right) \nonumber  \\
&&\> - \E \left({\partial^2\over\partial \corr_j\corr_k}
g_{\corrv}(\wv) + \left({\partial\over\partial \corr_j}
g_{\corrv}(\wv)\right) 
\left({\partial\over\partial \corr_k}
g_{\corrv}(\wv)\right)
  \right), 
\label{eq:tmp405}
\end{IEEEeqnarray}%
where the expectation is with respect to $\wv$, i.e., $\E(\cdot):=
{\int_{\mathbb{R}^T} (\cdot)\,e^{g_{\corrv}(\wv)} \d \wv
  \over \int_{\mathbb{R}^{T}} e^{g_{\corrv}(\wv)} \d \wv}
$ and $\E(\cdot \cond \yv) :=
{\int_{\Ac_{\yv}} (\cdot)\,e^{g_{\corrv}(\wv)} \d \wv
\over \int_{\Ac_{\yv}} e^{g_{\corrv}(\wv)} \d \wv}$. To prove
\eqref{eq:tmp329} and \eqref{eq:tmp328}, it is enough to show that both
\eqref{eq:tmp403.5} and \eqref{eq:tmp405} are bounded for all $\yv$ and 
$\corrv$. To that end, we first derive  
\begin{IEEEeqnarray}{rCl}
  {\partial\over\partial \corr_j} g_{\corrv}(\wv) &=& {1\over2}\,
  \wv^\T \Rrho(\corrv)^{-1} \pmb{\Delta}_{j}
  \Rrho(\corrv)^{-1} \wv, \\
  {\partial^2\over\partial \corr_j\partial \corr_k} g_{\corrv}(\wv)
  &=& - \wv^\T \Rrho(\corrv)^{-1} \pmb{\Delta}_{j}
\Rrho(\corrv)^{-1} \pmb{\Delta}_{k}
  \Rrho(\corrv)^{-1} \wv, 
\end{IEEEeqnarray}%
where $\pmb{\Delta}_j\in\mathbb{R}^{T\times T}$, $j=\{j_1, j_2\} \in \Ic$, has $0$
everywhere except for the entries $(j_1,j_2)$ and $(j_2,j_1)$ that are
both $1$. 

Then, notice that for every $\corrv\in\ParaCorr_\gamma$,
$\Rrho(\corrv)\succeq (1-\gamma)\Id$. Hence, for every
$\corrv'\in\Ball(\corrv,\delta)$, the minimum eigenvalue 
\begin{IEEEeqnarray}{rCl}
\lambda_{\min}( \Rrho(\corrv')) &\ge&
\lambda_{\min}( \Rrho(\corrv)) - \|\tilde{\Rrho}(\corrv-\corrv')\|_{\mathrm{F}} \\
   &=& \lambda_{\min}( \Rrho(\corrv)) - \sqrt{2}\|\corrv-\corrv'\| \\
   &\ge& \lambda_{\min}( \Rrho(\corrv)) - \sqrt{2}\delta,
\end{IEEEeqnarray}%
where the first inequality is from
$\lambda_{\min}(\Am+\Bm)\le\lambda_{\min}(\Am)+\|\Bm\|_{\mathrm{F}}$.
Setting $\delta = {1\over2\sqrt{2}} (1-\gamma)$, we have
$\lambda_{\min}( \Rrho(\corrv'))\ge {1\over2}(1-\gamma)$, and therefore,
$\lambda_{\max}(\Rrho(\corrv')^{-1}) \le {2\over1-\gamma}$.
Next, applying $\uv^\T \Am \vv \le \sigma_{\max}(\Am)\|\uv\| \|\vv\|$
where $\sigma_{\max}$ is the largest singular value, we have  
\begin{IEEEeqnarray}{rCl}
  \left| {\partial\over\partial \corr'_j} g_{\corrv'}(\wv) \right| &\le& {\|\wv\|^2 \over2}\,
  \sigma_{\max}(\Rrho(\corrv')^{-1} \pmb{\Delta}_{j}
  \Rrho(\corrv')^{-1})  \\
  &\le& {2\|\wv\|^2\over(1-\gamma)^2}, \\
  \left| {\partial^2\over\partial \corr'_j\partial \corr'_k}
  g_{\corrv'}(\wv) \right| &\le& 
  \|\wv\|^2 \sigma_{\max}( \Rrho(\corrv')^{-1} \pmb{\Delta}_{j}
\Rrho(\corrv')^{-1} \pmb{\Delta}_{k}
  \Rrho(\corrv')^{-1})\\
  &\le& {8 \|\wv\|^2 \over (1-\gamma)^3}, 
\end{IEEEeqnarray}%
where we have used $\sigma_{\max}(\Am\Bm)\le
\sigma_{\max}(\Am)\sigma_{\max}(\Bm)$, $\sigma_{\max}(\pmb{\Delta}_j)
\le 1$, and $\sigma_{\max}(\Rrho(\corrv')^{-1}) = \lambda_{\max}(\Rrho(\corrv')^{-1}) \le
2(1-\gamma)^{-1}$. Finally, applying the above bounds, we can show that
both \eqref{eq:tmp403.5} and \eqref{eq:tmp405} are bounded by the first
and second moments of $\|\wv\|^2$, which are finite since $\|\wv\|^2$ is
Chi-squared distributed. The proof on the boundedness is complete. The
proof on the continuity follows the same ideas as in the coherent
case and is omitted here.

\subsection{Proof of Lemma~\ref{lemma:param_space}}
\label{app:param_space}

The one-to-one parameterization can be checked by contradiction. 
$p(\yv\cond \corrv)$ be the pmf of
$\mathrm{sign}(\Nc(0,\Rrho(\corrv)))$.  
Assume that there exist some $\corrv' \ne
\corrv \in \ParaCorr_1$, and more specifically with loss of generality, that
$\corr_{12}' \ne
\corr_{12}$, such that $p(\yv\cond \corrv')  = p(\yv\cond \corrv)$.
Marginalizing over $y_3,\ldots,y_T$, we have $p(y_1,y_2\cond
\corr_{12}') = p(y_1,y_2\cond \corr_{12})$, implying $\mathbb{P}(y_1\ne y_2
\cond \corr_{12}') = \mathbb{P}(y_1\ne y_2 \cond \corr_{12})$. Thus, from
Section~\ref{sec:T2}, 
  $\arccos(\corr_{12}') = \arccos(\corr_{12})$. 
This contradicts $\corr_{12}'\ne \corr_{12}$ since $\arccos(\cdot)$ is
invertible. Therefore, $p(\yv\cond \corrv)$ must be uniquely identified
by $\corrv$. 

To show that $\ParaCorr_{\gamma}$ is compact, it is enough to prove that
it is bounded and closed since it is a subset in the Euclidean space.
The boundedness is trivial since we must have $|q_i|\le1$,
$\forall\,i\in\Ic$, to guarantee that $\Rrho(\corrv)\succeq0$. Then, to
show that $\ParaCorr_{\gamma}$ is closed, it is equivalent to showing that
the complement of $\ParaCorr_{\gamma}$, ${\ParaCorr}^c_{\gamma}$, is
open. For each $\corrv\in{\ParaCorr}^c_{\gamma}$, we have
$\lambda_{\min}(\Rrho(\corrv)) \le 1-\gamma-\delta$ for some $\delta>0$.
It can be shown that for every $\corrv'$ such that $\|\corrv'-\corrv\|\le
\delta/(2\sqrt{2})$, $\lambda_{\min}(\Rrho(\corrv')) \le
\lambda_{\min}(\Rrho(\corrv)) +
\|\Rrho(\corrv')-\Rrho(\corrv)\|_{\mathrm{F}} = \lambda_{\min}(\Rrho(\corrv)) +
\sqrt{2}\|\corrv'-\corrv\|\le 1-\gamma-\delta/2$. We just showed that
around every point in ${\ParaCorr}^c_{\gamma}$, we can find a ball with
non-zero radius in ${\ParaCorr}^c_{\gamma}$, implying that
${\ParaCorr}^c_{\gamma}$ is open. 

To prove \eqref{eq:121}, it is enough to show 
\begin{equation}
  \rhov(\UT_{\snr}) \overset{\text{(a)}}{\subseteq} \rhov(\Xc_{\snr})
  \overset{\text{(b)}}{\subseteq} \ParaCorr_\gamma
  \overset{\text{(c)}}{\subseteq} \rhov(\UT_{\snr}). 
\end{equation}%
First, (a) is obvious since $\UT_{\snr}\subseteq \Xc_{\snr}$. To show
(b), let $\Xm\in\Xc_{\snr}$, by definition we have   
\begin{IEEEeqnarray}{rCl}
  \Rrho(\rhov(\Xm)) &=&  \Dm \Xm^\T\Xm\Dm - \diag (\Dm \Xm^\T\Xm\Dm) + \Id \\
  &=& \Dm \Xm^\T\Xm\Dm + \diag \left(1-
  { \|\xv_i\|^2 \over 1+ \|\xv_i\|^2}, i\in[T] \right) \\
  &\succeq& (1-\gamma) \Id, 
\end{IEEEeqnarray}%
where $\Dm := \diag\left( {1\over\sqrt{1+\|\xv_i\|^2}}, i\in[T]
\right)$; the last inequality holds since 
${\|\xv_i\|^2 \over 1+\|\xv_i\|^2} \le \gamma$. 
Therefore, $\rhov(\Xm)\in\ParaCorr_\gamma$ by the definition of
$\ParaCorr_{\gamma}$. To prove (c), fix $\corrv\in\ParaCorr_{\gamma}$, and
let
\begin{equation}
  \Xm = \sqrt{\snr} \Chol{\Rrho(\gamma^{-1}\corrv)}. \label{eq:x4q}
\end{equation}%
As the diagonal of ${\Rrho(\gamma^{-1}\corrv)}$ is
normalized, the columns of $\Xm$ has constant norm $\sqrt{\snr}$, which
implies that $\Xm\in\UT_{\snr}$. In
addition, we have for any $i = \{i_1,i_2\} \in \Ic$, 
$\rho_{i}(\Xm) =  {\pmb{x}_{i_1}^\T \over
\sqrt{1+\|\pmb{x}_{i_1}\|^2}} {\pmb{x}_{i_2} \over
\sqrt{1+ \|\pmb{x}_{i_2}\|^2}} = {\xv_{i_1}^\T \xv_{i_2} \over 1+\snr} =
\corr_{i}$ from \eqref{eq:x4q}.
Therefore, for any $\corrv\in\ParaCorr_\gamma$, we have a
$\Xm\in\UT_{\snr}$
such that $\corrv = \rhov(\Xm) \in \rhov(\UT_{\snr})$. This proves (c) and 
completes the proof of \eqref{eq:121}. 

The two equalities of \eqref{eq:122} are from
\eqref{eq:121} and the fact that $\rhov(\Xm)$ contains the off-diagonals
of ${\Rm_0(\Xm)}$. 
\qed

\subsection{Proof of Proposition~\ref{prop:capa0}}\label{app:capa0}

From the Markov chain $\Xm \leftrightarrow \rhov(\Xm)
\leftrightarrow \Ym$, we have 
\begin{IEEEeqnarray}{rCl}
  I(\pmb{X}; \pmb{Y}) &=& I(\pmb{X}, \rhov(\Xm); \pmb{Y}) \\
  &=& I(\pmb{\rho}(\pmb{X}); \pmb{Y}) \\
  &\le& \max_{\corrv \in \ParaCorr_\gamma} I(\corrv; \pmb{Y}), 
\end{IEEEeqnarray}%
where the last inequality holds since
$\pmb{\rho}(\pmb{X})\in\ParaCorr_\gamma$ and that
$\ParaCorr_\gamma$ is compact. 
Let $\corrv^*$ be the optimizer of the mutual information.
Then, we let $\pmb{X}^* = \sqrt{\snr}\Chol{\Rrho(\gamma^{-1}\corrv^*)}$, and verify
that $\pmb{X}^*\in\UT_{\snr}$ and $\rhov\left( \pmb{X}^* \right) = \corrv^*$ 
since the Cholesky decomposition is
injective, i.e., $\Chol{\Am} \ne \Chol{\Bm}$ if $\Am\ne\Bm$. Hence,
\begin{IEEEeqnarray}{rCl}
  I(\pmb{X}^*; \pmb{Y}) &=& I(\pmb{\rho}(\pmb{X}^*); \pmb{Y}) \\
  &=& I(\corrv^*; \pmb{Y}).
\end{IEEEeqnarray}%
This establishes the lower bound of the capacity, and completes the proof.  
\qed

\subsection{Proof of Lemma~\ref{lemma:regionTheta}}\label{app:regionTheta}

According to Lemma~\ref{lemma:param_space}, for any $\gamma\in[0,1)$, we have
$\ParaCorr_{\gamma} = \rhov(\UT_{\snr})$ with $\gamma =
{\snr\over1+\snr}$. Specifically, let $\Xm\in\UT_{\snr}$ and 
$\rv_j := {\snr}^{-{1\over2}}[X_{1j},\ldots,X_{j-1,j}]\in 
\Ball_{j-1}$ contains the normalized version of the $j-1$ first entries of the $j$-th column of $\Xm$ and 
$X_{jj} = \sqrt{\snr} \sqrt{1-\| \rv_{j} \|^2}$, $j=2,\ldots,T$. 
One can verify that for $i=\{i_1,i_2\}\in\Ic$
\begin{equation}
  q_{i} = {1\over 1+\snr} \sum_{l=1}^{\min\{i_1,i_2\}} X_{l,i_1}
  X_{l,i_2}, 
\end{equation}%
and after some algebra we have
\begin{equation}
  \d \corrv = \gamma^{\binom{T}{2}} 
  \prod_{j=2}^{T} \left({1-\| \rv_{j} \|^2}\right)^{T-j\over2}
  \d {\rv_j}. \label{eq:cov}
\end{equation}%
Therefore, for $\gamma\in[0,1)$, 
\begin{IEEEeqnarray}{rCl}
  \int_{\ParaCorr_{\gamma}} \d \corrv 
  &=& \gamma^{\binom{T}{2}} \prod_{j=2}^{T} \int_{\mathcal{B}_{j-1}} \left({1-\|
  \rv_{j} \|^2}\right)^{T-j\over2} \d {\rv_j} \\
  &{=}& \gamma^{\binom{T}{2}} \prod_{j=2}^{T} \Vol(\mathcal{B}_{j-1}) \EE \left({1-\|
  \rv_{j} \|^2}\right)^{T-j\over2}  \label{eq:Eunif}\\
  &=& \gamma^{\binom{T}{2}} \prod_{j=2}^{T} \Vol(\mathcal{B}_{j-1}) \int_{0}^1 (j-1)
  r^{j-2} ({1-r^2})^{T-j\over2} \d r  \label{eq:dist_r}\\
  &=& \gamma^{\binom{T}{2}} \prod_{j=2}^{T} (j-1) \Vol(\mathcal{B}_{j-1})
  \int_{0}^{\pi/2}  \sin^{j-2}(t) \cos^{T-j+1}(t) \d t  \\
  &=& \gamma^{\binom{T}{2}} \prod_{j=2}^{T} {j-1\over2} \Vol(\mathcal{B}_{j-1})
  \,B\left({j-1\over2}, {T-j+2\over2}\right) \\
  &=& \gamma^{\binom{T}{2}} {{\pi}^{\binom{T}{2}+1 \over 2} \Gamma(T)
  \prod_{j=2}^{T-1}\Gamma({j\over2}) \over 2^{T-1}
  \left(\Gamma({T+1\over2})\right)^T}, \label{eq:tmp322}
\end{IEEEeqnarray}%
where in \eqref{eq:Eunif} the expectation is over a uniform
distribution of $\rv_j$ in $\mathcal{B}_{j-1}$; in
\eqref{eq:dist_r} we use the pdf of the magnitude of a point
inside the ball; the subsequent equalities are from standard
elementary operations where $B(a,b)$ is the Beta
function~\cite{gradshteyn2014table}. Applying the monotone convergence
theorem, we
can show that $\Vol(\ParaCorr_1) =
\lim_{\gamma\to1}\Vol(\ParaCorr_\gamma)$, which proves \eqref{eq:vol}
and \eqref{eq:vol_gamma}. 

When $T$ is large, one can apply Stirling's approximation
\begin{equation}
  \Gamma(x+1) \sim \sqrt{2\pi x} \left( x\over e \right)^x. 
\end{equation}%
Furthermore, note that 
\begin{equation}
  \prod_{j=2}^{T-1} \Gamma(j/2) = {G({T+1\over2}) G({T\over2})
  \over G(1)G({3\over2})}, \label{eq:BG1}
\end{equation}%
where $G(x)$ is the Barnes-G function with $G(x+1) = G(x)
\Gamma(x)$ and $G(1) = 1$ and $G(3/2) =
e^{1\over8}A^{3\over2}2^{1\over24}\pi^{1\over4}$ with $A\approx1.282$
the Glaisher-Kinkelin constant. Using the
equivalent
$G(x+1) \sim A^{-1} {\left(2\pi\right)^\frac{x}{2}
x^{\frac{x^2}{2}-\frac{1}{12}}
e^{-\frac{3x^2}{4}+\frac{1}{12}}}$~\cite{adamchik2014contributions}, and plugging everything back
to \eqref{eq:tmp322}, we obtain
\begin{equation}
  \log\Vol(\ParaCorr_1^{(T)}) =
  {T-1\over4}\log{2\pi\sqrt{e}\over T}  - {\log e
  \over 8} T - {1\over24} \log T  - {7\over2}\log A
  + {1\over3} +{1\over4} \log \pi + {1\over48}\log e + o(1),
\end{equation}%
and less precisely, 
\begin{equation}
  {1\over T}\log\Vol(\ParaCorr_1^{(T)}) =
  {T-1\over4}\log{2\pi\sqrt{e}\over T}  - {\log e
  \over 8} + o(1).
\end{equation}%
\qed

\subsection{Proof of Lemma
\ref{lemma:fisher_coh}}\label{app:fisher_coh}
From $p(y\cond \xv, \hv) = Q(-y \hv^\T \xv)$, we have
\begin{IEEEeqnarray}{rCl}
  \nabla_{\xv}(\ln p(y \cond \xv,\hv))
  &=& {y \phi(\hv^\T \xv) \over  p(y \cond \xv,\hv)}  \hv, 
\end{IEEEeqnarray}%
and
\begin{IEEEeqnarray}{rCl}
  \EE_{y,\hv|\xv} \left[ \nabla_{\xv}(\ln
  p(y \cond \xv,\hv)) \nabla^\T_{\xv}(\ln p(y \cond \xv,\hv))
  \right]
  &=& \EE_{\hv}\left[\sum_{y\in\{1,-1\}} {\phi^2(\hv^\T \xv) \over  p(y \cond
  \xv,\hv)}  \hv \hv^\T \right] \\
  &=& \EE_{\hv} \left[ {\phi^2(\hv^\T \xv) \over Q(\hv^\T
  \xv)(1-Q(\hv^\T \xv))}
  \hv \hv^\T \right]. 
\end{IEEEeqnarray}%
Let $\vv := \xv/\|\xv\|$ and 
$\bar{\Vm}$ be such that the matrix
$\Vm := [\vv\ \bar{\Vm}]$ is
unitary. Then, $\gv := \Vm^\T \hv \sim
\mathcal{N}(0,\Id_{\nt})$ has the same distribution as $\hv$ since
any rotation independent of the white Gaussian vector does not change
the distribution. Then, letting $g_1$ be the first component of
the vector $\gv$, the Fisher information matrix becomes  
\begin{IEEEeqnarray}{rCl}
  \IEEEeqnarraymulticol{3}{l}{\Vm\, \EE_{\gv} \left[ {\phi^2(\|\xv\|g_1) \over
  Q(\|\xv\| g_1)(1-Q(\|\xv\| g_1))}
  \gv \gv^\T \right] \Vm^\T}  \nonumber \\
  &=& \Vm\, \begin{bmatrix} \EE_{\gv} \left[ g_1^2 \xi(\|\xv\| g_1)
    \right] &  \\ & \EE_{\gv} \left[ \xi(\|\xv\| g_1)  \right]
    \Id_{\nt-1} \end{bmatrix} \Vm^\T, \label{eq:tmp101-0} 
\end{IEEEeqnarray}%
where we used the fact that $\gv$ has independent component. Since
$\Vm$ is unitary, we obtain the determinant of the Fisher
information matrix~\eqref{eq:det}. 
\qed

\subsection{Proof of Lemma~\ref{lemma:sumGamma}}
\label{app:sumGamma}

As in \eqref{eq:BG1}, we write 
\begin{equation}
  \prod_{j=2}^{T} \Gamma(j/2) = {G({T+2\over2}) G({T+1\over2})
  \over G(1)G({3\over2})}. \label{eq:BG2}
\end{equation}%
Following almost the same subsequent steps, we can verify
\eqref{eq:BG3}. 

\qed

\subsection{Proof of Lemma~\ref{lemma:zeta}} \label{app:zeta}

When $r \to
0$, applying Taylor's approximation
$\xi(rS) = \xi(0) + O(r^2 S^2)$, and the dominated convergence
theorem~(which allows swapping Taylor's approximation and expectation),
we can show \eqref{eq:zeta0_r0} and \eqref{eq:zeta2_r0} with 
$\xi(0) = 4\phi^2(0) = {2\over\pi}$.

Next, we consider the case $r\to\infty$. 
By a change of variable $u = r s$:
\begin{IEEEeqnarray}{rCl}
  \zeta_0(r) &=&  \int_{-\infty}^{+\infty} {\phi^2(r s) \over
  Q(rs)(1-Q(rs))} \phi(s) \d s \\ 
  &=&  {1 \over r}
  \int_{-\infty}^{+\infty} {\phi^2(u) \over Q(u)(1-Q(u))}
  \phi(u/r) \d u,  		
\end{IEEEeqnarray}%
and, similarly,
\begin{IEEEeqnarray}{rCl}
  \zeta_2(r) &=&  \int_{-\infty}^{+\infty} {\phi^2(r
  s) \over Q(rs)(1-Q(rs))} s^2 \phi(s) \d s \\
  &=&  {1 \over
  r^3} \int_{-\infty}^{+\infty} {\phi^2(u) \over
  Q(u)(1-Q(u))} u^2 \phi(u/r) \d u. 
\end{IEEEeqnarray}%
Therefore, applying $1-{u^2\over2r^2} \le {\phi(u/r)\over\phi(0)} \le
1$, which implies ${\phi(u/r)\over\phi(0)} = 1+O(1/r^2)$, we can show
that $\zeta_0(r) = {A_0 \over r}(1 +
O(1/r^2))$ and $\zeta_2(r) = {A_2 \over r^3}(1 +
O(1/r^2))$.

\qed

\subsection{Proof of Proposition~\ref{prop:coh_snr}}
\label{app:coh_snr}

Let us first consider the regime of $\snr \to 0$. Applying
\eqref{eq:zeta0_r0} and \eqref{eq:zeta2_r0}, we have 
\begin{IEEEeqnarray}{rCl}
  \IEEEeqnarraymulticol{3}{l}{\int_{0}^{\sqrt{\snr}} \zeta_0(r)^{\nt-1
  \over 2} \zeta_2(r)^{1\over2} \nt
  r^{\nt-1} \d r}\\
  &=& \left( {2\over\pi} \right)^{\nt \over 2} \left( \int_{0}^{\sqrt{\snr}} \nt
  r^{\nt-1} \d r \right) (1+O(\sqrt{\snr}^2))^{\nt\over2}
  \\ 
  &=& \left( {2\over\pi} \right)^{\nt \over 2} \snr^{\nt \over 2}
  (1+O(\snr))^{\nt\over2}. 
\end{IEEEeqnarray}
We obtain \eqref{eq:alpha_coh_lowsnr} by noticing that
$(1+O(\snr))^{\nt\over2} \to 0$ when $\nt\snr \to 0$. 

Let us now consider the regime of $\snr \to \infty$. 
From \eqref{eq:zeta0_r00}, as long as ${r^2\over\nt} \to \infty$, we
have  
\begin{equation}
  \zeta_0(r)^{\nt-1\over2} \sim {\left(A_0 \over
  r\right)}^{\nt-1\over2}. \label{eq:tmp445} 
\end{equation}
This allows to compute an equivalent for the
integral\footnote{If $f(t)\sim g(t)$ when $t\to\infty$ and $\int_{a}^x g(t) \d
t$ exists and goes to $\infty$ when $x\to\infty$, then  $\int_{a}^x f(t) \d t \sim \int_{a}^x g(t) \d
t$. }, when $\snr \to \infty$. Note that from \eqref{eq:tmp445} and
\eqref{eq:zeta2_r00}, the integrand of interest is equivalent to
$A_0^{\nt-1\over 2} A_2^{1 \over 2}   \nt r^{{\nt \over 2} - 2}$ when
$r\to\infty$. When $\nt\ge3$, the equivalent is integrable in
$[0,\sqrt{\snr}]$, we have  
\begin{IEEEeqnarray}{rCl}
  \IEEEeqnarraymulticol{3}{l}{\int_{0}^{\sqrt{\snr}} \zeta_0(r)^{\nt-1
  \over 2} \zeta_2(r)^{1\over2} \nt r^{\nt-1} \d r} \\*
  \qquad
  &\sim& A_0^{\nt-1\over 2} A_2^{1 \over 2} \int_{0}^{\sqrt{\snr}}  \nt
  r^{{\nt \over 2} - 2} \d r \\
  &=& 
  {2 \nt \over \nt-2}A_0^{\nt-1\over 2} A_2^{1 \over 2} \snr^{{\nt-2 \over 4}}. 
\end{IEEEeqnarray}%
When $\nt=2$, since the equivalent is not integrable near $0$,
we split the integral into two parts $\int_0^1$ and
$\int_1^{\snr}$ where the first part is bounded and the second
part is equivalent to  
\begin{equation}
  \int_{1}^{\sqrt{\snr}}  \nt r^{{\nt \over 2} - 2} \d r
  \sim  \ln \snr. 
\end{equation}
When $\nt = 1$, the integral converges to a constant
\begin{IEEEeqnarray}{rCl}
  \int_{0}^{\sqrt{\snr}} \zeta_2(r)^{1\over2}  \d r &\sim& 
  \int_{0}^{\infty} \zeta_2(r)^{1\over2}  \d r, 
\end{IEEEeqnarray}%
since we have the integrand decays as ${r^{-{3\over2}}}$
when $r$ is large according to \eqref{eq:zeta2_r00}. 
This completes the proof.
\qed

\subsection{Proof of Proposition~\ref{prop:coh_nt}}
\label{app:coh_nt}
The cases with vanishing and increasing $\snr$ are essentially the same as in
Proposition~\ref{prop:coh_snr}, by letting $\nt\to\infty$. We focus on
fixed $\snr$ in the following. First, rewrite the integral in
\eqref{eq:alpha_coh} as $ \nt \int_{0}^{\sqrt{\snr}} g(r)
e^{(\nt-1)f(r)} \d r$ with $f(r):=\ln\left(r
\sqrt{\zeta_0(r)} \right)$. Since $f(r)$ is strictly
increasing\footnote{It is a simple exercise to check that $f'(r)>0$.} for $r>0$, when
$\nt\to\infty$, applying Laplace's method~\cite{de1981asymptotic}, we have 
\begin{equation}
  \int_{0}^{\sqrt{\snr}} g(r) e^{(\nt-1)f(r)} \d r
  \sim { g(\sqrt{\snr}) \over(\nt-1) f'(\sqrt{\snr})} e^{(\nt-1)
  f(\sqrt{\snr})},    
\end{equation}%
which proves the result in \eqref{eq:tmp488}, after some elementary
computations. 

\qed

\subsection{Proof of Corollary~\ref{coro:T=3}} \label{app:T=3}
First, one can upper bound the integral by the
integration over the whole simplex $\mathcal{P}$. Namely,  
\begin{IEEEeqnarray}{rCl}
  \int_{\Mc} \prod_{i=0}^{3} \mu_i^{-{1\over2}} d\pmb{\mu}
  &\le&  
  \int_{\mathcal{P}} \prod_{i=0}^{3} \mu_i^{-{1\over2}}
  d\pmb{\mu} \\
  &=& {\Gamma({1\over2})^4\over\Gamma(4\times {1\over2})}\\
  &=& \pi^2,
\end{IEEEeqnarray}%
which is known as the normalization factor of the Dirichlet
distribution~\cite{gradshteyn2014table}. Then, to find a lower bound, note that from GM-AM
inequality $\prod_{i=0}^{3}
\mu_i^{-{1\over2}}\ge ({1\over4}\sum_{i=0}^3
\mu_i)^{-2} = 16$. Thus, the
integral is lower bounded by $16\Vol(\Mc)$. We have
\begin{IEEEeqnarray}{rCl}
  \Vol(\Mc) &=& |\det(\pmb{G})| \Vol(\eta(\ParaCorr_\gamma)) \\
  &=& {1\over 2} \int_{\eta(\ParaCorr_\gamma)} \d \eta(\corrv) \\
  &=& {1\over 2} \int_{\ParaCorr_\gamma} \prod_{i} |\eta'(\corr_i)| \d \corrv \\
  &\ge& {1\over2} \left({1\over\pi}\right)^3 \Vol(\ParaCorr_\gamma) \\
  &=& {1\over2} \left({\gamma\over\pi}\right)^3 \Vol(\ParaCorr_1), 
\end{IEEEeqnarray}%
since $\det(\Gm) = {1\over2}$ and $|\eta'(\corr)| =
{1\over\pi} {1\over\sqrt{1-\corr^2}} \ge
{1\over\pi}$. Plugging in $\Vol(\ParaCorr_1^{(3)}) =
{\pi^2\over2}$ from Lemma~\ref{lemma:regionTheta}, we obtain the lower bound. 
\qed

\subsection{Proof of Lemma~\ref{lemma:change_of_variable}} \label{app:change_of_variable}
Note that the probability measure is  
\begin{IEEEeqnarray}{rCl}
  P(\d \corrv) &=& P\left(\prod_{i=2}^T \d \rv_i \right) \\
  &\propto&  \prod_{i=2}^T (1-\|\rv_i\|^2)^{T-i\over2} \d \rv_i \\
  &=& \d \corrv,
\end{IEEEeqnarray}%
where the last equality is from the change of
variable~\eqref{eq:cov}. We see that the probability density
is constant, i.e., $\corrv$ follows a uniform distribution. 
\qed

\subsection{Proof of Lemma~\ref{lemma:fisher_coh_norm}}
\label{app:fisher_coh_norm}
Since $\|\xv\|^2$ is constant, we have  
\begin{equation}
  \d \xv = \underbrace{\begin{bmatrix} \Id_{\nt-1} \\ -{1\over x_{\nt}}
    \tilde{\xv}\T \end{bmatrix}}_{\Fm} \d \tilde{\xv} =
    \bar{\Vm} \d \tilde{\xv}. \label{eq:tmp333} 
\end{equation}%
Therefore, we have 
\begin{IEEEeqnarray}{rCl}
  \Jm_{y,\hv|{\xv}}(\tilde{\xv}) 
  &=& \Fm^\T \Jm_{y,\hv|{\xv}}({\xv}) \Fm \label{eq:tmp100}\\
  &=& \Fm^\T \Vm\, \begin{bmatrix} \EE_{\gv} \left[ g_1^2
    \xi(\sqrt{\snr} g_1) \right] &  \\ & \EE_{\gv} \left[ \xi(\sqrt{\snr}
    g_1)  \right] \Id_{\nt-1} \end{bmatrix} \Vm^\T \Fm
    \label{eq:tmp101} \\
    &=& \left[0\ \Fm^\T \bar{\Vm} \right]\, \begin{bmatrix} \EE_{\gv} \left[
      g_1^2 \xi(\sqrt{\snr} g_1) \right] &  \\ & \EE_{\gv} \left[
      \xi(\sqrt{\snr} g_1)  \right] \Id_{\nt-1} \end{bmatrix} \left[0\
      \Fm^\T \bar{\Vm} \right]^\T \label{eq:tmp102} \\
      &=& \EE_{\gv} \left[ \xi(\sqrt{\snr} g_1) \right] \Fm^\T
      \bar{\Vm} \bar{\Vm}^\T \Fm \\
      &=& \EE \left[ \xi(\sqrt{\snr} S) \right] \Fm^\T \Fm
      \label{eq:tmp103}\\
      &=& \EE \left[ \xi(\sqrt{\snr} S) \right] \left( \Id +
      {1\over x_{\nt}^2} \tilde{\xv} \tilde{\xv}^\T \right),
\end{IEEEeqnarray}%
where \eqref{eq:tmp100} is from \eqref{eq:tmp333} and the
definition of Fisher information matrix; \eqref{eq:tmp101} is from
\eqref{eq:tmp101-0} with $\Vm = \left[{\xv\over\|\xv\|}\
\bar{\Vm}\right]$;
\eqref{eq:tmp102} holds since $\Fm^\T \xv = 0$; \eqref{eq:tmp103}
is from the fact $\Fm^\T\Fm = \Fm^\T \Vm \Vm^T \Fm = \Fm^\T
(\xv\xv^\T/\|\xv\|^2) \Fm + \Fm^\T \bar{\Vm} \bar{\Vm}^\T \Fm =
\Fm^\T \bar{\Vm} \bar{\Vm}^\T \Fm$. 
Therefore, $\det(\Jm_{y,\hv|{\xv}}(\tilde{\xv})) = \EE \left[
\xi(\sqrt{\snr} S) \right]^{\nt-1} \det \left( \Id +
{1\over x_{\nt}^2} \tilde{\xv} \tilde{\xv}^\T \right) = \EE \left[
\xi(\sqrt{\snr} S) \right]^{\nt-1} (1 +
{\|\tilde{\xv}\|^2 \over x_{\nt}^2}) = {1 \over
1-\snr^{-1}\|\tilde{\xv}\|^2}\EE \left[
\xi(\sqrt{\snr} S) \right]^{\nt-1}$. Taking the integral over the
space $\sqrt{\snr}\mathcal{B}_{\nt-1}$, we have 
\begin{IEEEeqnarray}{rCl}
  \int_{\sqrt{\snr} \mathcal{B}_{\nt-1}} {1\over\sqrt{1-
  \snr^{-1}\| \tilde{\xv} \|^2}}
  \d\tilde{\xv}  &=& \snr^{\nt-1\over2}\Vol(\mathcal{B}_{\nt-1}) \EE_{\mathcal{B}_{\nt-1}}\left[ {1\over\sqrt{1-r^2}} \right] \\
  &=& \snr^{\nt-1\over2} \Vol(\mathcal{B}_{\nt-1}) \int_{0}^1 
  {(\nt-1)r^{\nt-2}\over\sqrt{1-r^2}} \d r \\
  &=& \snr^{\nt-1\over2} {\Gamma({\nt+1\over2})\over\Gamma({\nt\over2})} \sqrt{\pi} \Vol(\mathcal{B}_{\nt-1}) \\
  &=& \snr^{\nt-1\over2} {\pi^{\nt\over2}\over\Gamma({\nt\over2})} \\
  &=& {\snr^{\nt-1\over2} \over2} \Vol(\mathcal{S}_{\nt-1}),
\end{IEEEeqnarray}%
where we applied results from Lemma~\ref{lemma:vol_ball}. 
\qed

\bibliographystyle{IEEEtran}
\bibliography{IEEEabrv,./biblio}

\end{document}